\def\input@path{{./Figures/}}
\tikzstyle{normalvertex}=[circle,fill=white,draw=black]
\tikzstyle{emptyvertex}=[draw,circle,minimum size=7pt,inner sep=0pt]
\tikzstyle{tinyvertex}=[draw,circle,minimum size=3pt,inner sep=0pt]
\tikzstyle{thickedge}=[draw,gray!60,line width=1.6pt,-]
\DeclareMathOperator{\id}{id}
\DeclareMathOperator{\Iso}{Iso}
\DeclareMathOperator{\Aut}{Aut}
\DeclareMathOperator{\Sym}{Sym}
\DeclareMathOperator{\Alt}{Alt}
\DeclareMathOperator{\Aff}{Aff}
\DeclareMathOperator{\dist}{dist}
\DeclareMathOperator{\mgamma}{\widehat{\Gamma}}
\DeclareMathOperator{\Br}{Br}
\DeclareMathOperator{\Unf}{Unf}
\DeclareMathOperator{\cl}{cl}
\newcommand{\CO}{\mathcal O}
\newcommand{\CE}{\mathcal E}
\newcommand{\CH}{\mathcal H}
\newcommand{\FA}{\mathfrak A}
\newcommand{\FB}{\mathfrak B}
\newcommand{\FP}{\mathfrak P}
\newcommand{\Fx}{\mathfrak x}
\newcommand{\FX}{\mathfrak X}
\newcommand{\Fy}{\mathfrak y}
\newcommand{\FY}{\mathfrak Y}
\newcommand{\Fz}{\mathfrak z}
\newcommand{\NN}{\mathbb N}
\DeclareMathOperator{\width}{wd}
\DeclareMathOperator{\tw}{tw}
\newcommand{\bigmid}{\mathrel{\big|}}
\newcommand{\Bigmid}{\mathrel{\Big|}}
\newcommand{\Biglmulti}{{\Big\{\hspace{-5pt}\Big\{}}
\newcommand{\Bigrmulti}{{\Big\}\hspace{-5pt}\Big\}}}
\newcommand{\polylog}[1]{\operatorname{polylog}(#1)}
\newcommand{\ColRef}[1]{\chi_{(\infty)}(#1)} 
\newcommand{\ColRefIt}[2]{\chi_{(#1)}(#2)} 
\newcommand{\ColRefItVe}[3]{\chi_{(#1)}(#2)(#3)} 
\newcommand{\WL}[2]{\chi^{#1}_{(\infty)}(#2)}
\newcommand{\WLVe}[3]{\chi^{#1}_{(\infty)}(#2)(#3)}
\newcommand{\WLIt}[3]{\chi^{#1}_{(#2)}(#3)}
\newcommand{\WLItVe}[4]{\chi^{#1}_{(#2)}(#3)(#4)}
\newcommand{\tColRef}[2]{\chi_{#1\text{-}{\sf CR}}(#2)} 
\newcommand{\atp}{\operatorname{atp}}
\newcommand{\CC}{{\mathcal C}}
\newcommand{\LC}{\textsf{\upshape C}}
\begin{document}
\makebcctitle

\begin{abstract}
 We give an overview of recent advances on the graph isomorphism problem.
 Our main focus will be on Babai's quasi-polynomial time
 isomorphism test and subsequent developments that led to the design
 of isomorphism algorithms with a quasi-polynomial parameterized running
 time of the form $n^{\polylog k}$, where $k$ is a graph parameter such as the maximum degree.
 A second focus will be the combinatorial Weisfeiler-Leman algorithm. 
\end{abstract}

\section{Introduction}

Determining the computational complexity of the Graph Isomorphism
Problem (GI) is regarded a major open problem in computer
science. Already stated as an open problem in Karp's seminal paper
on the NP-completeness of combinatorial problems in 1972~\cite{kar72},
GI remains one of the few natural problems in NP that is neither known to be
NP-complete nor known to be polynomial-time solvable. 
In a breakthrough five years ago, Babai~\cite{Babai16} proved
that GI is solvable in \defword{quasi-polynomial time}
$n^{\polylog n} \coloneqq n^{(\log n)^{\CO(1)}}$, where $n$ denotes the number of vertices of the
input graphs. This was the first improvement on the worst-case running time of a
general graph isomorphism algorithm over the $2^{\CO(\sqrt{n\log n})}$ bound
established by Luks in 1983 (see \cite{BabaiKL83}).

Research on GI as a computational problem can be traced back to work
on chemical information systems in the
1950s~(e.g.~\cite{raykir57}). The first papers in the computer science
literature, proposing various heuristic approaches, appeared in the
1960s (e.g.~\cite{ung64}). An early breakthrough was Hopcroft and
Tarjan's $\CO(n\log n)$ isomorphism test for planar graphs
\cite{hoptar72}. Isomorphism algorithms for specific graph classes have
been an active research strand ever since
(e.g.~\cite{filmay80,gromar15,lin92,Luks82,mil80,pon88}).

Significant progress on GI was made in the early 1980s when Babai and Luks
introduced non-trivial group-theoretic methods to the field
\cite{bab79,BabaiL83,Luks82}. In particular, Luks \cite{Luks82} 
introduced a general divide-and-conquer framework based on the
structure of the permutation groups involved. It is the basis for
most subsequent developments. Luks used it to design an isomorphism test
running in time $n^{\CO(d)}$, where $d$ is the maximum degree of the
input graphs. The divide-and-conquer framework is also the foundation of Babai's
quasi-polynomial time algorithm. Primitive permutation groups
constitute the difficult bottleneck of Luks's strategy. Even though we
know a lot about the structure of primitive permutation groups (going
back to \cite{Cameron81} and the classification of finite simple
groups), for a long a time it was not clear how we can exploit this
structure algorithmically. It required several groundbreaking ideas by
Babai, both on the combinatorial and the group-theoretic side, to
obtain a quasi-polynomial time algorithm.  Starting from Luks's
divide-and-conquer framework, we will describe the group-theoretic
approach and (some of) Babai's new techniques in Section~\ref{sec:group}.

In subsequent work, jointly with our close collaborators Pascal
Schweitzer and Daniel Wiebking, we have designed a number of
\defword{quasi-polynomial parameterized algorithms} where we confine the
poly-logarithmic factor in the exponent to some parameter that can be
much smaller than the number $n$ of vertices of the input graph. The
first result \cite{GroheNS18} in this direction was an isomorphism
test running in time $n^{\polylog d}$ which improves both Luks's
and Babai's results (recall that $d$ denotes the maximum degree of the
input graph). After similar quasi-polynomial algorithms parameterized
by tree width \cite{Wiebking20} and genus \cite{Neuen20}, the work
culminated in an isomorphism test running in time $n^{\polylog k}$ for
graphs excluding some $k$-vertex graph as a minor \cite{GroheNW20}. In
Section~\ref{sec:parameterized-isomorphism-tests}, we will present
these results in a unified framework that is based on a new
combinatorial closure operator and the notion of \defword{$t$-CR-bounded}
graphs, introduced by the second author in \cite{Neuen20} (some ideas can be
traced back as far as \cite{Miller83b,Ponomarenko86}).

The combinatorial algorithm underlying this approach is a very simple
partition refinement algorithm known as the Color Refinement
algorithm (first proposed in \cite{mor65}), which is also an important
subroutine of all practical graph isomorphism algorithms.
The Color Refinement algorithm can be viewed as the
1-dimensional version of a powerful combinatorial algorithm known as
the Weisfeiler-Leman algorithm (going back to \cite{weilem68}). The
strength of the Weisfeiler-Leman algorithm and its multiple, sometimes
surprising connections with other areas, have been a recently very
active research direction that we survey in Section~\ref{sec:wl}.

It is not the purpose of this paper to cover all aspects of GI.
In fact, there are many research strands that we
completely ignore, most notably the work on practical isomorphism
algorithms. Instead we want to give the reader at least an impression
of the recent technical developments that were initiated by Babai's
breakthrough. For a broader and much less technical survey, we
refer the reader to \cite{groschwe20}.

\section{Preliminaries}\label{sec:prel}
Let us review some basic notation. By $\NN$, we denote the natural
numbers (including $0$), and for every positive $k\in\NN$ we let
$[k] \coloneqq \{1,\ldots,k\}$. Graphs are always finite and, unless explicitly
stated otherwise, undirected. We denote the vertex set and edge set of
a graph $G$ by $V(G)$ and $E(G)$, respectively. For (undirected) edges, we write
$vw$ instead of $\{v,w\}$.
The \defword{order} of a graph $G$ is $|G|\coloneqq|V(G)|$.

A \defword{subgraph} of a graph $G$ is a graph $H$ with
$V(H)\subseteq V(G)$ and $E(H)\subseteq E(G)$. For a set
$W\subseteq V(G)$, the \defword{induced subgraph} of $G$ with vertex set
$W$ is the graph $G[W]$ with $V(G[W])=W$ and
$E(G[W])\coloneqq\{vw\in E(G)\mid v,w\in W\}$. Moreover,
$G - W\coloneqq G[V(G)\setminus W]$. The set of \defword{neighbors} of a vertex
$v$ is $N_G(v)\coloneqq\{w\mid vw\in E(G)\}$, and the \defword{degree} of $v$ is
$\deg_G(v)\coloneqq|N_G(v)|$. If $G$ is clear from the context, we omit the
subscript ${}_G$. The \defword{maximum degree} of $G$ is
$\Delta(G)\coloneqq\max\{\deg(v)\mid v\in V(G)\}$.

A \defword{homomorphism}
from a graph $G$ to a graph $H$ is a mapping $h:V(G)\to V(H)$ that
preserves adjacencies, that is, if $vw\in V(G)$ then $h(v)h(w)\in
V(H)$. An isomorphism from $G$ to $H$ is a bijective mapping $\varphi:V(G)\to V(H)$ that
preserves adjacencies and non-adjacencies, that is, $vw\in E(G)$ if and only if $\varphi(v)\varphi(w)\in E(H)$.
We write $\varphi\colon G \cong H$ to indicate that $\varphi$ is an isomorphism from $G$ to $H$.

Sometimes, it will be convenient to work with \defword{vertex- and
  arc-colored graphs}, or just \defword{colored graphs},
$G=(V(G),E(G),\chi^G_V,\chi^G_E)$, where $(V(G),E(G))$ is a graph,
$\chi_V^G\colon V(G) \rightarrow C_V$ is a vertex coloring, and
$\chi_E^G\colon \{(v,w) \mid vw \in E(G)\} \rightarrow C_E$ an arc
coloring. We speak of an \defword{arc coloring} rather than edge
coloring, because we color ordered pairs $(v,w)$ and not undirected
edges $vw$. Homomorphisms and isomorphisms of colored graphs must
respect colors.

\section{The Weisfeiler-Leman Algorithm}\label{sec:wl}

\subsection{The Color Refinement Algorithm}

One of the simplest combinatorial procedures to tackle GI is the \defword{Color Refinement algorithm}. 
The basic idea is to label vertices of the graph with their iterated degree sequence.
More precisely, an initially uniform coloring is repeatedly refined by
counting, for each color, the number of neighbors of that color.

Let $\chi_1,\chi_2 \colon V(G) \rightarrow C$ be colorings of vertices
of a graph $G$, where $C$ is some finite set of colors.  The coloring
$\chi_1$ \defword{refines} $\chi_2$, denoted $\chi_1 \preceq \chi_2$, if
$\chi_1(v) = \chi_1(w)$ implies $\chi_2(v) = \chi_2(w)$ for all
$v,w \in V(G)$.  The colorings $\chi_1$ and $\chi_2$ are
\defword{equivalent}, denoted $\chi_1 \equiv \chi_2$, if
$\chi_1 \preceq \chi_2$ and $\chi_2 \preceq \chi_1$.

The Color Refinement algorithm takes as input a vertex- and
arc-colored graph $(G,\chi_V,\chi_E)$ and outputs an
isomorphism-invariant coloring of the vertices.  Initially, the algorithm
sets $\ColRefIt0G\coloneqq \chi_V$.  Then the coloring is
iteratively refined as follows.  For $i > 0$ we let
$\ColRefItVe iGv \coloneqq \big(\ColRefItVe{i-1}Gv,\mathcal{M}_i(v)\big)$,
where
\[
  \mathcal{M}_i(v) \coloneqq\Biglmulti
  \big(\ColRefItVe{i-1}Gw,\chi_E(v,w),\chi_E(w,v)\big)
  \Bigmid w \in N_G(v)\Bigrmulti
\]
is the multiset of colors for the neighbors computed in the previous
round.

In each round, the algorithm computes a coloring that is finer than
the one computed in the previous round, i.e.,
$\ColRefIt iG \preceq \ColRefIt{i-1}G$.  At some point this
procedure stabilizes, meaning the coloring does not become strictly
finer anymore.  In other words, there is an $i_\infty < n$
such that $\ColRefIt{i_\infty}G\equiv\ColRefIt{i_\infty+1}G$.  We call
$\ColRefIt{i_\infty}G$ the \defword{stable coloring} and denote it by
$\ColRef G$.

\begin{example}
 The sequence of colorings for a path of length $6$ is shown in Figure \ref{fig:cr-path}.
\end{example}

\begin{figure}
  \centering
   \begin{tikzpicture}
  
  \node[anchor=east] at (-0.5,0) {$\ColRefIt0{P_6}$};
  \foreach \i in {0,...,6}{
   \node[emptyvertex,fill=white] (v\i) at (\i,0) {};
  }
  
  \node at (6.7,-0.5) {
   \scalebox{0.3}{
   \begin{tikzpicture}[rotate=110]
    \draw[white,fill=black!20] (-30:1cm) arc (-30:-180:1cm) -- (-180:1.25cm) -- (-.75,.75) -- (-180:.25cm) -- (-180:.5cm) arc (-180:-30:.5cm)--cycle;
   \end{tikzpicture}
   }
  };
  
  \node[anchor=east] at (-0.5,-1) {$\ColRefIt1{P_6}$};
  \node[emptyvertex,fill=black] (w0) at (0,-1) {};
  \node[emptyvertex,fill=white] (w1) at (1,-1) {};
  \node[emptyvertex,fill=white] (w2) at (2,-1) {};
  \node[emptyvertex,fill=white] (w3) at (3,-1) {};
  \node[emptyvertex,fill=white] (w4) at (4,-1) {};
  \node[emptyvertex,fill=white] (w5) at (5,-1) {};
  \node[emptyvertex,fill=black] (w6) at (6,-1) {};
  
  \node at (6.7,-1.5) {
   \scalebox{0.3}{
   \begin{tikzpicture}[rotate=110]
    \draw[white,fill=black!20] (-30:1cm) arc (-30:-180:1cm) -- (-180:1.25cm) -- (-.75,.75) -- (-180:.25cm) -- (-180:.5cm) arc (-180:-30:.5cm)--cycle;
   \end{tikzpicture}
   }
  };
  
  \node[anchor=east] at (-0.5,-2) {$\ColRefIt2{P_6}$};
  \node[emptyvertex,fill=black] (x0) at (0,-2) {};
  \node[emptyvertex,fill=gray] (x1) at (1,-2) {};
  \node[emptyvertex,fill=white] (x2) at (2,-2) {};
  \node[emptyvertex,fill=white] (x3) at (3,-2) {};
  \node[emptyvertex,fill=white] (x4) at (4,-2) {};
  \node[emptyvertex,fill=gray] (x5) at (5,-2) {};
  \node[emptyvertex,fill=black] (x6) at (6,-2) {};
  
  \node at (6.7,-2.5) {
   \scalebox{0.3}{
   \begin{tikzpicture}[rotate=110]
    \draw[white,fill=black!20] (-30:1cm) arc (-30:-180:1cm) -- (-180:1.25cm) -- (-.75,.75) -- (-180:.25cm) -- (-180:.5cm) arc (-180:-30:.5cm)--cycle;
   \end{tikzpicture}
   }
  };
  
  \node[anchor=east] at (-0.5,-3) {$\ColRefIt 3{P_6}=\ColRef {P_6}$};
  \node[emptyvertex,fill=black] (y0) at (0,-3) {};
  \node[emptyvertex,fill=gray] (y1) at (1,-3) {};
  \node[emptyvertex,fill=gray!40] (y2) at (2,-3) {};
  \node[emptyvertex,fill=white] (y3) at (3,-3) {};
  \node[emptyvertex,fill=gray!40] (y4) at (4,-3) {};
  \node[emptyvertex,fill=gray] (y5) at (5,-3) {};
  \node[emptyvertex,fill=black] (y6) at (6,-3) {};
  
  \foreach \u in {v,w,x,y}{
   \path[draw,thick] (\u0) -- (\u1);
   \path[draw,thick] (\u1) -- (\u2);
   \path[draw,thick] (\u2) -- (\u3);
   \path[draw,thick] (\u3) -- (\u4);
   \path[draw,thick] (\u4) -- (\u5);
   \path[draw,thick] (\u5) -- (\u6);
  }
 \end{tikzpicture}
 \caption{The iterations of Color Refinement on a path $P_6$ of length $6$.}
 \label{fig:cr-path}
\end{figure}
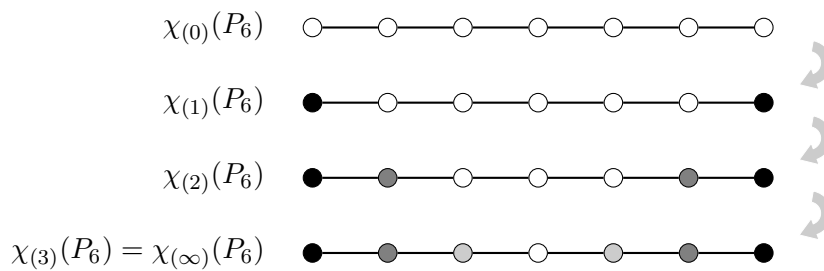

The Color Refinement algorithm is very efficient, and it is used as an important
subroutine in all practical isomorphism tools
\cite{CodenottiKSM13,JunttilaK11,Lopez-PresaCA14,McKay81,McKayP14}.
The stable coloring of a graph can be computed in time
$\CO((n+m)\log n)$, where $n$ denotes the number of vertices and $m$ the
number of edges of the input graph \cite{carcroC82} (also
see~\cite{paitar87})%
\footnote{To be precise, the algorithms compute the partition of the
  vertex set corresponding to the stable coloring, not the actual
  colors viewed as multisets.}.
For a natural class of partitioning algorithms,
this is best-possible~\cite{berbongro17}.

\subsection{Higher Dimensions}
The \defword{$k$-dimensional Weisfeiler-Leman algorithm ($k$-WL)} is a
natural generalization of the Color Refinement algorithm where, instead of
vertices, we color $k$-tuples of vertices of a graph. The
$2$-dimensional version, also known as the \defword{classical}
Weisfeiler-Leman algorithm, has been introduced by Weisfeiler and Leman
\cite{weilem68}, and the $k$-dimensional generalization goes
back to Babai and Mathon (see~\cite{CaiFI92}).

To introduce $k$-WL, we need the notion of the \defword{atomic type}
$\atp(G,\bar v)$ of a tuple $\bar v=(v_1,\ldots,v_\ell)$ of vertices
in a graph $G$. It describes the isomorphism type of the labeled
induced subgraph $G[\{v_1,\ldots,v_\ell\}]$, that is, for graphs $G,H$
and tuples $\bar v=(v_1,\ldots,v_\ell)\in V(G)^\ell, \bar
w=(w_1,\ldots,w_\ell)\in V(H)^\ell$ we have $\atp(G,\bar
v)=\atp(H,\bar w)$ if and only if the mapping $v_i\mapsto w_i$ is an
isomorphism from $G[v_1,\ldots,v_k]$ to $H[w_1,\ldots,w_k]$. Formally, we
can view $\atp(G,v_1,\ldots,v_k)$ as a pair of Boolean $(k\times
k)$-matrices, one describing equalities between the $v_i$, and one
describing adjacencies. If $G$ is a colored graph, we need one matrix
for each arc color and one diagonal matrix for each vertex color.

Then $k$-WL computes a sequence of colorings $\WLIt kiG$ of $V(G)^k$ as
follows. Initially, each tuple is colored by its atomic type, that
is,
$
\WLItVe k0G{\bar v}\coloneqq\atp(G,\bar v).
$
For $i > 0$, we define
\[
\WLItVe kiG{\bar v} \coloneqq \Big(\WLItVe k{i-1}G{\bar
  v},\mathcal{M}^k_i(\bar v)\Big),
\]
where
\[
  \mathcal{M}^k_i(\bar v) \coloneqq
  \Biglmulti
  \big(\atp(G,\bar v v),\WLItVe k{i-1}G{\bar v[v/1]},
  \ldots,
  \WLItVe k{i-1}G{\bar v[v/k]}
  \big)
  \Bigmid
  v\in V(G)\Bigrmulti.
\]
Here, for $\bar v=(v_1,\ldots,v_k)\in V(G)$ and $v\in
V(G)$, by $\bar vv$ we denote the $(k+1)$-tuple $(v_1,\ldots,v_k,v)$
and, for $i\in[k]$, by $\bar v[v/i]$ the $k$-tuple
$(v_1,\ldots,v_{i-1},v,v_{i+1},\ldots,v_k)$.
Then there is an $i_\infty < n^k$
such that $\WLIt k{i_\infty}G\equiv\WLIt k{i_\infty+1}G$, and we call
$\WL kG\coloneqq\WLIt k{i_\infty}G$ the \defword{$k$-stable coloring}.
The $k$-stable coloring of an $n$-vertex graph can be computed  in
time $\CO(n^k\log n)$ \cite{immlan90}.

It is easy to see that $1$-WL is essentially just the Color Refinement algorithm,
that is, for all graphs $G,H$, vertices $v\in V(G),w\in V(H)$, and
$i\ge 0$ it holds that
\begin{equation}
  \label{eq:wl1}
  \ColRefItVe iGv=\ColRefItVe iHw\iff\WLItVe 1iGv=\WLItVe 1iHw.
\end{equation}
In the following, we drop the distinction between the Color Refinement algorithm and $1$-WL.

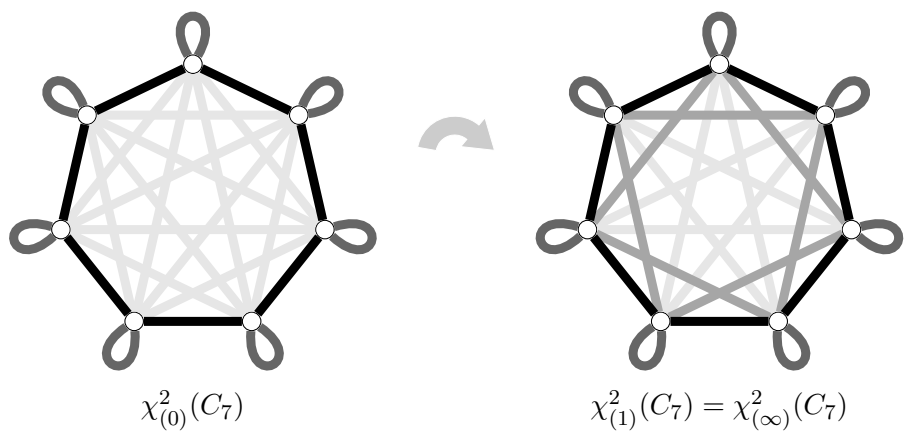
\begin{figure}
  \centering
    \begin{tikzpicture}


   \begin{scope}[xshift=0cm]
        \foreach \i in {0,...,6}
        {
          \node[emptyvertex] (v\i) at (90+\i*360/7:1.8cm) {};
          \draw (v\i) edge[black!60,line width=1.2mm,out=120+\i*360/7,in=60+\i*360/7,looseness=16] (v\i);
        }

     \draw[black!10,line width=1.2mm] (v0) edge (v2) edge (v3) edge
     (v4) edge (v5) (v1) edge (v3) edge (v4) edge (v5) edge (v6) (v2)
     edge (v4) edge (v5) edge (v6) (v3) edge (v5) edge (v6) (v4) edge (v6);

     \draw[black,line width=1.2mm] (v0) edge (v1)  (v1) edge (v2)
     (v2) edge (v3)  (v3) edge (v4)  (v4) edge (v5)  (v5) edge (v6)
     (v6) edge (v0);

     \node at (0,-2.8) {$\WLIt20{C_7}$};
   \end{scope}
   
   \node at (3.5,0.9) {
   \scalebox{0.5}{
   \begin{tikzpicture}[rotate=-130]
    \draw[white,fill=black!20] (-60:1cm) arc (-60:-180:1cm) -- (-180:1.25cm) -- (-.75,.75) -- (-180:.25cm) -- (-180:.5cm) arc (-180:-60:.5cm)--cycle;
   \end{tikzpicture}
   }
   };
   
   
   \begin{scope}[xshift=7cm]
        \foreach \i in {0,...,6}
        {
          \node[emptyvertex] (v\i) at (90+\i*360/7:1.8cm) {};
          \draw (v\i) edge[black!60,line width=1.2mm,out=120+\i*360/7,in=60+\i*360/7,looseness=16] (v\i);
        }

     \draw[black!10,line width=1.2mm] (v0) edge (v3) edge
     (v4) (v1) edge (v4) edge (v5) (v2) edge (v5) edge (v6) (v3) edge (v6);

          \draw[black!35,line width=1.2mm] (v0) edge (v2) edge (v5) (v1) edge (v3) edge (v6) (v2)
     edge (v4)  (v3) edge (v5) (v4) edge (v6);

     \draw[black,line width=1.2mm] (v0) edge (v1)  (v1) edge (v2)
     (v2) edge (v3)  (v3) edge (v4)  (v4) edge (v5)  (v5) edge (v6)
     (v6) edge (v0);

     \node at (0,-2.8) {$\WLIt21{C_7}=\WL2{C_7}$};
   \end{scope}
    \end{tikzpicture}
  \caption{The iterations of $2$-WL on a cycle $C_7$}
  \label{fig:wl-cycle}
\end{figure}

\begin{example}
The sequence of colorings of $2$-WL on a cycle $C_7$ of length $7$ is shown in Figure
\ref{fig:wl-cycle}. In this example, the coloring is symmetric,
that is, $\WLItVe 2i{C_7}{v,v'}=\WLItVe 2i{C_7}{v',v}$ for all $i\ge
0$ and $v,v'\in V(C_7)$. This allows us to show the colors as undirected edges.
\end{example}

\subsection{The Power of WL and the WL Dimension}
To use the Weisfeiler-Leman algorithm as an isomorphism
test, we can compare the color patterns of two given graphs. We say
that $k$-WL \defword{distinguishes} graphs $G$ and $H$ if there is a
color $c\in\operatorname{rg}(\WL kG)\cup\operatorname{rg}(\WL kH)$ such that
\[
  \big|\big\{ \bar v\in V(G)^k\bigmid \WLVe kG{\bar v}=c\big\}\big|\neq
  \big|\big\{ \bar w\in V(H)^k\bigmid \WLVe kH{\bar w}=c\big\}\big|.
\]
Note that $k$-WL is an \defword{incomplete} isomorphism test: if it
distinguishes two graphs, we know that they are non-isomorphic, but as
we will see, there are non-isomorphic graphs that $k$-WL does not
distinguish. Let us call two graphs that are not distinguished by the
$k$-WL \defword{$k$-indistinguishable}.

\begin{example}
  Any two $d$-regular graphs are $1$-indistinguishable. The simplest
  example are  a cycle of length $6$ and the disjoint union of two triangles.
  
  Similarly, any two strongly regular graphs with the same parameters
  are $2$-indistinguishable. Figure~\ref{fig:sr-graphs} shows two
  non-isomorphic strongly regular graphs with parameters $(16,6,2,2)$,
  which means that they have 16 vertices, each vertex has $6$
  neighbors, each pair of adjacent vertices has two common
  neighbors, and each pair of non-adjacent vertices has 2 common neighbors. 
  The first of the graphs is the line graph of the complete bipartite graph $K_{4,4}$.
  The second graph is the Shrikhande graph, which
  can be viewed as the undirected graph underlying the Cayley graph
  for the group $\mathbb Z_4\times\mathbb Z_4$ with generators
  $(0,1),(1,0),(1,1)$. To see that they are
  non-isomorphic, note that the neighborhood of each vertex in the
  Shrikhande graph is a cycle of length $6$, and the neighborhood of
  the each vertex in the line graph of $K_{4,4}$ is the disjoint union
  of two triangles.
\end{example}

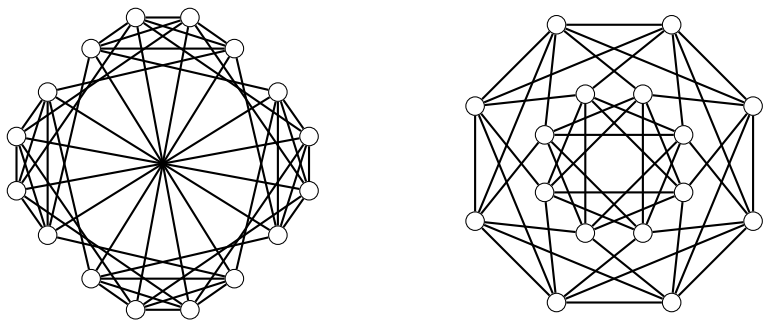
\begin{figure}
 \centering
 \begin{tikzpicture}
  \foreach \i in {0,...,3}{
   \foreach \j in {0,...,3}{
    \node[emptyvertex] (v\i\j) at ($(90*\i:0.8) + (90*\i+35*\j-52.5:1.2)$) {};
   }
  }
  
  \foreach \i in {0,...,3}{
   \foreach \j/\k in {0/1,0/2,0/3,1/2,1/3,2/3}
   \path[draw,thick] (v\i\j) edge (v\i\k);
  }
  
  \foreach \i in {0,...,3}{
   \foreach \j/\k in {0/1,0/2,0/3,1/2,1/3,2/3}
   \path[draw,thick] (v\j\i) edge (v\k\i);
  }
  
  \foreach \i in {0,...,7}{
   \node[emptyvertex] (u\i) at ($(6,0)+(22.5+45*\i:1)$) {};
   \node[emptyvertex] (v\i) at ($(6,0)+(22.5+45*\i:2)$) {};
  }
  \foreach \i/\j in {0/2,0/3,0/5,0/6,1/3,1/4,1/6,1/7,2/4,2/5,2/7,3/5,3/6,4/6,4/7,5/7}{
   \path[draw,thick] (u\i) edge (u\j);
  }
  \foreach \i/\j in {0/1,0/7,1/0,1/2,2/1,2/3,3/2,3/4,4/3,4/5,5/4,5/6,6/5,6/7,7/0,7/6}{
   \path[draw,thick] (v\i) edge (u\j);
  }
  \foreach \i/\j in {0/1,0/2,0/6,0/7,1/2,1/3,1/7,2/3,2/4,3/4,3/5,4/5,4/6,5/6,5/7,6/7}{
   \path[draw,thick] (v\i) edge (v\j);
  }
 \end{tikzpicture}
 \caption{Two non-isomorphic strongly regular graphs with parameters $(16,6,2,2)$: the line graph of $K_{4,4}$ (left) and the Shrikhande Graph (right).}
 \label{fig:sr-graphs}
\end{figure}

It is not easy to find two non-isomorphic graphs that are $k$-indistinguishable for some $k \geq 3$.
In a seminal paper, Cai, Fürer, and Immerman constructed such graphs for each $k$.

\begin{theorem}[Cai, Fürer and Immerman~\cite{CaiFI92}]
 \label{theo:cfi}
 For every $k\ge 1$ there exist non-isomorphic $3$-regular graphs
 $G_k,H_k$ of order $|G_k|=|H_k|=\CO(k)$ that are $k$-in\-dis\-tin\-guish\-able.
\end{theorem}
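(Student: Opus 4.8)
The plan is to construct the Cai--Fürer--Immerman gadget graphs explicitly and then argue indistinguishability by exhibiting a winning strategy for Duplicator in the bijective $k$-pebble game, which captures $k$-WL-indistinguishability (adding one pebble to account for the convention that $k$-WL corresponds to the $(k{+}1)$-pebble game). First I would fix a connected base graph of bounded degree and linear size---say, a $3$-regular expander or simply a long ``ladder''-type graph---on which the constructions will be based, and I would choose it carefully so that every small vertex separator is large; the point is that the parameter $k$ in the theorem will be tied (up to constants) to the size of the smallest separator of the base graph, so the base graph must have both linear size and linear separation, which is exactly what a bounded-degree expander gives.

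Next I would describe the gadget replacement. For each vertex $v$ of the base graph $B$ of degree $d_v$, introduce a gadget $X_v$ consisting of $2^{d_v-1}$ ``inner'' vertices indexed by the even-weight subsets of the edges at $v$, together with a pair of ``outer'' vertices for each incident edge; wire the inner vertices to the outer vertices according to the index sets, and identify outer vertices across each edge of $B$. This yields the graph $G_k = \mathrm{CFI}(B)$; the ``twisted'' copy $H_k = \widetilde{\mathrm{CFI}}(B)$ is obtained by swapping the two outer vertices along a single edge (equivalently, flipping the parity constraint at one edge). One then checks the two standard structural facts: (i) $G_k \not\cong H_k$, because the number of twisted edges is an isomorphism invariant mod $2$---any isomorphism can only move twists around even cycles---and (ii) $3$-regularity (or bounded degree) and $|G_k| = |H_k| = O(|B|) = O(k)$ follow by inspection of the gadget sizes, possibly after subdividing to enforce exact $3$-regularity and colorlessness.

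The heart of the argument, and the main obstacle, is the indistinguishability claim: I would show that if $B$ has no vertex separator of size at most $k$ (in an appropriate sense), then Duplicator wins the bijective $k$-pebble game on $(G_k, H_k)$. The strategy is the ``follow the twist'' idea: Duplicator maintains the invariant that the bijection currently in play is consistent with an isomorphism of $G_k$ onto the CFI graph twisted at a single, freely chosen edge $e^\star$, where $e^\star$ is kept outside the region ``pinned down'' by the $\le k$ pebbled vertices. Because the pebbled vertices touch gadgets at at most $k$ base-vertices, and $B$ has no small separator, the twist can always be routed through an unpebbled part of $B$; when Spoiler picks up and replaces a pebble, Duplicator re-routes $e^\star$ accordingly. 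Making this precise requires a connectivity/flow argument on $B$ showing the twist can be ``pushed'' past any boundary of $\le k$ vertices, plus the algebraic observation that twisting at $e^\star$ versus twisting at an adjacent edge differ by an automorphism of the local gadget---so Duplicator can update the bijection locally and keep it a partial isomorphism. Finally, I would package the expander's expansion parameter to get a separator lower bound linear in $k$, so that the $k$-pebble game is won and hence $G_k$ and $H_k$ are $k$-indistinguishable, completing the proof.
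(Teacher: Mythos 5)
Your proposal is essentially the classical Cai--F\"urer--Immerman argument, which is exactly what the paper relies on (the survey states the theorem with a citation to \cite{CaiFI92} and gives no proof of its own): a bounded-degree base graph with linear-size separators (an expander), parity gadgets, a single twist, non-isomorphism via the mod-$2$ twist invariant, and Duplicator winning the bijective pebble game by routing the twist outside the pebbled region. One small correction: if you \emph{identify} the two outer vertices across each base edge, the merged vertices have degree $4$ (each side contributes $2^{d-1}/2=2$ gadget edges), and subdividing cannot lower a degree; to get genuinely $3$-regular graphs you should instead keep the outer pairs distinct and join them by a matching (straight for $G_k$, crossed at one edge for $H_k$), which with a $3$-regular base graph yields exact $3$-regularity and order $10|B|=\CO(k)$.
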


The Weisfeiler-Leman algorithm colors tuples of vertices based on the
isomorphism type of these tuples and the overlap with other tuples.
To distinguish the so called \defword{CFI graphs} $G_k,H_k$ of
Theorem~\ref{theo:cfi}, we need to color tuples of length linear in
the order of the graphs. As the number of such tuples is
$\Omega(k^k)$, this is prohibitive both in terms of running time and
memory space. However, a closer analysis reveals that it suffices to
color only some long tuples, to be precise, an isomorphism-invariant set of
tuples of size polynomial in $k$ (this follows from an argument due to
Dawar, Richerby and Rossman~\cite{dawricros06}). In
\cite{groschwewie21}, the first author jointly with Schweitzer and
Wiebking proposed an algorithmic framework called \textsc{DeepWL} for
combinatorial ``coloring'' algorithms that
allows it to selectively color tuples within isomorphism-invariant
families and thereby color tuples of length linear in the input
graphs. \textsc{DeepWL} can distinguish the CFI graphs in polynomial time.

While Theorem~\ref{theo:cfi} clearly shows its limitations, the
Weisfeiler-Leman algorithm still remains quite powerful. In
particular, it does yield complete isomorphism tests for many natural
graph classes. Let us say that $k$-WL \defword{identifies} a graph $G$ if
it distinguishes $G$ from all graphs $H$ that are not isomorphic to
$G$. It is long known \cite{baberdsel80} that even $1$-WL (that is, the Color Refinement algorithm)
identifies almost all graphs, in the sense that in the limit, the fraction of
$n$-vertex graphs identified by $1$-WL is $1$ as $n$ goes to
infinity. Similarly, $2$-WL identifies almost all $d$-regular graphs
\cite{bol82}.

The \defword{WL dimension} of a graph $G$ is the least $k\ge 1$ such that
$k$-WL identifies $G$. Trivially, the WL dimension of an $n$-vertex
graph is at most $n$ (in fact, $n-1$ for $n\ge 2$).
The \defword{WL dimension} of a class $\CC$ of graphs is the maximum of
the WL dimensions of the graphs in $\CC$ if this maximum exists, or
$\infty$ otherwise.
It is easy to
see that the WL dimension of the class of forests is $1$. Building on
earlier results for graphs of bounded genus and bounded tree width
\cite{gro98a,gro00,gromar99}, the first author proved the following
far-reaching result.

\begin{theorem}[Grohe~\cite{gro17}]
 \label{theo:x-minor}
 All graph classes excluding a fixed graph as a minor have bounded WL dimension.
\end{theorem}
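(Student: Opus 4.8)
The plan is to reduce the theorem, via the structure theory of minor-closed classes, to ingredients that are essentially already available: isomorphism-invariant decompositions along small separators, and the identification of graphs of bounded Euler genus. Fix the excluded minor $X$. By the Robertson--Seymour structure theorem there is a constant $k_0 = k_0(X)$ such that every graph $G$ with no $X$-minor admits a tree decomposition of adhesion at most $k_0$ in which every torso is \emph{$k_0$-almost-embeddable}: after deleting at most $k_0$ apex vertices it embeds in a surface of Euler genus at most $k_0$, together with at most $k_0$ vortices of depth at most $k_0$ attached to distinct faces. The aim is to produce a constant $k = k(k_0)$ such that $k$-WL identifies every such $G$; equivalently (by the correspondence between $k$-WL and counting logic with $k+1$ variables) that $G$ is characterised up to isomorphism in $\LC^{k+1}$, and in fact that WL computes a canonical form for $G$, which is the property that makes the induction below compose.

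First I would dispose of the almost-embeddable torsos. Graphs of bounded Euler genus are handled by earlier work \cite{gro00}, which itself reduces --- via a WL-definable variant of the decomposition into $3$-connected components --- to the case of a $3$-connected graph, for which WL can essentially reconstruct an embedding in a surface of bounded genus. Apex vertices are absorbed by raising the dimension: individualising a bounded vertex set costs only boundedly many extra pebbles, so $k$-WL on $G$ already dominates, for every vertex set $A$ with $|A|\le k_0$, the $(k-k_0)$-WL type of $G-A$ with $A$ individualised; one then argues that, since the ``core'' $G-A$ is (almost) a bounded-genus graph, this information pins $G$ down. Vortices are bounded-width path decompositions whose bags are anchored along a cyclic sequence of boundary vertices of a face that the embedding already exposes, so they are absorbed by the bounded-tree-width machinery \cite{gro98a,gromar99}. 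Putting these together bounds the WL dimension of the class of $k_0$-almost-embeddable graphs.

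It then remains to propagate the bound along the tree decomposition. The subtlety is that one cannot simply hand WL the Robertson--Seymour decomposition, which is highly non-unique; instead one must manufacture a genuinely isomorphism-invariant substitute. Concretely, one shows that for $k$ large enough $k$-WL detects the relevant separators of size at most $k_0$ and, from them, computes a WL-definable treelike decomposition of $G$ of bounded adhesion whose torsos still lie in (a bounded enlargement of) the class of almost-embeddable graphs. For the recursion, each torso is coloured not only by the WL colouring inherited from $G$ but also by the interface information attached to its adhesion sets, so that ``identification of every torso of this canonical decomposition'' assembles into ``identification of $G$''; the bounded-tree-width result is precisely the degenerate instance in which all torsos have bounded size.

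The main obstacle is this last step: turning the highly non-canonical structural decomposition into one that is \emph{WL-definable}, so that the inductive hypothesis on the torsos can be invoked at all, and checking that the almost-embeddable pieces, now carrying their interface data, remain identified. This is where virtually all of the ``definable structure theory'' of \cite{gro17} is spent. The genus base case is inherited from earlier work; and while the apex and vortex cases are technically delicate, they do not introduce fundamentally new difficulties once the genus case and the definable-decomposition framework are in place.
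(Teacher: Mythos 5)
The survey states this theorem without proof, citing Grohe's monograph \cite{gro17} directly, so there is no in-paper argument to compare against; your sketch is a faithful high-level account of the argument developed in that reference, built on the Robertson--Seymour structure theorem, the earlier bounded-genus and bounded-tree-width base cases, and the central effort of replacing the non-canonical structural decomposition by a WL-definable (equivalently, FPC-definable) treelike decomposition of bounded adhesion whose torsos remain almost-embeddable. You also correctly identify both the key enabling observation (individualising boundedly many apex vertices costs only boundedly many extra WL dimensions) and the main technical obstacle, and correctly observe that the inductive invariant must be definable canonisation rather than mere identification so that interface information on adhesion sets can be propagated through the tree.
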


Explicit bounds on the WL dimension are known for specific graph
classes. Most notably, Kiefer, Ponomarenko, and Schweitzer
\cite{KieferPS19} proved that the WL dimension of planar graphs is at
most $3$. The example of two triangles vs a cycle of length $6$ shows
that it is at least $2$. More generally, the WL dimension of the class
of all graphs embeddable in a surface of Euler genus $g$ is at most
$4g+3$ \cite{grokie19}; a lower bound linear in $g$ follows from
Theorem~\ref{theo:cfi}. The second author jointly with Kiefer
\cite{kieneu19} proved that the WL dimension of the class of graphs of
tree width $k$ is at least $\lceil k/2\rceil-3$ and at most $k$.

There are also natural graph classes that do not exclude a fixed graph as a
minor (and that are therefore not covered by
Theorem~\ref{theo:x-minor}). The WL dimension of the class of interval
graphs is $2$ \cite{evdpontin00} and the WL dimension of
the class of graphs of rank width at most $k$ is at most $3k+4$
\cite{groneu19}. Contrasting all these results,
Theorem~\ref{theo:cfi} shows that the WL dimension of the class of
3-regular graphs is $\infty$, even though isomorphism of 3-regular
graphs can be decided in polynomial time \cite{Luks82} (cf.~Theorem~\ref{theo:bounded-degree}).
Much more on this topic can found in Kiefer's recent survey \cite{kie20}.

\subsection{Characterisations of WL}
The relevance of the Weisfeiler-Leman algorithm is not restricted to GI.
It has found applications in linear
optimisation~\cite{grokermla+14} and probabilistic
inferences~\cite{ahmkermlanat13}, and because of relations to graph
kernels \cite{sheschlee+11} and graph neural networks
\cite{morritfey+19,xuhulesjeg19}, received attention in machine
learning recently. On a theoretical level, connections to other areas
are induced by several seemingly unrelated characterizations of
$k$-indistinguishability in terms of logic, combinatorics, and algebra.

We start with a logical characterization going back to
\cite{CaiFI92,immlan90}. Let $\LC$ be the extension of first-order predicate logic using
counting quantifiers of the form $\exists^{\ge \ell}x$ for $\ell \ge 1$,
where $\exists^{\ge \ell}x\,\varphi(x)$ means that there are at least $\ell$
elements $x$ satisfying $\varphi$. By $\LC^k$ we denote the fragment of
$\LC$ consisting of formulas with at most $k$ variables. For example,
the following $\LC^2$-sentence $\varphi$ says that all vertices of a graph
have fewer than $3$ neighbors of degree at least $10$:
\begin{equation*}
  \varphi\coloneqq\forall x\neg\exists^{\ge 3}y\big(
  E(x,y)\wedge \exists^{\ge 10} x\,E(y,x)\big).
\end{equation*}
Note that we reuse variable $x$ for the innermost quantification to keep the number of distinct variables down to two.

\begin{theorem}[Cai, Fürer and Immerman~\cite{CaiFI92}]
  Let $k\ge 1$. Then two graphs are $k$-indistinguishable if and only if
  they are $\LC^{k+1}$-equivalent, that is, they satisfy the same
  sentences of the logic $\LC^{k+1}$.
\end{theorem}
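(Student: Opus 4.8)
The plan is to route the equivalence through a combinatorial pebble game, namely the \emph{bijective $(k+1)$-pebble game} played on a pair of graphs $(G,H)$. In this game there are $k+1$ pairs of pebbles, a position is a pair $(\bar v,\bar w)$ of partial tuples carrying matched pebbles, and in each round Spoiler first removes one pebble pair from the board, then Duplicator answers with a bijection $f\colon V(G)\to V(H)$ (which already forces $|G|=|H|$), and finally Spoiler places the chosen pebble on some $v\in V(G)$ and its twin on $f(v)\in V(H)$. Duplicator survives a round as long as the pebbled correspondence $v_i\mapsto w_i$ is a partial isomorphism, i.e.\ $\atp(G,\bar v)=\atp(H,\bar w)$, and wins if she can survive every round. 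The theorem will follow by showing that this game characterises both sides.

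First I would establish the tuple-level bridge to $k$-WL: for all $k$-tuples $\bar v\in V(G)^k$ and $\bar w\in V(H)^k$, Duplicator wins the game from the position $(\bar v,\bar w)$ (all $k$ pebble pairs placed, one pair free) if and only if $\WLVe kG{\bar v}=\WLVe kH{\bar w}$. This is proved by a round-by-round induction, showing that $\WLItVe kiG{\bar v}=\WLItVe kiH{\bar w}$ holds exactly when Duplicator can survive $i$ rounds from $(\bar v,\bar w)$; since the $k$-stable colouring is reached after finitely many rounds, letting $i\to\infty$ gives the claim. The base case $i=0$ is precisely the atomic-type condition. In the inductive step one unfolds the WL message $\mathcal{M}^k_i(\bar v)$ alongside a single move of the game: picking up the $j$-th pebble and re-placing it on $v$ (and on $f(v)$) turns $(\bar v,\bar w)$ into $(\bar v[v/j],\bar w[f(v)/j])$, and the datum $\big(\atp(G,\bar vv),\WLItVe k{i-1}G{\bar v[v/1]},\dots,\WLItVe k{i-1}G{\bar v[v/k]}\big)$ attached to $v$ is exactly what Duplicator needs to keep all possible responses of Spoiler safe. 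The crucial combinatorial point is that two multisets of such annotated colours coincide precisely when there is a colour-preserving bijection between the underlying vertex sets; this is what converts equality of WL colours into a legal bijection for Duplicator, and conversely lets a surviving bijection witness equality of the messages. I expect this step --- together with carefully keeping the ``$k$ pebbled coordinates plus one newly quantified vertex'' bookkeeping straight --- to be the main obstacle.

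Next I would connect the game to the logic: Duplicator wins from $(\bar v,\bar w)$ if and only if $\bar v$ and $\bar w$ satisfy the same $\LC^{k+1}$-formulas whose free variables all lie among $x_1,\dots,x_k$. This is the standard back-and-forth argument for counting logic, by induction on a suitable rank of formulas, with one round of the game corresponding to one quantifier and the $k+1$ pebble pairs to the $k+1$ available variables. The only genuinely game-specific ingredient is the treatment of a counting quantifier $\exists^{\ge\ell}x$: if Duplicator's bijection $f$ maps $\{v\mid G\models\psi(\bar v,v)\}$ onto $\{w\mid H\models\psi(\bar w,w)\}$ then these sets are equinumerous, and conversely, once every such ``definable slice'' has the same size on both sides, Hall's theorem (or a direct counting argument) produces the bijection Duplicator must play.

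Finally I would assemble the sentence-level statement. That $k$-WL does not distinguish $G$ and $H$ means the colour multisets $\lmulti \WLVe kG{\bar v}\bigmid \bar v\in V(G)^k\rmulti$ and $\lmulti \WLVe kH{\bar w}\bigmid \bar w\in V(H)^k\rmulti$ agree, which is equivalent to the existence of a colour-preserving bijection $V(G)^k\to V(H)^k$. Decomposing such a bijection coordinate by coordinate into the single-vertex bijections Duplicator plays in the first $k$ rounds --- and then invoking the tuple-level bridge above --- shows this is in turn equivalent to Duplicator winning the bijective $(k+1)$-pebble game from the empty position, and hence, by the previous paragraph applied with no free variables, to $G$ and $H$ satisfying the same $\LC^{k+1}$-sentences. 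The one point needing care here is exactly that passage between ``matching colour multisets on $k$-tuples'' and ``Duplicator wins from the start''.
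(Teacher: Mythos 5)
The paper itself contains no proof of this statement---it is quoted as a known theorem with a citation to Cai, F\"urer and Immerman---so there is no in-paper argument to compare against. Judged on its own, your route is the standard modern proof: characterise $\LC^{k+1}$-equivalence by Hella's bijective $(k+1)$-pebble game, prove the tuple-level correspondence between the $k$-stable colouring and winning positions of that game by induction on refinement rounds, and then pass to the sentence level. (The original CFI argument uses the Immerman--Lander counting game rather than the bijective game, but the two are interchangeable here.) The logic--game half of your sketch, including the treatment of $\exists^{\ge\ell}x$ via injectivity of Duplicator's bijection in one direction and matching the finitely many definable ``slices'' in the other, is fine.

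Two places in the write-up need repair before this is a proof. First, the clean round-for-iteration claim ``Duplicator survives $i$ rounds from $(\bar v,\bar w)$ iff $\WLItVe kiG{\bar v}=\WLItVe kiH{\bar w}$'' is delicate, because after Spoiler plays the free $(k+1)$-st pebble the game sits at a fully pebbled $(k{+}1)$-position and only returns to a $k$-position one move later; you must either count rounds so as to absorb this, or (simpler, and all the theorem needs) prove the correspondence only for the stable colouring versus the unbounded game, folding the condition $\atp(G,\bar v v)=\atp(H,\bar w f(v))$ and the colours of the $\bar v[v/j]$ into a single survival criterion as you indicate. Second, and more substantively, the final step is not a matter of ``decomposing a colour-preserving bijection $V(G)^k\to V(H)^k$ coordinate by coordinate'': such a bijection need not be a product map, so it yields no single-vertex bijections directly. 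The standard fix is to use the stability of $\WL kG$ --- for tuples of equal colour, the number of extensions $v$ with $\WLVe kG{\bar v[v/j]}=d$ depends only on the colour and on $d$ --- to show that equality of the colour multisets on $k$-tuples forces equality of the colour multisets of diagonally padded tuples, which is exactly what Duplicator needs to produce legal bijections in the first $k$ rounds from the empty position. You flag this passage as the point needing care, which is the right instinct, but as written the justification would not go through; with that lemma supplied (and it is a short counting argument), the proposal is correct.
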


This logical characterization of $k$-indistinguishability plays a
crucial role in the proof of Theorem~\ref{theo:x-minor} and, via a game
characterization of $\LC^k$-equivalence, in the proof of
Theorem~\ref{theo:cfi}.

The next characterization we will discuss is combinatorial. Recall
that a graph homomorphism is a mapping that preserves adjacency, and
let $\hom(G,H)$ denote the number of homomorphism from $G$ to $H$. An
old theorem due to Lovász \cite{lov67} states that two graphs $G,H$ are
isomorphic if and only if $\hom(F,G)=\hom(F,H)$ for all graphs
$F$. If, instead of all graphs $F$, we only count homomorphisms from
graphs of tree width $k$, we obtain a characterization of
$k$-indistinguishability.

\begin{theorem}[Dvorák~\cite{dvo10}]
 Let $k\ge 1$. Then two graphs
 $G,H$ are $k$-in\-dis\-tin\-guish\-able if and only if\/ $\hom(F,G)=\hom(F,H)$
 for all graphs $F$ of tree width at most $k$.
\end{theorem}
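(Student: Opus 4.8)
The plan is to prove both directions by relating counting homomorphisms to the iterated color classes produced by $k$-WL. The cleanest route is through the logical characterization (Cai--Fürer--Immerman): $G$ and $H$ are $k$-indistinguishable if and only if $G\equiv_{\LC^{k+1}}H$. So it suffices to show that $\LC^{k+1}$-equivalence coincides with agreement on $\hom(F,-)$ for all graphs $F$ of tree width at most $k$. I would set up the argument so that the pattern graphs $F$ play the role of ``witnesses'' for the shapes of $\LC^{k+1}$-formulas.

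First I would prove the easy direction: if $G\equiv_{\LC^{k+1}}H$, then $\hom(F,G)=\hom(F,H)$ for every $F$ with $\tw(F)\le k$. The point is that for a fixed pattern $F$ of tree width $\le k$, the quantity $\hom(F,G)$ is definable in $\LC^{k+1}$ (uniformly in $G$): fix a tree decomposition of $F$ of width $\le k$, process the bags in a bottom-up order, and at each bag existentially quantify over the (at most $k+1$) vertices in the bag while accumulating the homomorphism count via counting quantifiers. Reusing variables along the tree decomposition is exactly what keeps the variable count at $k+1$; this is the standard translation and I would cite it rather than grind through the bookkeeping. Since $\hom(F,G)$ is the value of an $\LC^{k+1}$-expressible counting term, $\LC^{k+1}$-equivalence forces these values to agree.

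For the converse I would argue contrapositively: if $G\not\equiv_{\LC^{k+1}}H$, I produce a pattern $F$ of tree width $\le k$ with $\hom(F,G)\ne\hom(F,H)$. Take a sentence $\varphi\in\LC^{k+1}$ with $G\models\varphi$, $H\not\models\varphi$. Using the Spoiler--Duplicator (bijective $k{+}1$-pebble) game characterization of $\LC^{k+1}$-equivalence, Spoiler has a winning strategy on $(G,H)$; a winning strategy decomposes into a bounded-depth ``strategy tree'' in which at most $k+1$ pebbles are ever on the board, and this tree, read as a graph on its pebbled positions, has tree width $\le k$. Turning the strategy tree (together with the counting thresholds Spoiler uses) into a single pattern $F$ — or, more robustly, into a $\mathbb{Z}$-linear combination of such patterns — whose homomorphism count separates $G$ from $H$ is the technical heart of the proof. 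The standard way to make this robust is to work with the vector of homomorphism counts from all tree-width-$\le k$ patterns and show, by a linear-algebra / Möbius-inversion argument over the lattice of such patterns, that this vector determines (and is determined by) the $k$-WL stable coloring; then $k$-indistinguishability is equivalent to equality of these vectors.

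The main obstacle is precisely this last step: passing from ``the two graphs disagree on some $\LC^{k+1}$-property'' to ``they disagree on $\hom(F,-)$ for a \emph{single} tree-width-$\le k$ graph $F$.'' The subtlety is that homomorphism counts are a very rigid invariant — one cannot freely take Boolean combinations of patterns — so the counting thresholds in the $\LC^{k+1}$-sentence must be simulated by arithmetic on homomorphism numbers. This is handled by moving to the \emph{injective}/surjective homomorphism counts (or equivalently to counts of subgraphs and partition-refined homomorphisms), which are related to ordinary $\hom(F,-)$ by an invertible triangular transformation indexed by quotients of $F$; since quotients do not increase tree width, one stays within the class $\{F:\tw(F)\le k\}$, and the linear algebra goes through. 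I would present this as the crux and defer the (routine but lengthy) verification that the transformation matrices are unitriangular.
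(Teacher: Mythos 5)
The paper only cites Dvo\v{r}\'ak's theorem and does not reprove it, so I am evaluating your sketch on its own merits. At what you yourself identify as the crux, you rely on the claim that ``quotients do not increase tree width,'' and this is simply false: identifying the two endpoints of a path $P_n$ (tree width $1$) produces a cycle (tree width $2$), and identifying the vertices of a long path along suitable residue classes produces $K_m$ for any $m$. Since the M\"obius-inversion identity writes $\mathrm{inj}(F,\cdot)$ as a signed sum of $\hom(F/P,\cdot)$ over quotients $F/P$, and those quotients can leave the class of tree-width-$\le k$ graphs, the linear-algebra route you propose does not close. Your alternative strategy-tree sketch is not developed far enough to check, and it is not at all clear how Spoiler's winning strategy would package into a \emph{single} tree-width-$\le k$ pattern with a differing homomorphism count. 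So the direction ``$\hom(F,G)=\hom(F,H)$ for all $\tw(F)\le k$ implies $k$-indistinguishability'' is not actually established.

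You also lean on the ``standard translation'' more than it bears for the converse: the usual translation of a width-$k$ tree decomposition into an $\LC^{k+1}$-formula witnesses $\hom(F,G)\ge 1$, not the predicate $\hom(F,G)\ge m$. Because $\LC$-counting quantifies over single vertices rather than over tuples, promoting the decision formula to a counting formula requires, at each bag, enumerating over the possible partial extension counts (up to $n^{|V(F)|}$ of them) and encoding them in nested disjunctions; this can be pushed through, but it is not the cited routine fact. Dvo\v{r}\'ak's own proof (and the later rederivation by Dell, Grohe, and Rattan) avoids both the logic detour and M\"obius inversion entirely: one proves by induction on a tree decomposition of a \emph{rooted} pattern $(F,\bar r)$ with $\tw(F)\le k$ that $\bar v\mapsto\hom\bigl((F,\bar r),(G,\bar v)\bigr)$ is determined by the stable $k$-WL colour of $\bar v$, and conversely constructs, for each $k$-WL colour at each refinement round, a rooted tree-width-$\le k$ pattern whose rooted homomorphism count isolates that colour. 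Restructuring your argument around that rooted inductive lemma, rather than around quotients, would repair the proof.
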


Maybe the most surprising characterization of $k$-indistinguishability
is algebraic. For simplicity, we focus on $k=1$ here. Let $G,H$ be graphs with vertex sets
$V \coloneqq V(G),W \coloneqq V(H)$, and let $A\in\mathbb R^{V\times V},B\in\mathbb R^{W\times
  W}$ be their adjacency matrices. Then $G$ and $H$ are isomorphic if
and only if there is a permutation matrix $X$ such that $X^{-1}AX=B$,
or equivalently, $AX=XB$.
We can write this as a system of linear (in)equalities:
\begin{align}
  \label{eq:lp2}
  \sum_{v'\in V}A_{vv'}X_{v'w}&=\sum_{w'\in W}X_{vw'}B_{w'w},\\
  \label{eq:lp3}
  \sum_{w'\in W}X_{vw'}&=\sum_{v'\in V}X_{v'w}=1,\\
  \label{eq:lp4}
  X_{vw}&\ge 0,&\text{for all }v\in V,w\in W.
\end{align}
Integer solutions to this system are precisely permutation matrices
$X=(X_{vw})$ 
satisfying $AX=XB$. Thus the system \eqref{eq:lp2}--\eqref{eq:lp4} has
an integer solution if and only if $G$ and $H$ are isomorphic. Now let
us drop the integrality constraints and consider rational solutions to
the system, which can be viewed as doubly stochastic matrices $X$
satisfying $AX=XB$. Let us call such solutions \defword{fractional
  isomorphisms}. 

\begin{theorem}[Tinhofer~\cite{tin91}]
 Two graphs $G,H$ are $1$-distinguishable if and only if they are fractionally isomorphic,
 that is, the system \eqref{eq:lp2}--\eqref{eq:lp4} has a rational solution.
\end{theorem}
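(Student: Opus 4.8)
The plan is to pass between the combinatorial reading of the stable coloring $\ColRef G$ as the coarsest equitable partition and the linear-algebraic picture: stability means precisely that within each color class every vertex has the same number of neighbours in every class, and that two vertices get the same stable color iff, in the limit, they have the same multiset of neighbour-colors. Throughout, $A,B$ are the adjacency matrices from the statement and $\mathbf 1$ is the all-ones vector. I will prove separately that Color Refinement fails to distinguish $G$ and $H$ if and only if there is a doubly stochastic $X$ with $AX=XB$.

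For the direction from $1$-indistinguishability to fractional isomorphism, run Color Refinement on $G\sqcup H$. Not distinguishing $G$ and $H$ means every stable color occurs on the same number $n_i$ of vertices of $G$ as of $H$; call these classes $V_i\subseteq V(G)$ and $W_i\subseteq V(H)$. Since $V_i$ and $W_i$ carry the same color, their vertices have the same neighbour-color multiset, so $\{V_i\}$ and $\{W_i\}$ are equitable partitions with a common parameter matrix $(c_{ij})$. Setting $X_{vw}\coloneqq 1/n_i$ when $v\in V_i$, $w\in W_i$ for the same $i$, and $X_{vw}\coloneqq 0$ otherwise, gives a doubly stochastic matrix with $(AX)_{vw}=c_{ij}/n_j$ and $(XB)_{vw}=c_{ji}/n_i$ for $v\in V_i$, $w\in W_j$; these agree since $n_ic_{ij}=n_jc_{ji}$ counts the edges between $V_i$ and $V_j$ from either side. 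Hence $AX=XB$.

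The converse is the harder direction; I would run an induction on the refinement round $i$ with the invariant: there is a size-preserving bijection $f_i$ between the color classes of $\ColRefIt iG$ and of $\ColRefIt iH$ such that $X_{vw}>0$ implies the round-$i$ class of $w$ is $f_i$ of the round-$i$ class of $v$. The case $i=0$ is clear (one class per side; $X$ doubly stochastic forces $|V(G)|=|V(H)|$). For the step, the invariant makes $X$ block-diagonal after reordering rows by $G$-classes and columns by $H$-classes, $X=\bigoplus_P X_P$ with $X_P\colon P\to f_i(P)$, and since each row and column of $X$ is still stochastic, each $X_P$ is doubly stochastic (hence square, so sizes match). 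Feeding this block form into $AX=XB$ and, using symmetry of $A,B$, also into $X^{\top}A=BX^{\top}$, and then multiplying on the right by $\mathbf 1$, one obtains, for all round-$i$ classes $P,Q$ of $G$ with $P'=f_i(P)$, $Q'=f_i(Q)$, the identities $\alpha=X_P\beta$ and $\beta=X_P^{\top}\alpha$, where $\alpha_v\coloneqq|N_G(v)\cap Q|$ for $v\in P$ and $\beta_w\coloneqq|N_H(w)\cap Q'|$ for $w\in P'$.

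The crux is then the elementary fact that a doubly stochastic $D$ with $\alpha=D\beta$ and $\beta=D^{\top}\alpha$ must satisfy $\alpha_v=\beta_w$ whenever $D_{vw}>0$: Jensen's inequality on the rows of $D$ gives $\|\alpha\|_2^2\le\|\beta\|_2^2$, the same on $D^{\top}$ gives the reverse, so equality holds throughout and each row-distribution $(D_{v\cdot})$ is supported where $\beta$ equals its mean $\alpha_v$. Applying this with $D=X_P$ for every $G$-class $Q$ shows that $X_{vw}>0$ forces $v$ and $w$ to have the same number of neighbours in each color class, i.e.\ the same round-$i$ neighbour profile under $f_i$, hence the same round-$(i+1)$ color; this yields the refined bijection $f_{i+1}$, which is again size-preserving since every row and column of each $X_P$ has a positive entry, matching up vertices of corresponding round-$(i+1)$ classes. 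Iterating to stabilization produces a size-preserving bijection between the stable color classes of $G$ and $H$, which is exactly the statement that Color Refinement does not distinguish them. I expect the real obstacle to be this translation from a single matrix identity to round-by-round control of the colorings; the two points that unlock it are that the invariant makes $X$ block-diagonal and that collapsing the blocks with $\mathbf 1$ turns $AX=XB$ into the vector identities on which the convexity estimate bites.
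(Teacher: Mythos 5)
Your proposal is correct, and in fact complete. Note that the survey does not prove this theorem at all — it only cites Tinhofer — so there is no in-paper argument to compare against; what you have written is essentially the classical proof. The forward direction (indistinguishable $\Rightarrow$ fractionally isomorphic) via the matched stable classes $V_i,W_i$ and the block matrix with entries $1/n_i$, checked through the edge-count identity $n_ic_{ij}=n_jc_{ji}$, is exactly the standard construction, and your converse — induction on refinement rounds, with the invariant forcing $X$ into doubly stochastic blocks $X_P$, the collapse of $AX=XB$ and $X^{\top}A=BX^{\top}$ against the class indicator vectors to get $\alpha=X_P\beta$, $\beta=X_P^{\top}\alpha$, and the strict-convexity/equality argument showing $X_{vw}>0$ forces $\alpha_v=\beta_w$ — is a correct and self-contained route to the harder implication. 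Two cosmetic points you may want to tighten: (i) the vector you multiply by is the indicator vector of the class $Q'$ (respectively $Q$), not the all-ones vector over the whole block, though the computation you carry out is the intended one; (ii) to pass from ``same neighbour profile under $f_i$'' to literally ``same round-$(i+1)$ colour'' in the sense of the paper's definition of distinguishing (which compares colour counts of $\ColRefIt iG$ and $\ColRefIt iH$ as canonical nested multisets), you should record that $f_i$ is colour-preserving; this holds by the same induction, since matched vertices have equal round-$i$ colours and equal counts into colour-matched classes. Finally, the theorem as printed says ``$1$-distinguishable'' where it means ``$1$-indistinguishable''; your proof establishes the intended equivalence, and rationality of a solution is immediate since the system is rational (your explicit $X$ is even rational by construction).
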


Atserias and Maneva~\cite{atsman13} showed that by 
considering the Sherali-Adams hierarchy over the integer linear
program \eqref{eq:lp2}--\eqref{eq:lp4}, we obtain a characterization
of $k$-indistinguishability. To be precise, solvability of the
level-$k$ Sherali-Adams (in)equalities yields an equivalence relation
strictly between $k$-indistinguishability and
$(k+1)$-indistinguishability. By combining the equations of levels
$(k-1)$ and $k$, we obtain an exact characterization of
$k$-indistinguishability \cite{groott15,mal14}. A similar, albeit
looser correspondence to $k$-indistinguishability has been obtained
for the Lassere hierarchy of semi-definite relaxations of the integer
linear program \eqref{eq:lp2}--\eqref{eq:lp4} \cite{atsoch18,odowriwu+14} and
in terms of polynomial equations and the so-called polynomial calculus
\cite{atsoch18,bergro15,gragropagpak19}.

\section{The Group-Theoretic Graph Isomorphism Machinery}
\label{sec:group}

In this section we present recent advances on the group-theoretic graph isomorphism machinery.
After giving some basics on permutation groups we first discuss Luks's algorithm for testing isomorphism of graphs of bounded degree \cite{Luks82}.
This algorithm forms the foundation of the group-theoretic graph isomorphism machinery.
Then, we give a very brief overview on Babai's quasi-polynomial-time isomorphism test \cite{Babai16} which also builds on Luks's algorithm and attacks the obstacle cases where the recursion performed by Luks's algorithm does not give the desired running time.
Finally, we present recent extensions of Babai's techniques resulting
in more efficient isomorphism tests for bounded-degree graphs
\cite{GroheNS18} and hypergraphs \cite{Neuen20}.

\subsection{Basics}

We introduce the group-theoretic notions required in this work.
For a general background on group theory we refer to \cite{Rotman99}; background on permutation groups can be found in \cite{DixonM96}.

\paragraph{Permutation Groups.}

A \defword{permutation group} acting on a set $\Omega$ is a subgroup $\Gamma \leq \Sym(\Omega)$ of the symmetric group.
The size of the permutation domain $\Omega$ is called the \defword{degree} of $\Gamma$.
Throughout this section, the degree is denoted by $n \coloneqq |\Omega|$.
If $\Omega = [n]$, then we also write $S_n$ instead of $\Sym(\Omega)$.
The \defword{alternating group} on a set $\Omega$ is denoted by $\Alt(\Omega)$.
As before, we write $A_n$ instead of $\Alt(\Omega)$ if $\Omega = [n]$.
For $\gamma \in \Gamma$ and $\alpha \in \Omega$ we denote by $\alpha^{\gamma}$ the image of $\alpha$ under the permutation $\gamma$.
The set $\alpha^{\Gamma} \coloneqq \{\alpha^{\gamma} \mid \gamma \in \Gamma\}$ is the \defword{orbit} of $\alpha$.
The group $\Gamma$ is \defword{transitive} if $\alpha^\Gamma = \Omega$ for some (and therefore every) $\alpha \in \Omega$.

For $\alpha \in \Omega$ the group
$\Gamma_\alpha \coloneqq \{\gamma \in \Gamma \mid \alpha^{\gamma} = \alpha\}
\leq \Gamma$ is the \defword{stabilizer} of $\alpha$ in $\Gamma$.  The
group $\Gamma$ is \defword{semi-regular} if $\Gamma_\alpha = \{\id\}$ for
every $\alpha \in \Omega$ (where $\id$ denotes the identity element of
the group).  The \defword{pointwise stabilizer} of a set
$A \subseteq \Omega$ is the subgroup
$\Gamma_{(A)} \coloneqq \{\gamma \in \Gamma \mid\forall \alpha \in A\colon
\alpha^{\gamma}= \alpha \}$.  For $A \subseteq \Omega$ and
$\gamma \in \Gamma$ let
$A^{\gamma} \coloneqq \{\alpha^{\gamma} \mid \alpha \in A\}$.  The set $A$ is
\defword{$\Gamma$-invariant} if $A^{\gamma} = A$ for all
$\gamma \in \Gamma$.  The \defword{setwise stabilizer} of a set
$A \subseteq \Omega$ is the subgroup
$\Gamma_{A} \coloneqq \{\gamma \in \Gamma \mid A^{\gamma} = A\}$.

For $A \subseteq \Omega$ and a bijection $\theta\colon \Omega \rightarrow \Omega'$ we denote by $\theta[A]$ the restriction of $\theta$ to the domain $A$.
For a $\Gamma$-invariant set $A \subseteq \Omega$, we denote by $\Gamma[A] \coloneqq \{\gamma[A] \mid \gamma \in \Gamma\}$ the induced action of $\Gamma$ on $A$, i.e., the group obtained from $\Gamma$ by restricting all permutations to $A$.

Let $\Gamma \leq \Sym(\Omega)$ be a transitive group.
A \defword{block} of $\Gamma$ is a nonempty subset $B \subseteq \Omega$ such that $B^{\gamma} = B$ or $B^{\gamma} \cap B = \emptyset$ for all $\gamma \in \Gamma$.
The trivial blocks are $\Omega$ and the singletons $\{\alpha\}$ for $\alpha \in \Omega$.
The group $\Gamma$ is called \defword{primitive} if there are no non-trivial blocks.
If $B \subseteq \Omega$ is a block of $\Gamma$ then $\FB \coloneqq \{B^{\gamma} \mid \gamma \in \Gamma\}$ builds a \defword{block system} of $\Gamma$.
Note that $\FB$ is an equipartition of $\Omega$.
The group $\Gamma_{(\FB)} \coloneqq \{\gamma \in \Gamma \mid \forall B \in \FB\colon B^{\gamma} = B\}$ denotes the subgroup stabilizing each block $B \in \FB$ setwise.
Moreover, the natural action of $\Gamma$ on the block system $\FB$ is denoted by $\Gamma[\FB] \leq \Sym(\FB)$.
Let $\FB,\FB'$ be two partitions of $\Omega$.
We say that $\FB$ \defword{refines} $\FB'$, denoted $\FB \preceq \FB'$, if for every $B \in \FB$ there is some $B' \in \FB'$ such that $B \subseteq B'$.
If additionally $\FB \neq \FB'$ we write $\FB \prec \FB'$.
A block system $\FB$ is \defword{minimal} if there is no non-trivial block system $\FB'$ such that $\FB \prec \FB'$.
A block system $\FB$ is minimal if and only if $\Gamma[\FB]$ is primitive.

Let $\Gamma \leq \Sym(\Omega)$ and $\Gamma' \leq \Sym(\Omega')$.
A \defword{homomorphism} is a mapping $\varphi\colon \Gamma \rightarrow \Gamma'$ such that $\varphi(\gamma)\varphi(\delta) = \varphi(\gamma\delta)$ for all $\gamma,\delta \in \Gamma$.
For $\gamma \in \Gamma$ we denote by $\gamma^{\varphi}$ the $\varphi$-image of $\gamma$.
Similarly, for $\Delta \leq \Gamma$ we denote by $\Delta^{\varphi}$ the $\varphi$-image of $\Delta$ (note that $\Delta^{\varphi}$ is a subgroup of $\Gamma'$).

\paragraph{Algorithms for Permutation Groups.}

Next, we review some basic facts about algorithms for permutation groups.
For detailed information we refer to \cite{Seress03}.

In order to handle permutation groups computationally it is essential
to represent groups in a compact way.  Indeed, the size of a
permutation group is typically exponential in the degree of the group
which means it is not possible to store the whole group in memory.  In
order to allow for efficient computation, permutation groups are
represented by generating sets.  By Lagrange's Theorem, for each
permutation group $\Gamma \leq \Sym(\Omega)$, there is a generating
set of size $\log|\Gamma| \leq n \log n$ (recall that
$n \coloneqq |\Omega|$ denotes the size of the permutation domain).
Actually, most algorithms are based on so-called \defword{strong
  generating sets}, which can be chosen of size quadratic in the
degree of the group and can be computed in polynomial time given an
arbitrary generating set (see, e.g., \cite{Seress03}).

\begin{theorem}[cf.\ \cite{Seress03}] 
 \label{thm:permutation-group-library}
 Let $\Gamma \leq \Sym(\Omega)$ and let $S$ be a generating set for $\Gamma$.
 Then the following tasks can be performed in time polynomial in $n$ and $|S|$:
 \begin{enumerate}
  \item compute the order of $\Gamma$,
  \item given $\gamma \in \Sym(\Omega)$, test whether $\gamma \in \Gamma$,
  \item compute the orbits of $\Gamma$,
  \item compute a minimal block system of $\Gamma$ (if $\Gamma$ is transitive), and
  \item given $A \subseteq \Omega$, compute a generating set for $\Gamma_{(A)}$.
 \end{enumerate}
 Let $\Delta \leq \Sym(\Omega')$ be a second group of degree $n' = |\Omega'|$.
 Also let $\varphi\colon \Gamma \rightarrow \Delta$ be a homomorphism given by a list of images for $s \in S$.
 Then the following tasks can be solved in time polynomial in $n$, $n'$ and $|S|$:
 \begin{enumerate}
  \setcounter{enumi}{5}
  \item compute a generating set for $\ker(\varphi) \coloneqq \{\gamma \in \Gamma \mid \gamma^{\varphi} = \id\}$, and
  \item given $\delta \in \Delta$, find $\gamma \in \Gamma$ such that $\gamma^{\varphi} = \delta$ (if it exists).
 \end{enumerate}
\end{theorem}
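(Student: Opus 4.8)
My plan is to reduce all seven tasks to one fundamental construction: a \emph{strong generating set} for $\Gamma$ relative to a point-wise stabilizer chain, computed by the Schreier--Sims algorithm. Fix an ordering $\Omega=\{\alpha_1,\dots,\alpha_n\}$ and the chain $\Gamma=\Gamma^{(0)}\geq\Gamma^{(1)}\geq\dots\geq\Gamma^{(n)}=\{\id\}$, where $\Gamma^{(i)}=\Gamma_{(\{\alpha_1,\dots,\alpha_i\})}$. I would compute, for each $i$, the orbit $\alpha_i^{\Gamma^{(i-1)}}$ together with an explicit right transversal $U_i$ of $\Gamma^{(i)}$ in $\Gamma^{(i-1)}$ (a breadth-first search through the ``Schreier graph'' of the current generators acting on $\Omega$ yields both; note $|U_i|=|\alpha_i^{\Gamma^{(i-1)}}|\leq n$), and then use Schreier's lemma to pass from a generating set of $\Gamma^{(i-1)}$ to one of $\Gamma^{(i)}$. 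Carried out carefully (see below), this produces in polynomial time a strong generating set $S'$ of $\Gamma$ together with all transversals $U_1,\dots,U_n$.

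Given this data, tasks (1)--(5) become routine. For (1), $|\Gamma|=\prod_{i=1}^n|U_i|$, since $[\Gamma^{(i-1)}:\Gamma^{(i)}]=|\alpha_i^{\Gamma^{(i-1)}}|=|U_i|$. For (2) I would \emph{sift} $\gamma$ down the chain: if $\alpha_1^\gamma=\alpha_1^u$ for some $u\in U_1$, replace $\gamma$ by $\gamma u^{-1}$ and recurse in $\Gamma^{(1)}$, otherwise report $\gamma\notin\Gamma$; membership holds iff the sift ends at $\id$. Task (3) needs only a union-find closure of $\Omega$ under the generators, and does not require the chain at all. For (5) I would re-run Schreier--Sims with a base beginning with the elements of $A$; then $S'\cap\Gamma^{(|A|)}$ generates $\Gamma_{(A)}$. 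For (4) I would use the standard minimal-block procedure: for a fixed $\alpha$ and each $\beta$ in its orbit, compute the smallest block containing $\{\alpha,\beta\}$ by the obvious closure (if two points are forced into one block, so are their images under every generator); take a block $B$ of minimum size exceeding $1$ over all $\beta$ (if there is none, $\Gamma$ is primitive and the singleton partition is returned), and output $\FB=\{B^\gamma\mid\gamma\in\Gamma\}$, which is then minimal because $B$ was a smallest non-trivial block through $\alpha$.

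For the homomorphism tasks (6) and (7) I would work inside the ``graph'' of $\varphi$. Set $\mgamma\coloneqq\langle(s,s^\varphi)\mid s\in S\rangle\leq\Sym(\Omega)\times\Sym(\Omega')\leq\Sym(\Omega\sqcup\Omega')$: this is a permutation group of polynomial degree $n+n'$ with a generating set of size $|S|$, and projection to the first component is an isomorphism $\mgamma\to\Gamma$, trivially computable on generators. An element $(\gamma,\gamma^\varphi)\in\mgamma$ lies over $\ker(\varphi)$ iff it fixes $\Omega'$ point-wise, so $\ker(\varphi)$ is the projection to $\Omega$ of $\mgamma_{(\Omega')}$, and (6) follows from (5) applied to $\mgamma$ and $\Omega'$. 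For (7) I would run Schreier--Sims on $\mgamma$ with a base listing all of $\Omega'$ first and then \emph{constructively sift} a partial permutation that agrees with $\delta$ on $\Omega'$: at a base point $\beta\in\Omega'$ its required image $\beta^\delta$ is prescribed, so pick the transversal element realizing $\beta\mapsto\beta^\delta$ (or report $\delta\notin\im\varphi$ if none exists), multiply it in, and continue; the product of the transversal elements chosen while processing $\Omega'$ is an element of $\mgamma$ whose $\Omega'$-component is $\delta$, and its projection to $\Omega$ is the desired $\gamma$.

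The one genuinely delicate step — the reason this is a theorem rather than an exercise — is making Schreier--Sims run in \emph{polynomial} time. The naive recursion via Schreier's lemma multiplies the number of generators by roughly $n$ at each of the $n$ levels, producing $n^{\Omega(n)}$ generators. I would instead follow Sims and build the stabilizer chain incrementally: whenever a new Schreier generator is produced, immediately sift it against the partial chain built so far; if it sifts to the identity it is discarded as carrying no new information, and otherwise it either enlarges a partial orbit or must be inserted as a new generator at some level, which can happen only polynomially often. A careful bookkeeping argument (for which I would refer to \cite{Seress03}) shows that the maintained generating sets and transversals stay of size $\CO(n^2)$ and that the whole computation is polynomial in $n$ and $|S|$; the same analysis with $n$ replaced by $n+n'$ covers the homomorphism tasks.
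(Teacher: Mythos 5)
Your overall route---building a stabilizer chain with transversals via Schreier--Sims, sifting for membership and order, base change for pointwise stabilizers, and working with the group generated by the pairs $(s,s^\varphi)$ inside $\Sym(\Omega\sqcup\Omega')$ for the kernel and preimage tasks---is exactly the standard machinery that the paper invokes by citing \cite{Seress03}; the paper gives no proof of its own. Items (1)--(3) and (5)--(7) of your sketch are correct as stated, and you rightly identify the only delicate point, namely controlling the generator blow-up in Schreier's lemma, which Sims's incremental sifting (or Jerrum's filter) handles.

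Item (4), however, has a genuine gap. In this paper a block system $\FB$ is \emph{minimal} when there is no non-trivial block system strictly coarser than it, equivalently (as the paper states) when the induced action $\Gamma[\FB]$ on the blocks is primitive; this is precisely what standard Luks reduction needs, since the cost of the recursion is charged to the primitive group $\Gamma[\FB]$. Your construction produces the opposite extreme: a smallest non-trivial block $B$ through $\alpha$ is minimal under inclusion, so $\{B^\gamma\mid\gamma\in\Gamma\}$ is a \emph{finest} non-trivial block system, and $\Gamma[\FB]$ need not be primitive. Concretely, for $\Gamma=\langle(1\,2\,3\,4\,5\,6\,7\,8)\rangle$ acting regularly on $8$ points your procedure returns the four blocks of size two, on which $\Gamma$ acts as a regular cyclic group of order four, which is imprimitive; a minimal block system in the paper's sense is the system of two blocks of size four. (Your treatment of the primitive case, returning singletons, is fine, since then the quotient action is $\Gamma$ itself.) The repair is standard and stays polynomial: after finding any non-trivial block system by your Atkinson-style closure, compute generators for the induced action $\Gamma[\FB]$ (images of the $s\in S$ on the blocks), recurse on this quotient, and pull the resulting blocks back to $\Omega$, iterating until the quotient action is primitive; each coarsening step at least halves the number of blocks, so at most $\log_2 n$ iterations are needed.
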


Observe that, by always maintaining generating sets of size at most quadratic in the degree,
we can concatenate a polynomial number of subroutines from the theorem while keeping the polynomial-time bound.
For the remainder of this work, we will typically ignore the role of generating sets and simply refer to groups being the input/output of an algorithm. 
This always means the algorithm performs computations on generating sets of size polynomial in the degree of the group.

\subsection{Luks's Algorithm}
\label{subsec:luks}

\paragraph{The String Isomorphism Problem.}

In order to apply group-theoretic techniques to the Graph Isomorphism
Problem, Luks~\cite{Luks82} introduced a more general problem that
allows to build recursive algorithms along the structure of the
permutation groups involved.  Here, we follow the notation and
terminology used by Babai \cite{Babai16} for describing his
quasi-polynomial-time algorithm for the Graph Isomorphism Problem that
also employs this recursive strategy.

A \defword{string} is a mapping
$\Fx\colon\Omega\rightarrow\Sigma$ where $\Omega$ is a finite
set and $\Sigma$ is also a finite set called the \defword{alphabet}.
Let $\gamma \in \Sym(\Omega)$ be a permutation.
The permutation $\gamma$ can be applied to the string $\Fx$ by defining
\begin{equation}
 \Fx^{\gamma}\colon\Omega\rightarrow\Sigma\colon\alpha\mapsto\Fx\left(\alpha^{\gamma^{-1}}\right).
\end{equation}
Let $\Fy\colon\Omega\rightarrow\Sigma$ be a second string.
The permutation $\gamma$ is an isomorphism from $\Fx$ to $\Fy$, denoted $\gamma\colon \Fx \cong \Fy$, if $\Fx^{\gamma} = \Fy$.
Let $\Gamma \leq \Sym(\Omega)$.
A \defword{$\Gamma$-isomorphism} from $\Fx$ to $\Fy$ is a permutation $\gamma \in \Gamma$ such that $\gamma\colon \Fx \cong \Fy$.
The strings $\Fx$ and $\Fy$ are \defword{$\Gamma$-isomorphic}, denoted $\Fx \cong_\Gamma \Fy$, if there is a $\Gamma$-isomorphism from $\Fx$ to $\Fy$.
The \defword{String Isomorphism Problem} asks, given two strings $\Fx,\Fy\colon\Omega\rightarrow\Sigma$ and a generating set for a group $\Gamma \leq \Sym(\Omega)$, whether $\Fx$ and $\Fy$ are $\Gamma$-isomorphic.
The set of $\Gamma$-isomorphisms from $\Fx$ to $\Fy$ is denoted by
\begin{equation}
 \Iso_\Gamma(\Fx,\Fy) \coloneqq \{\gamma \in \Gamma \mid \Fx^{\gamma} = \Fy\}.
\end{equation}
The set of \defword{$\Gamma$-automorphisms} of $\Fx$ is $\Aut_\Gamma(\Fx) \coloneqq \Iso_\Gamma(\Fx,\Fx)$.
Observe that $\Aut_\Gamma(\Fx)$ is a subgroup of $\Gamma$ and $\Iso_\Gamma(\Fx,\Fy) = \Aut_\Gamma(\Fx)\gamma \coloneqq \{\delta\gamma \mid \delta \in \Aut_\Gamma(\Fx)\}$ for an arbitrary $\gamma \in \Iso_\Gamma(\Fx,\Fy)$.

\begin{theorem}
 \label{thm:reduction-graph-isomorphism-to-string-isomorphism}
 There is a polynomial-time many-one reduction from the Graph Isomorphism Problem to the String Isomorphism Problem.
\end{theorem}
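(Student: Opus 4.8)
The plan is to encode a pair of graphs as a single instance of the String Isomorphism Problem over the domain of \emph{potential edges}, acted on by the induced action of the full symmetric group on the vertex set; this is essentially Luks's original observation. Concretely, given graphs $G$ and $H$, I would first check whether $|V(G)| = |V(H)|$, outputting a fixed ``no''-instance if not, and otherwise relabel both vertex sets to $[n]$, where $n \coloneqq |V(G)|$. Let $\Omega \coloneqq \{ij \mid i,j \in [n],\ i \neq j\}$ be the set of $\binom{n}{2}$ unordered pairs and let $\Sigma \coloneqq \{0,1\}$. Define strings $\Fx,\Fy\colon\Omega\rightarrow\Sigma$ by $\Fx(ij) \coloneqq 1$ iff $ij \in E(G)$ and $\Fy(ij) \coloneqq 1$ iff $ij \in E(H)$. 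Each $\sigma \in S_n$ induces a permutation $\sigma^{(2)} \in \Sym(\Omega)$ via $(ij)^{\sigma^{(2)}} \coloneqq i^{\sigma}j^{\sigma}$, and I set $\Gamma \coloneqq \{\sigma^{(2)} \mid \sigma \in S_n\} \leq \Sym(\Omega)$. A two-element generating set for $\Gamma$ is obtained in time polynomial in $n$ by applying $\sigma \mapsto \sigma^{(2)}$ to a fixed generating set of $S_n$, say the transposition $(1\,2)$ and the $n$-cycle $(1\,2\,\cdots\,n)$.

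For correctness, note that $\sigma \mapsto \sigma^{(2)}$ is a group homomorphism, so $\Gamma$ is indeed a subgroup of $\Sym(\Omega)$. Unwinding the definition of the action of a permutation on a string, $\Fx^{\sigma^{(2)}}(ij) = \Fx\big(i^{\sigma^{-1}}j^{\sigma^{-1}}\big)$, and hence $\Fx^{\sigma^{(2)}} = \Fy$ holds exactly when the bijection $i \mapsto i^{\sigma}$ of $[n]$ maps $E(G)$ onto $E(H)$, i.e., exactly when $\sigma\colon G \cong H$. Therefore $\Iso_\Gamma(\Fx,\Fy) \neq \emptyset$ if and only if $G \cong H$, so the constructed String Isomorphism instance $(\Fx,\Fy,\Gamma)$ is a ``yes''-instance precisely when $G$ and $H$ are isomorphic. (For $n \geq 3$ the homomorphism $\sigma \mapsto \sigma^{(2)}$ is injective, so it even restricts to a bijection between the isomorphisms $G \to H$ and the elements of $\Iso_\Gamma(\Fx,\Fy)$, which is handy if one later wants to recover the whole coset of isomorphisms; the cases $n \leq 2$ are trivial.)

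It remains to check that the reduction is computable in polynomial time: $|\Omega| = O(n^2)$, the strings $\Fx,\Fy$ and the two generators of $\Gamma$ have size $O(n^2)$, and everything above is plain bookkeeping, so the total work is polynomial in $n$. The same template handles the vertex- and arc-colored graphs introduced earlier: take $\Omega$ to be the disjoint union of $[n]$ and the set of ordered pairs $(i,j)$ with $i,j \in [n]$, $i \neq j$, let $\Sigma$ comprise the occurring vertex and arc colors together with a fresh symbol meaning ``non-adjacent'', record at $i$ the vertex color and at $(i,j)$ the arc color if $ij$ is an edge and the ``non-adjacent'' symbol otherwise, and let $S_n$ act diagonally on $\Omega$; then an isomorphism of colored graphs is exactly a $\Gamma$-isomorphism of the associated pair of strings. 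I do not expect a genuinely hard step here — the single pitfall to avoid is picking too large a permutation domain (for instance, letting $S_n$ act on $n$-tuples of vertices), which would make $|\Omega|$ super-polynomial; using pairs keeps it polynomial, and restricting attention to the induced group $\Gamma$ rather than all of $\Sym(\Omega)$ costs nothing, since a string isomorphism is by definition required to lie in $\Gamma$ and therefore corresponds to a genuine permutation of the vertices.
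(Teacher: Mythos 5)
Your construction is exactly the one the paper uses: encode each graph as a $\{0,1\}$-string over $\Omega = \binom{V}{2}$ and let $\Gamma$ be the induced action of $\Sym(V)$ on unordered pairs, so that $\gamma\colon G \cong H$ corresponds precisely to $\gamma^{\varphi}\colon \Fx \cong \Fy$. Your additional remarks on the generating set, the $n \leq 2$ edge cases, and colored graphs are correct but go beyond what the paper spells out; the argument is the same.
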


\begin{figure}
 \centering
 \begin{tikzpicture}
  \node[emptyvertex,label={135:$1$}] (v1) at (0,1.6) {};
  \node[emptyvertex,label={45:$2$}] (v2) at (1.6,1.6) {};
  \node[emptyvertex,label={225:$3$}] (v3) at (0,0) {};
  \node[emptyvertex,label={315:$4$}] (v4) at (1.6,0) {};
  \node at (-0.8,0.8) {$G$};
  
  \node[emptyvertex,label={135:$1$}] (w1) at (5.6,1.6) {};
  \node[emptyvertex,label={45:$2$}] (w2) at (7.2,1.6) {};
  \node[emptyvertex,label={225:$3$}] (w3) at (5.6,0) {};
  \node[emptyvertex,label={315:$4$}] (w4) at (7.2,0) {};
  \node at (4.8,0.8) {$H$};
  
  \path[draw,thick] (v1) edge (v2);
  \path[draw,thick] (v1) edge (v3);
  \path[draw,thick] (v2) edge (v4);
  \path[draw,thick] (v3) edge (v4);
  
  \path[draw,thick] (w1) edge (w2);
  \path[draw,thick] (w1) edge (w4);
  \path[draw,thick] (w2) edge (w3);
  \path[draw,thick] (w3) edge (w4);
  
  \node at (3.6,0.8) {$\cong$};
  
  \node[rotate=-90] at (0.8,-0.75) {{\large $\rightsquigarrow$}};
  \node[rotate=-90] at (6.4,-0.75) {{\large $\rightsquigarrow$}};
  
  \node at (3.6,-1.8) {$\cong_{\Gamma}$};
  
  \draw (-1,-1.5) -- (2.6,-1.5);
  \draw (-1,-2.1) -- (2.6,-2.1);
  \foreach \i in {0,1,2,3,4,5,6}{
   \draw (-1+0.6*\i,-1.5) -- (-1+0.6*\i,-2.1);
  }
  \foreach[count=\i] \e/\a in {12/1,13/1,14/0,23/0,24/1,34/1}{
   \node at (-1.3+0.6*\i,-1.8) {$\a$};
   \node[rotate=90] at (-1.3+0.6*\i,-2.4) {{\color{gray}$\e$}};
  }
  \node at (-1.3,-1.8) {$\Fx$};
  
  \draw (4.6,-1.5) -- (8.2,-1.5);
  \draw (4.6,-2.1) -- (8.2,-2.1);
  \foreach \i in {0,1,2,3,4,5,6}{
   \draw (4.6+0.6*\i,-1.5) -- (4.6+0.6*\i,-2.1);
  }
  \foreach[count=\i] \e/\a in {12/1,13/0,14/1,23/1,24/0,34/1}{
   \node at (4.3+0.6*\i,-1.8) {$\a$};
   \node[rotate=90] at (4.3+0.6*\i,-2.4) {{\color{gray}$\e$}};
  }
  \node at (4.3,-1.8) {$\Fy$};
  
 \end{tikzpicture}
 \caption{Reduction from the Graph Isomorphism Problem to the String Isomorphism Problem. The group $\Gamma$ is the natural induced action of $S_4$ on $\binom{[4]}{2}$.}
 \label{fig:gi-to-si}
\end{figure}
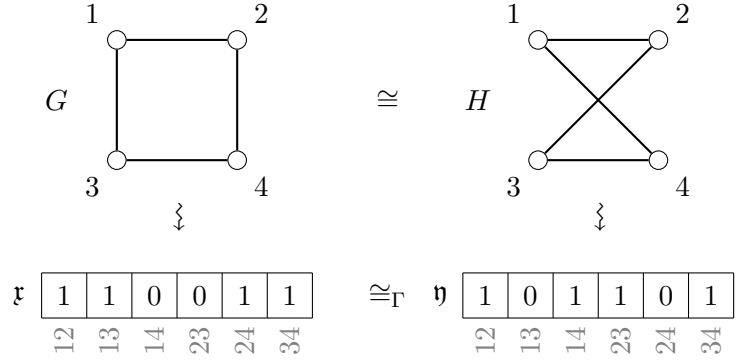

\begin{proof}
 Let $G$ and $H$ be two graphs and assume without loss of generality $V(G) = V(H) =: V$.
 Let $\Omega \coloneqq \binom{V}{2}$ and $\Gamma \leq \Sym(\Omega)$ be the natural action of $\Sym(V)$ on the two-element subsets of $V$.
 Let $\Fx \colon \Omega \rightarrow \{0,1\}$ with $\Fx(vw) = 1$ if and only if $vw \in E(G)$.
 Similarly define $\Fy \colon \Omega \rightarrow \{0,1\}$ with $\Fy(vw) = 1$ if and only if $vw \in E(H)$ (see Figure \ref{fig:gi-to-si}).
 Let $\varphi\colon\Sym(V) \rightarrow \Gamma$ be the natural homomorphism.
 Then $\gamma\colon G \cong H$ if and only if $\gamma^{\varphi} \colon \Fx \cong \Fy$.
\end{proof}

The main advantage of the String Isomorphism Problem is that it naturally allows for algorithmic approaches based on group-theoretic techniques.

\paragraph{Recursion Mechanisms.}

The foundation for the group-theoretic approaches lays in two recursion mechanisms first exploited by Luks \cite{Luks82}.
As before, let $\Fx,\Fy\colon \Omega \rightarrow \Sigma$ be two strings.
For a set of permutations $K \subseteq \Sym(\Omega)$ and a \defword{window} $W \subseteq \Omega$ define
\begin{equation}
 \Iso_K^{W}(\Fx,\Fy) \coloneqq \{\gamma \in K \mid \forall \alpha \in W\colon \Fx(\alpha) = \Fy(\alpha^{\gamma})\}.
\end{equation}
In the following, the set $K$ is always a \defword{coset}, i.e., $K = \Gamma\gamma \coloneqq \{\gamma'\gamma \mid \gamma'\in\Gamma\}$ for some group $\Gamma \leq \Sym(\Omega)$ and a permutation $\gamma \in \Sym(\Omega)$, and the set $W$ is $\Gamma$-invariant.
In this case it can be shown that $\Iso_K^{W}(\Fx,\Fy)$ is either empty or a coset of the group $\Aut_\Gamma^{W}(\Fx) \coloneqq \Iso_\Gamma^{W}(\Fx,\Fx)$.
Hence, the set $\Iso_K^{W}(\Fx,\Fy)$ can be represented by a generating set for $\Aut_\Gamma^{W}(\Fx)$ and a single permutation $\gamma \in \Iso_K^{W}(\Fx,\Fy)$.
Moreover, for $K = \Gamma\gamma$, it holds that
\begin{equation}
 \label{eq:string-isomorphism-alignment}
 \Iso_{\Gamma\gamma}^{W}(\Fx,\Fy) = \Iso_\Gamma^{W}(\Fx,\Fy^{\gamma^{-1}})\gamma.
\end{equation}
Using this identity, it is possible to restrict to the case where $K$ is actually a group.

With these definitions we can now formulate the two recursion mechanisms introduced by Luks \cite{Luks82}.
For the first type of recursion suppose $K = \Gamma \leq \Sym(\Omega)$ is not transitive on $W$ and let $W_1,\dots,W_\ell$ be the orbits of $\Gamma[W]$.
Then the strings can be processed orbit by orbit as follows.
We iterate over all $i \in [\ell]$ and update $K \coloneqq \Iso_K^{W_i}(\Fx,\Fy)$ in each iteration.
In the end, $K = \Iso_\Gamma^{W}(\Fx,\Fy)$.
This recursion mechanism is referred to as \defword{orbit-by-orbit processing}.

The set $\Iso_K^{W_i}(\Fx,\Fy)$ can be computed by making one recursive call to the String Isomorphism Problem over domain size $n_i \coloneqq |W_i|$.
Indeed, using Equation (\ref{eq:string-isomorphism-alignment}), it can be assumed that $K$ is a group and $W_i$ is $K$-invariant.
Then
\begin{equation}
 \Iso_K^{W_i}(\Fx,\Fy) = \{\gamma \in K \mid \gamma[W_i] \in \Iso_{K[W_i]}(\Fx[W_i],\Fy[W_i])\}
\end{equation}
where $\Fx[W_i]$ denotes the induced substring of $\Fx$ on the set $W_i$, i.e., $\Fx[W_i] \colon W_i \rightarrow \Sigma\colon \alpha \mapsto \Fx(\alpha)$ (the string $\Fy[W_i]$ is defined analogously).
So overall, if $\Gamma$ is not transitive, the set $\Iso_\Gamma^{W}(\Fx,\Fy)$ can be computed by making $\ell$ recursive calls over window size $n_i = |W_i|$, $i \in [\ell]$.

For the second recursion mechanism let $\Delta \leq \Gamma$ and let $T$ be a \defword{transversal}\footnote{A set $T \subseteq \Gamma$ is a \defword{transversal} for $\Delta$ in $\Gamma$ if $|T| = |\Gamma|/|\Delta|$ and $\{\Delta\delta \mid \delta \in T\} = \{\Delta\delta \mid \delta \in \Gamma\}$ (such a set always exists).} for $\Delta$ in $\Gamma$.
Then
\begin{equation}
 \Iso_\Gamma^{W}(\Fx,\Fy) = \bigcup_{\delta \in T} \Iso_{\Delta\delta}^{W}(\Fx,\Fy) = \bigcup_{\delta \in T} \Iso_{\Delta}^{W}(\Fx,\Fy^{\delta^{-1}})\delta.
\end{equation}
Luks applied this type of recursion when $\Gamma$ is transitive (on the window $W$), $\FB$ is a minimal block system for $\Gamma$, and $\Delta = \Gamma_{(\FB)}$.
In this case $\Gamma[\FB]$ is a primitive group.
Let $t = |\Gamma[\FB]|$ be the size of a transversal for $\Delta$ in $\Gamma$.
Note that $\Delta$ is not transitive (on the window $W$).
Indeed, each orbit of $\Delta$ has size at most $n/b$ where $b = |\FB|$.
So by combining both types of recursion the computation of $\Iso_\Gamma^{W}(\Fx,\Fy)$ is reduced to $t \cdot b$ many instances of the String Isomorphism Problem over window size $|W|/b$.
This specific combination of types of recursion is referred to as \defword{standard Luks reduction}.
Observe that the time complexity of standard Luks reduction is determined by the size of the primitive group $\Gamma[\FB]$.

\begin{algorithm}
 \caption{Luks's Algorithm: ${\sf StringIso}(\Fx,\Fy,\Gamma,\gamma,W)$}
 \label{alg:luks}
 
 \SetKwInOut{Input}{Input}
 \SetKwInOut{Output}{Output}
 \Input{Strings $\Fx,\Fy\colon \Omega \rightarrow \Sigma$, a group $\Gamma \leq \Sym(\Omega)$, a permutation $\gamma \in \Sym(\Omega)$, and a $\Gamma$-invariant set $W \subseteq \Omega$}
 \Output{$\Iso_{\Gamma\gamma}^W(\Fx,\Fy)$}
 \DontPrintSemicolon
 \BlankLine
 \If{$\gamma \neq 1$}{
  \Return ${\sf StringIso}(\Fx,\Fy^{\gamma^{-1}},\Gamma,1,W)\gamma$\tcc*[r]{{\small Equation (\ref{eq:string-isomorphism-alignment})}}
 }
 \If{$|W| = 1$}{
  \eIf{$\Fx(\alpha) = \Fy(\alpha)$ where $W = \{\alpha\}$}{
   \Return $\Gamma$\;
  }{
   \Return $\emptyset$
  }
 }
 \If{$\Gamma$ is not transitive on $W$}{
  compute orbit $W' \subseteq W$\;
  \Return ${\sf StringIso}(\Fx,\Fy,{\sf StringIso}(\Fx,\Fy,\Gamma,1,W'),W\setminus W')$\;
 }
 compute minimal block system $\FB$ of the action of $\Gamma$ on $W$\;
 compute $\Delta \coloneqq \Gamma_{(\FB)}$\;
 compute transversal $T$ of $\Delta$ in $\Gamma$\;
 \Return $\bigcup_{\delta \in T} {\sf StringIso}(\Fx,\Fy,\Delta,\delta,W)$\;
\end{algorithm}

Overall, Luks's algorithm is formulated in Algorithm \ref{alg:luks}.
The running time of this algorithm heavily depends on the size of the primitive groups involved in the computation.

\paragraph{The Role of Primitive Groups.}

In order to analyse the running time of Luks's algorithm (on a specific input group $\Gamma$) the crucial step is to understand which primitive groups appear along the execution of the algorithm and to bound their size as a function of their degree.
Let us first analyse the situation for GI for graphs of maximum degree $d$.
The crucial observation is that there is a polynomial-time Turing reduction from this problem to the String Isomorphism Problem where the input group $\Gamma$ is contained the class $\mgamma_d$ \cite{Luks82,BabaiL83}.

Let $\Gamma$ be a group.
A \defword{subnormal series} is a sequence of subgroups $\Gamma = \Gamma_0 \trianglerighteq \Gamma_1 \trianglerighteq \dots \trianglerighteq \Gamma_k = \{\id\}$.
The length of the series is $k$ and the groups $\Gamma_{i-1} / \Gamma_{i}$ are the factor groups of the series, $i \in [k]$.
A \defword{composition series} is a strictly decreasing subnormal series of maximal length. 
For every finite group $\Gamma$ all composition series have the same family (considered as a multiset) of factor groups (cf.\ \cite{Rotman99}).
A \defword{composition factor} of a finite group $\Gamma$ is a factor group of a composition series of $\Gamma$.

\begin{Def}
 For $d \geq 2$ let $\mgamma_d$ denote the class of all groups $\Gamma$ for which every composition factor of $\Gamma$ is isomorphic to a subgroup of $S_d$.
\end{Def}

We want to stress the fact that there are two similar classes of groups that have been used in the literature both typically denoted by~$\Gamma_d$.
One of these is the class introduced by Luks~\cite{Luks82} that we denote by $\mgamma_d$, while the other one used in~\cite{BabaiCP82} in particular allows composition factors that are simple groups of Lie type of bounded dimension.

\begin{lemma}[Luks \cite{Luks82}]
 \label{la:gamma-d-closure}
 Let $\Gamma \in \mgamma_d$. Then
 \begin{enumerate}
  \item $\Delta \in \mgamma_d$ for every subgroup $\Delta \leq \Gamma$, and
  \item $\Gamma^{\varphi} \in \mgamma_d$ for every homomorphism $\varphi\colon \Gamma \rightarrow \Delta$.
 \end{enumerate}
\end{lemma}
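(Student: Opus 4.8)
The plan is to prove both parts using the characterization of the $\mgamma_d$ class in terms of composition factors, together with two standard group-theoretic facts: that composition factors behave well under passing to subgroups, and that they behave well under quotients (homomorphic images). Recall that $\Gamma \in \mgamma_d$ means every composition factor of $\Gamma$ is isomorphic to a subgroup of $S_d$. Since ``isomorphic to a subgroup of $S_d$'' is a property of an abstract isomorphism type of a group, the entire argument reduces to controlling the multiset of composition factors of $\Delta$ (for part~1) and of $\Gamma^\varphi$ (for part~2) in terms of those of $\Gamma$.

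For part~1, I would argue that if $\Delta \leq \Gamma$, then every composition factor of $\Delta$ is isomorphic to a composition factor of $\Gamma$. The cleanest way to see this is via a refinement argument: take a composition series $\Delta = \Delta_0 \trianglerighteq \Delta_1 \trianglerighteq \dots \trianglerighteq \Delta_m = \{\id\}$ of $\Delta$, and intersect each term of a composition series of $\Gamma$ with $\Delta$; alternatively, one builds a subnormal series of $\Delta$ whose factors embed into the corresponding factors of $\Gamma$ (this is essentially the subgroup part of the Jordan--Hölder machinery, and it gives that each factor $\Delta_{i-1}/\Delta_i$ is isomorphic to a subquotient of $\Gamma$, hence its simple composition factors appear among those of $\Gamma$). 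Since each composition factor of $\Gamma$ is (isomorphic to) a subgroup of $S_d$, and subgroups of subgroups of $S_d$ are subgroups of $S_d$, every composition factor of $\Delta$ is a subgroup of $S_d$, so $\Delta \in \mgamma_d$. I would cite \cite{Rotman99} for the underlying refinement statement rather than reprove it.

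For part~2, let $\varphi\colon \Gamma \to \Delta$ be a homomorphism and set $N \coloneqq \ker(\varphi)$, so that $\Gamma^\varphi \cong \Gamma/N$. Here the key fact is that the composition factors of $\Gamma/N$ are exactly a sub-multiset of the composition factors of $\Gamma$: take a composition series of $\Gamma$ refining the subnormal series $\Gamma \trianglerighteq N \trianglerighteq \{\id\}$; the factors lying ``above'' $N$ descend, via the correspondence theorem, to a composition series of $\Gamma/N$ with the same factor groups. Hence every composition factor of $\Gamma^\varphi$ is a composition factor of $\Gamma$, thus isomorphic to a subgroup of $S_d$, so $\Gamma^\varphi \in \mgamma_d$. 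Note part~2 subsumes part~1 only for normal subgroups, so both parts are genuinely needed; but the two arguments are parallel.

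The main obstacle, such as it is, is purely expository: one must be careful that $\mgamma_d$ is defined via composition factors \emph{of the abstract group} $\Gamma$, not via its action, so nothing about permutation domains or degrees enters — the membership condition is invariant under abstract group isomorphism, which is what makes both closure properties almost immediate once the Jordan--Hölder-type facts about subgroups and quotients are in hand. I do not expect any real difficulty; the proof is a few lines invoking \cite{Rotman99}.
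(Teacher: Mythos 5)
The paper states this lemma without proof, citing Luks \cite{Luks82}, so there is no in-house argument to compare against; I am evaluating your proposal on its own. Your treatment of part~2 is correct: refining $\Gamma \trianglerighteq \ker\varphi \trianglerighteq \{\id\}$ to a composition series and passing to the quotient via the correspondence theorem shows that the composition factors of $\Gamma^{\varphi} \cong \Gamma/\ker\varphi$ form a sub-multiset of those of $\Gamma$, so membership in $\mgamma_d$ is preserved.

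Part~1, however, rests on a false intermediate claim: it is \emph{not} true that every composition factor of $\Delta \le \Gamma$ is isomorphic to a composition factor of $\Gamma$. For instance $\mathbb{Z}_3$ is a composition factor of $S_4 \le S_5$, while the only composition factors of $S_5$ are $A_5$ and $\mathbb{Z}_2$. The intersection argument you sketch does yield a subnormal series $\Delta_i = \Delta \cap \Gamma_i$ of $\Delta$ whose factors $\Delta_{i-1}/\Delta_i$ embed into $\Gamma_{i-1}/\Gamma_i$; but this series must still be \emph{refined} to a composition series, and the resulting composition factors of $\Delta$ are the composition factors of the groups $\Delta_{i-1}/\Delta_i$, which are subgroups of the simple groups $\Gamma_{i-1}/\Gamma_i \le S_d$. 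So at that point one only knows that every composition factor of $\Delta$ is a composition factor of some subgroup of $S_d$ — your phrase ``subgroups of subgroups of $S_d$ are subgroups of $S_d$'' silently drops the extra passage to composition factors. The missing ingredient, which is the genuine content of the subgroup-closure half of the lemma, is the fact that every composition factor of a subgroup of $S_d$ is itself isomorphic to a subgroup of $S_d$. For abelian factors $\mathbb{Z}_p$ this is immediate ($p \mid d!$ forces $p \le d$), but for a non-abelian simple composition factor $T$ of some $H \le S_d$ the embedding $T \hookrightarrow S_d$ is a nontrivial theorem — one can argue by induction on the degree using the intransitive/imprimitive/primitive trichotomy for $H$, or invoke monotonicity of minimal faithful permutation degree under passage to non-abelian simple quotients. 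Without this step your argument does not close. (Alternatively, some treatments \emph{define} $\mgamma_d$ by requiring a subnormal series whose factors merely embed in $S_d$; under that definition both closure properties really are a few lines by intersecting and refining, but then the equivalence with the composition-factor definition used in this paper is exactly where the hard step reappears.)
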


Now, the crucial advantage of $\mgamma_d$-groups with respect to Luks's algorithm is that the size of primitive $\mgamma_d$-groups is bounded polynomially in the degree for every fixed $d$.

\begin{theorem}
 \label{thm:size-primitive-gamma-d}
 There is a function $f\colon \NN \rightarrow \NN$ such that for every primitive permutation group $\Gamma \in \mgamma_d$ it holds that $|\Gamma| \leq n^{f(d)}$.
 Moreover, the function $f$ can be chosen such that $f(d) = \CO(d)$.
\end{theorem}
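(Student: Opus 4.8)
The plan is to prove the bound $|\Gamma| \le n^{\CO(d)}$ for primitive $\mgamma_d$-groups by appealing to the structure theory of primitive groups, most conveniently packaged in the O'Nan--Scott theorem together with Cameron's classification of the large primitive groups. The key dichotomy is between the ``large'' primitive groups (whose socle is a power of an alternating group $A_m$ acting on $k$-subsets, in a product action) and all the ``small'' ones. For the small groups, there is a classical bound, going back to Babai and sharpened by Cameron, Liebeck, et al., stating that a primitive group of degree $n$ that does \emph{not} have a socle built from large alternating sections has order at most $n^{\CO(\log n)}$ --- in fact $\exp(\CO(\log^2 n))$ --- which is far better than what we need. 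But we cannot simply quote this, because we must get a bound of the form $n^{\CO(d)}$ uniformly, and the $\mgamma_d$ hypothesis has to be used to rule out the large cases and to control the Lie-type composition factors. So the real content is: (i) use $\Gamma \in \mgamma_d$ to show the socle of $\Gamma$ cannot involve $A_m$ for $m > d$, hence the large O'Nan--Scott cases contribute only groups of degree-vs-order ratio controlled by $d$; and (ii) bound the order in the remaining cases.

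\textbf{Step 1: Reduce to almost simple and affine cases via O'Nan--Scott.}
I would invoke the O'Nan--Scott theorem to split a primitive group $\Gamma \le \Sym(\Omega)$ with socle $N$ into the standard types: affine (HA), almost simple (AS), and the product/diagonal types (PA, HS, HC, SD, CD, TW) whose socle is a direct power $N = T^r$ of a nonabelian simple group $T$, with $r \ge 2$ in the compound types. In every non-affine case the socle $T^r$ is a normal subgroup, so its composition factors (namely $r$ copies of $T$) are composition factors of $\Gamma$; by hypothesis $T$ is isomorphic to a subgroup of $S_d$. In the affine case $\Gamma \le \AGL(a,p) = V \rtimes \GL(a,p)$ with $|\Omega| = n = p^a$, and the composition factors of the point stabilizer $\GL(a,p)$ (essentially $\PSL(a,p)$ plus small cyclic pieces) must again embed in $S_d$.

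\textbf{Step 2: Bound the order in each case using $d$.}
In the affine case, $\PSL(a,p) \hookrightarrow S_d$ forces $a$ and $p$ to be bounded in terms of $d$ --- indeed the smallest faithful permutation degree of $\PSL(a,p)$ is roughly $p^{a-1}$, so $p^{a-1} \le d$ unless $a \le 2$, giving $a \le d$ and, with a bit more care, $n = p^a \le (\text{const})^d$; even crudely $|\Gamma| \le |\AGL(a,p)| \le n^{a+1} \le n^{\CO(d)}$. In the almost simple case with socle $T \hookrightarrow S_d$, the group $\Gamma$ has $|\Gamma| \le |T| \cdot |\Out(T)|$; since $T$ embeds in $S_d$ we have $|T| \le d!$, and $\Gamma$ acting faithfully and primitively on $\Omega$ means $\Omega$ can be identified with the cosets of a maximal subgroup, so $n \ge$ the minimal index, and one shows $|\Gamma| \le n^{\CO(d)}$ (for all but finitely many exceptional $T$ this is easy; the action of $A_d$ or $S_d$ on $k$-subsets is the main subcase and there $|\Gamma| \le d! \le (\binom{d}{k})^{\CO(d)} = n^{\CO(d)}$ since $\binom{d}{k} \ge d$ when $1 \le k \le d-1$). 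For the compound types, $N = T^r$ acts (up to a wreath-product twist) with degree $n \ge m^r$ where $m$ is the minimal degree of $T$-action, hence $n \ge 2^r$ so $r \le \log n$; then $|\Gamma| \le |T|^r \cdot r! \cdot |\Out(T)|^r \le (d!)^r \cdot r! \cdot (d!)^r$, and substituting $r \le \log_m n$ together with $m \ge$ something bounded below in terms of $d$ gives $|\Gamma| \le n^{\CO(d)}$ after taking logarithms.

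\textbf{The main obstacle.} The genuinely delicate part is getting the \emph{uniform} $\CO(d)$ in the exponent (rather than, say, $d^{\CO(1)}$ or $2^{\CO(d)}$) across the affine case and the wreath/product-action cases simultaneously; this is where one must be careful about how the minimal faithful permutation degree of a group grows and cannot just use $|T| \le d!$ bluntly. I would handle the affine case by the precise bound on $|\GL(a,p)|$ together with $p^{a-1} \le d$ (from the minimal-degree estimate for $\PSL(a,p)$), and I would handle product action by noting $\Gamma \le H \wr S_r$ with $H$ primitive almost simple of degree $m$ and $n = m^r$, so $\log|\Gamma| \le r\log|H| + r\log r \le r\,\CO(d)\log m + r \log r = \CO(d)\log n + \log(n)\log\log n = \CO(d)\log n$, the last step using that the extra $\log\log n$ term is absorbed once $d$ is at least a constant. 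Finally I would take the maximum of the (finitely many or uniformly bounded) constants appearing to define $f(d) = \CO(d)$, and remark that this argument relies on the Classification of Finite Simple Groups exactly through the O'Nan--Scott theorem and the boundedness of $|\Out(T)|$ relative to $|T|$.
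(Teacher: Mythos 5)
You should first note that the paper does not actually prove this theorem: it quotes it from Babai, Cameron and P\'alfy \cite{BabaiCP82}, with the linear bound $f(d)=\CO(d)$ attributed to later work \cite{LiebeckS99}. So you are reconstructing a genuinely deep CFSG-based result, and your O'Nan--Scott skeleton is the right shape, but your sketch has a real gap exactly where the theorem is hard, namely the affine case. For a primitive affine group $\Gamma\le \mathrm{AGL}(a,p)$ the point stabilizer is an \emph{irreducible subgroup} $\Gamma_0\le \mathrm{GL}(a,p)$, not $\mathrm{GL}(a,p)$ itself; the hypothesis $\Gamma\in\mgamma_d$ constrains only the composition factors of $\Gamma_0$, so you cannot conclude that $\mathrm{PSL}(a,p)$ embeds in $S_d$, and neither $a$ nor $n=p^a$ is bounded in terms of $d$. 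Concretely, the normalizer of an extraspecial $2$-group of order $2^{1+2k}$ acting irreducibly in dimension $a=2^k$ over $\mathrm{GF}(p)$ gives primitive affine groups in $\mgamma_d$ for constant $d$ with $a$ arbitrarily large, so your crude estimate $|\Gamma|\le|\mathrm{AGL}(a,p)|\le n^{a+1}$ only yields exponent $a+1$, which is not $\CO(d)$. What is really needed here is a bound of the form $|\Gamma_0|\le p^{\CO(d)\,a}$ for irreducible linear groups all of whose composition factors embed in $S_d$; that is precisely the core of the Babai--Cameron--P\'alfy argument (via Clifford theory and the analysis of quasi-simple normal subgroups), and it is not recovered by your sketch.

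The product-action estimate also does not close as written: bounding the top group by $r!$ contributes $r\log r$, and with $n=m^r$ this is of order $(\log n)(\log\log n)/\log m$, which is \emph{not} $\CO(d)\log n$ for fixed $d$ and growing $n$ --- the claim that the $\log\log n$ term is ``absorbed once $d$ is at least a constant'' is false. To repair it you must use that the image of $\Gamma$ in $\Sym([r])$ is itself a transitive $\mgamma_d$-group and invoke a bound of the form $c(d)^{\,r}$ with $\log c(d)=\CO(\log d)$ (or set up an induction on the degree), which is again a nontrivial lemma of \cite{BabaiCP82} rather than a triviality. The almost simple case is right in spirit but also needs the CFSG facts relating $|T|$ to its minimal faithful degree (for Lie-type $T\le S_d$ the rank is $\CO(\log d)$ and $|T|$ is at most the minimal degree raised to $\CO(\mathrm{rank})$), not the blunt bound $|T|\le d!$. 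As it stands the proposal is a reasonable roadmap through O'Nan--Scott, but the affine case would have to be replaced entirely and the wreath/product case patched, so it is not yet a proof.
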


The first part of this theorem on the existence of the function $f$
was first proved by Babai, Cameron and Pálfy \cite{BabaiCP82} implying
that Luks's algorithm runs in polynomial time for every fixed number
$d \in \NN$.  Later, it was observed that the function $f$ can
actually be chosen to be linear in $d$ (see, e.g., \cite{LiebeckS99}).

Theorem \ref{thm:size-primitive-gamma-d} allows the analysis of Luks's algorithm when the input
group comes from the class $\mgamma_d$.  Note that the class $\mgamma_d$ is
closed under subgroups and homomorphic images by Lemma
\ref{la:gamma-d-closure} which implies that all groups encountered
during Luks's algorithm are from the class $\mgamma_d$ in case the
input group is a $\mgamma_d$-group.

\begin{cor}
 \label{cor:luks-alg-running-time}
 Luks's Algorithm (Algorithm \ref{alg:luks}) for groups
 $\Gamma \in \mgamma_d$ runs in time $n^{\CO(d)}$.
\end{cor}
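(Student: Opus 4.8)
The plan is to bound the running time $T(n)$ of $\textsf{StringIso}$ on inputs with window size $|W|=n$ by induction on $n$, tracing through the recursion of Algorithm~\ref{alg:luks} and invoking Theorem~\ref{thm:size-primitive-gamma-d} at the single point where a primitive group enters the analysis.

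Two structural observations set things up. First, \emph{every group occurring in a recursive call is again in $\mgamma_d$}: each recursive call is handed either the pointwise block stabilizer $\Delta=\Gamma_{(\FB)}$ or an automorphism group $\Aut_{\Gamma'}^{W'}(\Fx)$ produced by an earlier orbit call, both of which are subgroups of the current group, so closure under subgroups (Lemma~\ref{la:gamma-d-closure}(1)) applies inductively; moreover the only primitive groups whose order we shall need to bound are the quotients $\Gamma'[\FB]$, which are homomorphic images of such groups and hence also in $\mgamma_d$ by Lemma~\ref{la:gamma-d-closure}(2). Second, by Theorem~\ref{thm:permutation-group-library} every non-recursive step of the algorithm — the replacement of $\Fy$ by $\Fy^{\gamma^{-1}}$, the computation of orbits, of a minimal block system, of $\Delta=\Gamma_{(\FB)}$, of a transversal of $\Delta$ in $\Gamma$, and of the running intersection/union of cosets — costs $\mathrm{poly}(n)$, provided all generating sets are kept of polynomial size.

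Now for the recurrence. It is convenient to bound two quantities: $S(n)$, the worst-case cost when $\Gamma$ is transitive on $W$ (or $|W|\le 1$), and $T(n)$, the worst-case cost in general. Unrolling the orbit-by-orbit branch shows that a general instance on a window of size $n$ reduces to a sequence of instances, one per orbit encountered, with these orbits pairwise disjoint; hence
\[
  T(n)\ \le\ n\cdot\mathrm{poly}(n)\ +\ \max\Big\{\textstyle\sum_i S(n_i)\ :\ \sum_i n_i\le n\Big\}.
\]
For the transitive case, let $\FB$ be the computed minimal block system and $b=|\FB|\ge 2$, so the blocks have size $n/b$; then $\Gamma[\FB]$ is primitive of degree $b$ and lies in $\mgamma_d$, so the transversal has size $t=|\Gamma[\FB]|\le b^{f(d)}$ with $f(d)=\CO(d)$ by Theorem~\ref{thm:size-primitive-gamma-d}. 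The algorithm makes $t$ recursive calls, each with group $\Delta=\Gamma_{(\FB)}$, which fixes every block and is therefore intransitive on $W$ with all orbits of size $\le n/b$; feeding each such call into the bound for $T$ above, using the extra knowledge that its orbits are $\le n/b$, gives
\[
  S(n)\ \le\ t\cdot n\cdot\mathrm{poly}(n)\ +\ t\cdot\max\Big\{\textstyle\sum_j S(m_j)\ :\ m_j\le n/b,\ \sum_j m_j\le n\Big\},\qquad t\le b^{\CO(d)},\ 2\le b\le n.
\]

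Finally I would solve this by induction, claiming $S(m)\le m^{a}$ for $a=\CO(d)$ with a sufficiently large hidden constant (small $m$ being absorbed into a constant factor). Using the elementary inequality $\sum_j m_j^{a}\le(\max_j m_j)^{a-1}\sum_j m_j$ together with $m_j\le n/b$, the sum in the transitive bound is at most $n^{a}/b^{a-1}$, so $t\cdot n^{a}/b^{a-1}\le b^{f(d)-a+1}n^{a}\le n^{a}/2$ once $a$ exceeds $f(d)+2$; choosing $a$ also larger than the degree of $\mathrm{poly}(n)$ plus $f(d)$ makes the $t\cdot n\cdot\mathrm{poly}(n)$ term at most $n^{a}/2$ as well (here $t\le b^{f(d)}\le n^{f(d)}$). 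Hence $S(n)\le n^{a}=n^{\CO(d)}$, and plugging this into the displayed bound for $T$ yields $T(n)\le n\cdot\mathrm{poly}(n)+\sum_i n_i^{a}\le n\cdot\mathrm{poly}(n)+n^{a}=n^{\CO(d)}$, which is the assertion. I do not see any genuine obstacle here: the hard input to the argument (bounding the order of primitive $\mgamma_d$-groups) has been isolated into Theorem~\ref{thm:size-primitive-gamma-d}, and what remains is the disciplined bookkeeping needed to carry that bound through Luks's recursion — the only point deserving a moment's care is that a chain of orbit reductions shrinks the window only cumulatively, which is precisely why it is cleaner to track $S$ and $T$ separately and to absorb the intransitive overhead as a single $n\cdot\mathrm{poly}(n)$ term.
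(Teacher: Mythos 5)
Your proof is correct and takes essentially the same route as the paper's own proof sketch: both analyze the recursion tree generated by orbit-by-orbit processing and standard Luks reduction, use Lemma \ref{la:gamma-d-closure} to keep all encountered groups in $\mgamma_d$, bound the transversal size by $b^{\CO(d)}$ via Theorem \ref{thm:size-primitive-gamma-d}, and charge $\mathrm{poly}$ work per node via Theorem \ref{thm:permutation-group-library}. Your explicit two-function recurrence and convexity estimate simply spell out the bookkeeping that the paper's sketch (its recurrences for the leaf-count $f(n)$) leaves implicit.
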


\begin{proofsketch}
  In order to analyze the running time of Algorithm \ref{alg:luks} let
  $f(n)$ denote the maximal number of leaves in a recursion tree for a
  $\mgamma_d$-group where $n \coloneqq |W|$ denotes the window size.
  For $n = 1$ it is easy to see that $f(n) = 1$.  Suppose $\Gamma$ is
  not transitive on $W$ and let $n_1 \coloneqq |W'|$ be the size of an
  orbit $W' \subseteq W$.  Then
 \begin{equation}
  \label{eq:recursion-luks-orbits}
  f(n) \leq f(n_1) + f(n - n_1).
 \end{equation}
 Finally, if $\Gamma$ is transitive on $W$, the algorithm computes a minimal block system $\FB$ of $\Gamma[W]$ and performs standard Luks reduction.
 In this case the algorithm performs $t\cdot b$ recursive calls over window size $n/b$ where $t = |\Gamma[\FB]|$.
 Hence,
 \begin{equation}
  \label{eq:recursion-luks-blocks}
  f(n) \leq t \cdot b \cdot f(n/b).
 \end{equation}
 Since $|\Gamma[\FB]|$ is a primitive $\mgamma_d$-group it holds that $t = b^{\CO(d)}$ by Theorem \ref{thm:size-primitive-gamma-d}.
 Combining the bound on $t$ with Equation \eqref{eq:recursion-luks-orbits} and \eqref{eq:recursion-luks-blocks} gives $f(n) = n^{\CO(d)}$.
 Also, each node of the recursion tree only requires computation time polynomial in $n$ and the number of children in the recursion tree (cf.\ Theorem \ref{thm:permutation-group-library}).
 Overall, this gives the desired bound on the running time.
\end{proofsketch} 

In combination with the polynomial-time Turing reduction from the Graph Isomorphism Problem for graphs of maximum degree $d$ to the String Isomorphism Problem for $\mgamma_d$-groups \cite{Luks82,BabaiL83} the former problem can be solved in the same running time $n^{\CO(d)}$.

\medskip

Unfortunately, for the general String Isomorphism Problem, the situation is more complicated.
As a very simple example, symmetric and alternating groups $S_n$ and $A_n$ are primitive groups of exponential size (which means that Luks's algorithm performs a brute-force search over all elements of the group).
However, even in the case of arbitrary input groups, the recursion techniques of Luks are still a powerful tool.
The main reason is that large primitive groups, which build the bottleneck cases for Luks's algorithm, are quite rare and well understood.

Let $m \in \NN$, $t \leq \frac{m}{2}$ and denote $\binom{[m]}{t}$ to be the set of all $t$-element subsets of $[m]$.
Let $S_m^{(t)} \leq \Sym(\binom{[m]}{t})$ denote the natural induced action of $S_m$ on the set $\binom{[m]}{t}$.
Similarly, let $A_m^{(t)} \leq \Sym(\binom{[m]}{t})$ denote the natural induced action of $A_m$ on the set $\binom{[m]}{t}$.
Following Babai \cite{Babai16}, we refer to these groups as the \defword{Johnson groups}.

The classification of large primitive groups by Cameron \cite{Cameron81} states that a primitive permutation group of order $|\Gamma| \geq n^{1+\log n}$ is necessarily a so-called \defword{Cameron group} which involves a large Johnson group.
Here, we only state the following slightly weaker result which is sufficient for most algorithmic purposes.

\begin{theorem}
 \label{thm:luks-obstacles}
 Let $\Gamma \leq \Sym(\Omega)$ be a primitive group of order $|\Gamma| \geq n^{1+\log n}$ 
 where $n$ is greater than some absolute constant.
 Then there is a polynomial-time algorithm computing a normal subgroup $N \unlhd \Gamma$ of index $|\Gamma:N| \leq n$ and an $N$-invariant equipartition $\FB$
 such that $N[\FB]$ is permutationally equivalent to $A_m^{(t)}$ for some $m \geq \log n$.
\end{theorem}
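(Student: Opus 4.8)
\begin{proofsketch}
As recalled above, Cameron's classification of large primitive permutation groups~\cite{Cameron81} applies: the hypothesis $|\Gamma|\ge n^{1+\log n}$ forces $\Gamma$ to be a Cameron group. Explicitly --- assuming for simplicity $t<m/2$, the case $t=m/2$ needing only a bounded adjustment noted below --- there are integers $m,t,k$ with $1\le t<m/2$ and $n=\binom{m}{t}^k$, an identification of $\Omega$ with $\Delta^k$ where $\Delta\coloneqq\binom{[m]}{t}$, and a product action of $S_m^{(t)}\wr S_k$ on $\Omega=\Delta^k$ under which $\Gamma\le S_m^{(t)}\wr S_k$ and $\Soc(\Gamma)=(A_m^{(t)})^k$ acts coordinate-wise. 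Comparing the hypothesised bound $|\Gamma|\ge n^{1+\log n}$ with the elementary estimate $|\Gamma|\le|S_m^{(t)}\wr S_k|=(m!)^k\,k!$ yields $m\ge\log n$ once $n$ exceeds an absolute constant.

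Set $N\coloneqq\Soc(\Gamma)$, and let $\FB$ be the partition of $\Omega=\Delta^k$ into the fibres $\{\delta\}\times\Delta^{k-1}$, for $\delta\in\Delta$, of the projection onto the first coordinate. The socle is characteristic in $\Gamma$, so $N\unlhd\Gamma$. Since $N=(A_m^{(t)})^k$ acts coordinate-wise, $\FB$ is $N$-invariant; it has $\binom{m}{t}$ blocks, and $N$ acts on them as its first factor does, namely as the natural action of $A_m$ on $\binom{[m]}{t}$. Hence $N[\FB]$ is permutationally equivalent to $A_m^{(t)}$ with $m\ge\log n$, as required. For the index, $(A_m^{(t)})^k$ is normal in $S_m^{(t)}\wr S_k$ with quotient the wreath product $\mathbb Z_2\wr S_k$, of order $2^k\,k!$; thus $\Gamma/N$ embeds into $\mathbb Z_2\wr S_k$ and $|\Gamma:N|\le 2^k\,k!\le(2k)^k$. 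Because $\binom{m}{t}\ge m\ge\log n$, we have $n\ge(\log n)^k$, so $k\le\log n/\log\log n$ and therefore $(2k)^k\le(\log n)^k\le n$ once $n$ exceeds an absolute constant. (If $t=m/2$, the component of the wreath product can be twice as large, enlarging the index by a further factor of at most $2^k$; this does not affect the estimate.)

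The argument is constructive. We first compute $|\Gamma|$ to check the hypothesis (Theorem~\ref{thm:permutation-group-library}). The socle $N=\Soc(\Gamma)$ and its decomposition $N=S_1\times\dots\times S_k$ into simple direct factors are then computed in polynomial time by the permutation-group machinery used throughout Babai's algorithm~\cite{Babai16} --- namely, computing a composition series of $\Gamma$ and constructively recognising its alternating factors. The block system $\FB$ is recovered as the orbit partition of $S_2\times\dots\times S_k$ on $\Omega$ (these factors fix the first coordinate and act transitively on the remaining ones), which is computable by Theorem~\ref{thm:permutation-group-library}; and constructive recognition of $S_1$ in its induced action on $\FB$ --- a $t$-subset action, by Cameron's theorem --- exhibits the permutational equivalence with $A_m^{(t)}$ and in particular yields $m\ge\log n$.

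The single genuinely hard ingredient, invoked as a black box, is group-theoretic: Cameron's classification of the primitive groups of order at least $n^{1+\log n}$, together with its algorithmic counterparts --- polynomial-time computation of the socle and constructive recognition of large alternating groups --- all of which ultimately rest on the classification of finite simple groups. Everything downstream of these facts is routine bookkeeping about the product action, exactly as in Babai's treatment~\cite{Babai16}.
\end{proofsketch}
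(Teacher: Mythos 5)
The paper provides no proof of this theorem: the group-theoretic content is attributed to \cite{Cameron81,Maroti02} and the algorithmic content to \cite{BabaiLS87} (and Babai's thesis). Your sketch is a correct and essentially canonical expansion of those references: identify $\Gamma$ as a Cameron group, take $N=\Soc(\Gamma)=(A_m^{(t)})^k$, take $\FB$ to be the fibres of one coordinate of the product action, observe that $N[\FB]$ is $A_m^{(t)}$, and bound the index. Your chain $|\Gamma:N|\le 2^k k!\le(2k)^k\le\binom{m}{t}^k=n$ goes through because $\binom{m}{t}\ge m\ge\log n\ge 2k$ once $\binom{m}{t}\ge 4$, which is automatic since simplicity of $A_m$ forces $m\ge 5$; and the $t=m/2$ adjustment only multiplies the index by an extra $2^k$, which the same estimate absorbs. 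I see no gap. One cosmetic point: constructive recognition recovers the integer $m$, whereas $m\ge\log n$ is the separate arithmetic comparison you derived earlier, so phrasing recognition as ``yielding'' that inequality slightly conflates two steps; this does not affect correctness.
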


The mathematical part of this theorem follows from \cite{Cameron81,Maroti02} whereas the algorithmic part is resolved in \cite{BabaiLS87} (see also \cite[Theorem 3.2.1]{Babai15}).
The theorem exactly characterizes the obstacle cases of Luks's algorithm and can be seen as the starting point for Babai's algorithm solving the String Isomorphism Problem in quasi-polynomial time \cite{Babai16}.

\subsection{Babai's Algorithm}

Next, we give a brief overview on the main ideas of Babai's quasi-polynomial-time algorithm for the String Isomorphism Problem.
By Theorem \ref{thm:reduction-graph-isomorphism-to-string-isomorphism} this also gives an algorithm for the Graph Isomorphism Problem with essentially the same running time.

As already indicated above the basic strategy of Babai's algorithm is to follow standard Luks recursion until the algorithm encounters an obstacle group which, by Theorem \ref{thm:luks-obstacles}, may be assumed to be a Johnson group.
In order to algorithmically handle the case of Johnson groups Babai’s algorithm utilizes several subroutines based on both group-theoretic techniques and combinatorial approaches like the Weisfeiler-Leman algorithm.

Let us start with discussing the group-theoretic subroutines, specifically the \defword{Local Certificates Routine} which is based on two group-theoretic statements, the \defword{Unaffected Stabilizers Theorem} and the \defword{Affected Orbit Lemma}.

Recall that for a set $M$ we denote by $\Alt(M)$ the alternating group acting with its standard action on the set $M$.
Moreover, following Babai \cite{Babai16}, we refer to the groups $\Alt(M)$ and $\Sym(M)$ as the \defword{giants} where $M$ is an arbitrary finite set.
Let $\Gamma \leq \Sym(\Omega)$.
A \defword{giant representation} is a homomorphism $\varphi\colon \Gamma \rightarrow S_k$ such that $\Gamma^{\varphi} \geq A_k$.
Observe that Johnson groups naturally admit giant representations and thus, the obstacle cases from Theorem \ref{thm:luks-obstacles} also have a giant representation (this is a main feature of an obstacle exploited algorithmically).

Given a string $\Fx\colon \Omega \rightarrow \Sigma$, a group $\Gamma \leq \Sym(\Omega)$ and a giant representation $\varphi\colon \Gamma \rightarrow S_k$, the aim of the Local Certificates Routine is to determine whether $(\Aut_\Gamma(\Fx)^{\varphi}) \geq A_k$ and to compute a meaningful certificate in either case.
To achieve this goal the central tool is to split the set $\Omega$ into \defword{affected} and \defword{non-affected} points.

\begin{Def}[Affected Points, Babai \cite{Babai16}]
 \label{def:affected-points}
 Let $\Gamma \leq \Sym(\Omega)$ be a group and $\varphi\colon\Gamma \rightarrow S_k$ a giant representation.
 Then an element $\alpha \in \Omega$ is \defword{affected by $\varphi$} if $\Gamma_\alpha^{\varphi} \not\geq A_k$.
\end{Def}

We remark that, if $\alpha \in \Omega$ is affected by $\varphi$, then every element in the orbit $\alpha^{\Gamma}$ is affected by $\varphi$.
Hence, the set $\alpha^\Gamma$ is called an \defword{affected orbit} (with respect to $\varphi$).

Let $\Delta \leq \Gamma$ be a subgroup.
We define $\Aff(\Delta,\varphi) \coloneqq \{\alpha \in \Omega \mid \Delta_\alpha^{\varphi} \not\geq A_k\}$.
Observe that, for $\Delta_1 \leq \Delta_2 \leq \Gamma$ it holds that $\Aff(\Delta_1,\varphi) \supseteq \Aff(\Delta_2,\varphi)$.

\begin{algorithm}
 \caption{${\sf LocalCertificates}(\Fx,\Gamma,\varphi)$}
 \label{alg:local-certificates}
 
 \SetKwInOut{Input}{Input}
 \SetKwInOut{Output}{Output}
 \Input{A string $\Fx\colon \Omega \rightarrow \Sigma$, a group $\Gamma \leq \Sym(\Omega)$, and a giant representation $\varphi\colon \Gamma \rightarrow S_k$ such that $k \geq \max\{8,2 + \log_2 n\}$}
 \Output{A non-giant $\Lambda \leq S_k$ with $(\Aut_\Gamma(\Fx))^{\varphi} \leq \Lambda$ or $\Delta \leq \Aut_\Gamma(\Fx)$ with $\Delta^{\varphi} \geq A_k$.}
 \DontPrintSemicolon
 \BlankLine
 $W_0 \coloneqq \emptyset$\;
 $\Gamma_0 \coloneqq \Gamma$\;
 $i \coloneqq 0$\;
 \While{$\Gamma_i^{\varphi} \geq A_k \textup{ \bfseries and } W_i \neq \Aff(\Gamma_i,\varphi)$}{
  $W_{i+1} \coloneqq \Aff(\Gamma_i, \varphi)$\;
  $N \coloneqq \ker(\varphi|_{\Gamma_i})$\;
  $\Gamma_{i+1} \coloneqq \emptyset$\;
  \For{$\gamma \in \Gamma_i^{\varphi}$}{
   compute $\bar \gamma \in \varphi^{-1}(\gamma)$\;
   $\Gamma_{i+1} \coloneqq \Gamma_{i+1} \cup \Aut_{N\bar \gamma}^{W_{i+1}}(\Fx)$
  }
  $i \coloneqq i+1$\;
 }
 \eIf{$\Gamma_i^{\varphi} \not\geq A_k$}{
  \Return $\Gamma_i^{\varphi}$\;
 }{
  \Return $(\Gamma_i)_{(\Omega \setminus W_i)}$
 }
\end{algorithm}

The Local Certificates Routine is described in Algorithm \ref{alg:local-certificates}.
The algorithm computes a sequence of groups $\Gamma = \Gamma_0 \geq \Gamma_1 \geq \dots \geq \Gamma_i$ as well as a sequence of windows $\emptyset = W_0 \subseteq W_1 \subseteq \dots \subseteq W_i$.
Throughout, the algorithm maintains the property that $\Gamma_i = \Aut_\Gamma^{W_i}(\Fx)$, i.e., the algorithm tries to ``approximate'' the automorphism group $\Aut_\Gamma(\Fx)$ taking larger and larger substrings into account.
Observe that $W_{i+1}$ is $\Gamma_i$-invariant and thus, $\Aut_\Gamma(\Fx) \leq \Gamma_{i+1}$.

When growing the window $W_i$ the algorithm adds those points $\alpha \in \Omega$ that are affected (with respect to the current group $\Gamma_i$) and computes the group $\Gamma_{i+1}$ using recursion.
The while-loop is terminated as soon as $\Gamma_i^{\varphi} \not\geq A_k$ or the window $W_i$ stops growing.
In the first case the algorithm returns $\Lambda \coloneqq \Gamma_i^{\varphi}$ which clearly gives the desired outcome.
In the second case the algorithm returns $\Delta \coloneqq (\Gamma_i)_{(\Omega \setminus W_i)}$.
It is easy to verify that $\Delta \leq \Aut_\Gamma(\Fx)$ since all points inside the window $W_i$ are taken into account by the definition of group $\Gamma_i$ and all points outside of $W_i$ are fixed.
The key insight is that $\Delta$ contains a large number of automorphisms, more specifically, $\Delta^{\varphi} \geq A_k$.
This is guaranteed by the \defword{Unaffected Stabilizers Theorem}, one of the main conceptual contributions of Babai's algorithm.

\begin{theorem}[Unaffected Stabilizers Theorem, Babai \cite{Babai16}]
 \label{thm:unaffected-stabilizer-babai}
 Let $\Gamma \leq \Sym(\Omega)$ be a permutation group of degree $n$ and let $\varphi\colon \Gamma \rightarrow S_k$ be a giant representation such that $k > \max\{8,2 + \log n\}$.
 Let $D \subseteq \Omega$ be the set of elements not affected by $\varphi$.
 
 Then $(\Gamma_{(D)})^{\varphi} \geq A_k$. In particular $D \neq \Omega$, that is, at least one point is affected by $\varphi$.
\end{theorem}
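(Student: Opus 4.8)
The plan is to argue by contradiction: suppose $D$ is the set of unaffected points and that $(\Gamma_{(D)})^\varphi \not\geq A_k$. Write $N \coloneqq \ker\varphi$, so that $\Gamma/N \cong \Gamma^\varphi \geq A_k$. The key structural fact to exploit is that each individual point $\alpha \notin D$ is affected, meaning $\Gamma_\alpha^\varphi \not\geq A_k$; since $\Gamma^\varphi \geq A_k$ and $A_k$ is simple for $k \geq 5$, the image $\Gamma_\alpha^\varphi$ is then a subgroup of $S_k$ that does \emph{not} contain $A_k$, hence (by a classical bound, e.g.\ Bochert's theorem or a Babai-type primitivity bound) has index at least $k$ in $\Gamma^\varphi$; in fact one wants the stronger statement that $\Gamma_\alpha^\varphi$ misses ``most'' of $A_k$. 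The engine of the proof is a counting/covering argument: the orbit $\alpha^\Gamma$ has size $|\Gamma : \Gamma_\alpha| = |\Gamma^\varphi : \Gamma_\alpha^\varphi| \cdot |\text{something in } N|$, and by summing over a system of orbit representatives of the affected orbits one controls how the non-giant images $\Gamma_\alpha^\varphi$ can collectively ``cover'' $A_k$.

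Here are the steps I would carry out. First, reduce to the transitive case on an affected orbit: it suffices to show that for a single affected orbit $\Delta \subseteq \Omega$, the pointwise stabilizer $(\Gamma_{(\Delta)})^\varphi$ is still ``close to'' a giant in a quantified sense, and then iterate over the (few) affected orbits. Second, invoke the fundamental observation that if $\alpha$ is affected then the subgroup $\Gamma_\alpha^\varphi \leq S_k$ not containing $A_k$ must, by the structure of subgroups of $A_k$ of small index, move at least $k - (k{-}1)/\ldots$ — more precisely, one uses that a proper subgroup of $A_k$ has index $\geq k$, and Babai's sharper input: the minimal degree (number of non-fixed points) of $\Gamma_\alpha^\varphi$ acting on $[k]$ is large, bounded below by something like $k/2$ unless the group is essentially a point-stabilizer $A_{k-1}$ or a small-index intransitive/imprimitive subgroup. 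Third — the heart of the matter — set up the inequality: because $k > 2 + \log n$, we have $2^{k-2} > n \geq |\Omega| \geq \sum_{\text{aff.\ orbits}} |\alpha^\Gamma|$, and each affected orbit contributes size $|\Gamma^\varphi : \Gamma_\alpha^\varphi| \cdot |N : N_\alpha| \geq |\Gamma^\varphi : \Gamma_\alpha^\varphi|$, which via the minimal-degree bound is at least roughly $\binom{k}{k/2}$ or at least $2^{k/2}$ — and one must show these orbit sizes are too large to all fit inside $n < 2^{k-2}$ unless the number of affected points (hence affected orbits) is so small that the pointwise stabilizer of all of them still surjects onto $A_k$.

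The main obstacle I anticipate is the precise quantitative bound in the third step: one needs a clean lower bound on $|\Gamma^\varphi : \Gamma_\alpha^\varphi|$ valid for \emph{every} affected point, strong enough that $\log n < k-2$ forces the total affected set to be small, yet the argument must still conclude that fixing that small affected set pointwise leaves a subgroup mapping \emph{onto} a giant (not merely onto a large subgroup). This is where Babai's real innovation lies — translating ``$\Gamma_\alpha^\varphi$ is a non-giant'' into ``$\alpha$ lies in an orbit of size $> $ (a function of $k$ exceeding $n$'s budget)'' and then using that $A_k$ cannot be covered by a bounded union of cosets of proper subgroups. I would organize the final combinatorial step as: let $D' \coloneqq \Omega \setminus D$ be the affected points; if $(\Gamma_{(D)})^\varphi \not\geq A_k$ then some point of $D'$ is needed to ``kill'' the rest of $A_k$, but then by minimality there is an affected orbit whose stabilizer-image index exceeds $n$, contradicting transitivity of that orbit inside $\Omega$. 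The statement $D \neq \Omega$ (at least one point affected) is then immediate, since $\Gamma_{(\Omega)} = \{\id\}$ has image $\{\id\} \not\geq A_k$ as $k \geq 5$.
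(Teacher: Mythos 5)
The paper does not prove this theorem; it is cited directly from Babai's work, so there is no internal proof to compare against. Evaluated on its own terms against what is known of Babai's argument, your proposal has a genuine gap, and in fact it misdirects the combinatorial pressure of the hypothesis $k > 2 + \log n$.

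The main flaw is that the orbit-counting engine in your third step does not run. A proper subgroup of $A_k$ that is not a giant need only have index $\geq k$ in $A_k$ — point stabilizers $A_{k-1}$ are the canonical example — and there is no general lower bound of the order $2^{k/2}$ or $\binom{k}{k/2}$ on $|\Gamma^\varphi : \Gamma_\alpha^\varphi|$ for affected $\alpha$. You flag this exception yourself (``unless the group is essentially a point-stabilizer $A_{k-1}$\dots''), but that exceptional case is not rare; it is the generic way a point becomes affected, and it produces affected orbits of size as small as $k \approx \log n$, which fit comfortably under the budget $n < 2^{k-2}$. So no amount of summing affected-orbit sizes forces the affected set to be small (indeed, for a Johnson group $A_m^{(t)}$ acting on $\binom{[m]}{t}$ with the natural giant representation, \emph{every} point is affected and $D = \emptyset$, yet the theorem holds vacuously). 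More fundamentally, the conclusion of the theorem is a statement about $\Gamma_{(D)}$, the pointwise stabilizer of the \emph{unaffected} set, and the delicate issue is that even though each $\Gamma_\alpha$ with $\alpha \in D$ individually surjects onto a giant, their intersection could a priori have a small $\varphi$-image. Your argument never engages with why the intersection stays large; bounding the number or size of affected orbits does not address this, and the final step of your sketch (``some point of $D'$ is needed to kill the rest of $A_k$\dots contradiction with transitivity'') is a conflation of $D$ with its complement.

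Babai's actual proof of the Unaffected Stabilizers Theorem is not a covering or counting argument at all; it is structural. After reducing to transitive $\Gamma$ and descending through a chain of block systems, the heart of the matter is a lemma about primitive groups admitting a giant representation with $n < 2^{k-2}$, proved via the O'Nan--Scott classification and Schreier's hypothesis (solvability of outer automorphism groups of finite simple groups, a consequence of CFSG). The bound $k > 2 + \log n$ enters there, to exclude certain O'Nan--Scott types of large primitive groups — not to bound affected-orbit sizes. Your observation that the ``in particular $D \neq \Omega$'' clause follows from the main conclusion because $\Gamma \leq \Sym(\Omega)$ acts faithfully, so $\Gamma_{(\Omega)} = \{\id\}$, is correct, but it presupposes the main conclusion, which the counting sketch does not establish.
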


Overall, this gives the correctness of the Local Certificates Routine.
So let us analyse the running time.
Here, the crucial property is that the orbits of $N[W]$ are small which guarantees efficient standard Luks reduction when computing $\Gamma_{i+1}$. 

\begin{lemma}[Affected Orbit Lemma, Babai \cite{Babai16}]
 \label{la:kernel-affected-orbits}
 Let $\Gamma \leq \Sym(\Omega)$ be a permutation group and suppose $\varphi\colon \Gamma \rightarrow S_k$ is a giant representation for $k \geq 5$.
 Suppose $A \subseteq \Omega$ is an affected orbit of $\Gamma$ (with respect to $\varphi$). Then every orbit of $\ker(\varphi)$ in $A$ has size at most $|A|/k$.
\end{lemma}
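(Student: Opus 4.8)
The plan is to work inside the affected orbit $A$ and exploit the fact that $\ker(\varphi)$ is a normal subgroup of $\Gamma$ whose quotient $\Gamma^{\varphi}$ contains $A_k$. First I would fix a point $\alpha \in A$. Since $A$ is an orbit of $\Gamma$, the orbit of $\ker(\varphi)$ through $\alpha$ is a block of the transitive action of $\Gamma$ on $A$ — this is the standard fact that the orbits of a normal subgroup form a block system for the whole group. Let $\FB$ denote this block system on $A$; write $b$ for the number of blocks, so each orbit of $\ker(\varphi)$ in $A$ has size exactly $|A|/b$, and it suffices to show $b \geq k$. The group acting on $\FB$ is a quotient of $\Gamma$ in which $\ker(\varphi)$ acts trivially, so $\Gamma[\FB]$ is a quotient of $\Gamma^{\varphi}$; in particular $\Gamma[\FB]$ is a transitive group that contains the image of $A_k$, hence is a transitive quotient of a group containing $A_k$ (which, since $A_k$ is simple for $k \geq 5$, is either trivial or contains a copy of $A_k$ acting transitively).

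Next I would rule out the trivial case: if $\Gamma[\FB]$ were trivial then $b = 1$, meaning $\ker(\varphi)$ is transitive on $A$, so $(\ker\varphi)_\alpha$ has index $|A|$ in $\ker\varphi$; but more to the point, $\Gamma_\alpha$ would then surject onto $\Gamma[\FB]$'s full preimage and a short computation with $\Gamma = \Gamma_\alpha \cdot \ker(\varphi)$ would give $\Gamma_\alpha^{\varphi} = \Gamma^{\varphi} \geq A_k$, contradicting that $\alpha$ is affected (Definition \ref{def:affected-points}). So $\Gamma[\FB]$ is nontrivial, hence contains a transitive copy of $A_k$ (using simplicity of $A_k$ for $k \geq 5$ to see that every nontrivial quotient of a group containing $A_k$ still contains $A_k$). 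A transitive action of a group containing $A_k$ has degree at least $k$: the smallest faithful transitive action of $A_k$ for $k \geq 5$ has degree $k$, and since $A_k$ is simple any transitive action on which it acts nontrivially is faithful and thus has degree $\geq k$. (The case $k = 5,6$ needs the minimal-degree bound to be $k$, which holds here since $5,6 < $ the relevant thresholds; this is where I'd be most careful, as $A_6$ has exceptional behaviour, but its minimal transitive degree is still $6$.) Therefore $b \geq k$, giving $|A|/b \leq |A|/k$.

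The main obstacle I anticipate is making the link "$\alpha$ is affected $\Rightarrow$ $\ker(\varphi)$ is not transitive on $A$" fully rigorous, i.e. the factorization argument $\Gamma_\alpha^{\varphi} = \Gamma^{\varphi}$ when $\ker(\varphi)$ is transitive on $A$. The point is that for any $\gamma \in \Gamma$ there is $\delta \in \ker(\varphi)$ with $\alpha^{\gamma\delta} = \alpha$ (by transitivity of $\ker\varphi$ on $A = \alpha^\Gamma$), so $\gamma\delta \in \Gamma_\alpha$ and $(\gamma\delta)^{\varphi} = \gamma^{\varphi}$; hence $\Gamma_\alpha^{\varphi} = \Gamma^{\varphi} \geq A_k$, so $\alpha$ is unaffected — contrapositive gives what we want. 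The remaining steps (orbits of a normal subgroup forming a block system; nontrivial quotients of groups containing the simple group $A_k$ still contain $A_k$; minimal transitive degree of $A_k$ being $k$ for $k \geq 5$) are all standard permutation-group facts, so the proof is essentially this chain of reductions with the affected-point definition supplying the one nonstandard input.
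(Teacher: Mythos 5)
The paper states this lemma without proof, citing Babai~\cite{Babai16}, so the assessment is against a correct proof rather than against a proof given in the text.

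Your overall plan is sound: reduce to showing the number $b$ of $\ker(\varphi)$-orbits in $A$ is at least $k$, then use the block system $\FB$ they form, the simplicity of $A_k$, and the minimal degree of a transitive $A_k$-action. But there is a genuine gap in the dichotomy you invoke. You assert that a transitive quotient of a group containing $A_k$ is ``either trivial or contains a copy of $A_k$.'' This is false: since a giant representation only guarantees $\Gamma^{\varphi} \in \{A_k, S_k\}$, it can happen that $\Gamma^{\varphi} = S_k$ and $\Gamma[\FB] \cong S_k/A_k \cong \mathbb{Z}/2$, which is nontrivial yet contains no copy of $A_k$. This corresponds precisely to the case $b=2$ with $\varphi^{-1}(A_k)$ stabilizing each block. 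Your explicit ruling-out step only handles $b=1$ (the quotient $\Gamma[\FB]$ trivial), so the inference ``$\Gamma[\FB]$ is nontrivial, hence contains a transitive copy of $A_k$'' does not follow, and the lemma's conclusion $b \geq k$ is not established.

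The fix uses exactly the factorization trick you already wrote down, applied to the single block $B$ containing $\alpha$ rather than to all of $A$. Since $B$ is a $\ker(\varphi)$-orbit, $\ker(\varphi)$ is transitive on $B$, so the same argument gives $\Gamma_B = \Gamma_\alpha \cdot \ker(\varphi)$ and hence $\Gamma_\alpha^{\varphi} = \Gamma_B^{\varphi}$ \emph{unconditionally}, not just when $b=1$. Because $\Gamma_B \supseteq \ker(\varphi)$ and $[\Gamma:\Gamma_B]=b$, we get $[\Gamma^{\varphi} : \Gamma_\alpha^{\varphi}] = b$. Affectedness says $\Gamma_\alpha^{\varphi} \not\geq A_k$, so $A_k \cap \Gamma_\alpha^{\varphi}$ is a proper subgroup of $A_k$ of index at most $b$; since the minimal index of a proper subgroup of $A_k$ is $k$ for $k \geq 5$, this forces $b \geq k$. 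This route bypasses $\Gamma[\FB]$ and the faulty dichotomy entirely, and makes your $b=1$ step a special case rather than a separate one.
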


Overall, this means that the Local Certificates Routine runs in time $k!n^{\CO(1)}$ and performs at most $k!n$ many recursive calls to the String Isomorphism Problem over domain size at most $n/k$.
Unfortunately, this not fast enough if $k$ is significantly larger than $\log n$.
The solution to this problem is to not run the Local Certificates Routine for the entire set $[k]$, but only for \defword{test sets} $T \subseteq [k]$ of size $t = \CO(\log n)$.

Let $T \subseteq [k]$ be a test set of size $|T| = t$.
We extend the notion of point- and setwise stabilizers to the image of $\varphi$ and define $\Gamma_{(T)} \coloneqq \varphi^{-1}((\Gamma^\varphi)_{(T)})$ and $\Gamma_{T} \coloneqq \varphi^{-1}((\Gamma^\varphi)_T)$.
The test set $T$ is called \defword{full} if $((\Aut_{\Gamma_T}(\Fx))^\varphi)[T] \geq \Alt(T)$.
A \defword{certificate of fullness} is a group $\Delta \leq \Aut_{\Gamma_T}(\Fx)$ such that $(\Delta^\varphi)[T] \geq \Alt(T)$.
A \defword{certificate of non-fullness} is a non-giant group $\Lambda \leq \Sym(T)$ such that $((\Aut_{\Gamma_T}(\Fx))^\varphi)[T] \leq \Lambda$.
Given a test set $T \subseteq [k]$ we can use the Local Certificates Routine to determine whether $T$ is full and, depending on the outcome, compute a certificate of fullness or non-fullness
(simply run the Local Certificates Routine with input $(\Fx,\Gamma_T,\varphi_T)$ where $\varphi_T(\gamma) = (\gamma^\varphi)[T]$ for all $\gamma \in \Gamma_T$).
Observe that, for $t = \CO(\log n)$, the recursion performed by the Local Certificate Routine for test sets of size $t$ only results in quasi-polynomial running time.

Going back to the original problem of testing isomorphism of strings, the Local Certificates Routine is applied for all test sets of size $t$ as well as both input strings $\Fx,\Fy$.
Based on the results, one can achieve one of the following two outcomes:
\begin{enumerate}
 \item subsets $M_\Fx,M_\Fy \subseteq [k]$ of size at least $\frac{3}{4}k$ and a group $\Delta \leq \Aut_{\Gamma_{M_\Fx}}(\Fx)$ such that $\Delta^\varphi[M_\Fx] \geq \Alt(M_\Fx)$ and $(M_\Fx)^\varphi(\gamma) = M_\Fy$ for all $\gamma \in \Iso_\Gamma(\Fx,\Fy)$, or
 \item two families of $r = k^{\CO(1)}$ many $t$-ary relational structures $(\FA_{\Fx,j})_{j \in [r]}$ and $(\FA_{\Fy,j})_{j \in [r]}$, where each relational structure has domain $[k]$ and only few symmetries, such that
 \[\{\FA_{\Fx,1},\dots,\FA_{\Fx,r}\}^{\varphi(\gamma)} = \{\FA_{\Fy,1},\dots,\FA_{\Fy,r}\}\]
 for all $\gamma \in \Iso_\Gamma(\Fx,\Fy)$.
\end{enumerate}
Here, we shall not formally specify what it means for a structure to have few symmetries, but roughly speaking, this implies that the automorphism group is exponentially smaller than the full symmetric group.

Intuitively speaking, the first option can be achieved if there are many test sets which are full.
In this case, the certificates of fullness (for the string $\Fx$) can be combined to obtain the group $\Delta$.
Otherwise, there are many certificates of non-fullness which can be combined into the relational structures.

Now suppose the first option is satisfied.
For simplicity assume that $M_\Fx = M_\Fy = [k]$ and $\Delta^\varphi = S_k$.
Then $\Iso_\Gamma(\Fx,\Fy) \neq \emptyset$ if and only if $\Iso_{\ker(\varphi)}(\Fx,\Fy) \neq \emptyset$.
Hence, it suffices to recursively determine whether $\Iso_{\ker(\varphi)}(\Fx,\Fy) \neq \emptyset$.
Since $\ker(\varphi)$ is significantly smaller than $\Gamma$ this gives sufficient progress for the recursion to obtain the desired running time.

In the other case the situation is far more involved and here, Babai's algorithm builds on combinatorial methods to achieve further progress.

\medskip

For simplicity, suppose the algorithm is given a single pair of isomorphism-invariant relational structures $\FA \coloneqq \FA_{\Fx,1}$ and $\FB \coloneqq \FA_{\Fy,1}$ (the general case reduces to this case by \defword{individualizing} one of the relational structures).
Recall that $\FA$ and $\FB$ only have few symmetries.
On a high level, the aim of the combinatorial subroutines is to compute ``simpler'' isomorphism-invariant relational structures $\FA^*$ and $\FB^*$ such that $\FA^*$ and $\FB^*$ still have few symmetries and all isomorphisms between $\FA^*$ and $\FB^*$ can be computed in polynomial time.
Roughly speaking, this allows the algorithm to make sufficient progress by computing $\Lambda\lambda \coloneqq \Iso(\FA^*,\FB^*)$ and updating $\Gamma\gamma \coloneqq \varphi^{-1}(\Lambda\lambda)$, i.e., the algorithm continues to compute $\Iso_{\Gamma\gamma}(\Fx,\Fy)$.
Actually, as before, we compute small families of relational structures $\FA^*_1,\dots,\FA^*_m$ and $\FB^*_1,\dots,\FB^*_m$ for some number $m$ which is quasi-polynomial in $k$.
This does not impose any additional problems since, once again, it is possible to individualize a single relational structure.

To achieve the goal, Babai's algorithm builds on two subroutines, the \defword{Design Lemma} and the \defword{Split-or-Johnson Routine}.
The \defword{Design Lemma} first turns the $t$-ary relational structures into (a small family of) graphs (without changing the domain).
For this task, the \defword{Design Lemma} mainly builds on the $t$-dimensional Weisfeiler-Leman algorithm.
The output is passed to the \defword{Split-or-Johnson Routine} which produces one of the following:
\begin{enumerate}
 \item an isomorphism-invariant coloring where each color class $C$ has size $|C| \leq \frac{3}{4}k$, or
 \item an isomorphism-invariant non-trivial equipartition of a subset $M \subseteq [k]$ of size $|M| \geq \frac{3}{4}k$, or
 \item an isomorphism-invariant non-trivial Johnson graph defined on a subset $M \subseteq [k]$ of size $|M| \geq \frac{3}{4}k$.
\end{enumerate}
Here, the Johnson graph $J(m,t)$ is the graph with vertex set $V(J(m,t)) \coloneqq \binom{[m]}{t}$ and edge set $E(J(m,t)) \coloneqq \{XY \mid |X \setminus Y| = 1\}$.
We remark that $\Aut(J(m,t)) = S_m^{(t)}$.

Once again, the \defword{Split-or-Johnson Routine} actually produces a family of such objects.
Note that, for each possible outcome, isomorphisms between objects can be computed efficiently and the number of isomorphisms is exponentially smaller than the entire symmetric group.
This provides the desired progress for the recursive algorithm.

The \defword{Split-or-Johnson Routine} again builds on the Weisfeiler-Leman algorithm as well as further combinatorial tools.
We omit any details here and refer the interested reader to \cite{Babai15,HelfgottBD17}.
Overall, Babai's algorithm results in the following theorem.

\begin{theorem}[Babai \cite{Babai16}]
 The String Isomorphism Problem can be solved in time $n^{\polylog{n}}$.
\end{theorem}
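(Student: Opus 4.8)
\begin{proofsketch}
The plan is to assemble the ingredients above into one recursive procedure that, on input strings $\Fx,\Fy\colon\Omega\to\Sigma$, a group $\Gamma\le\Sym(\Omega)$, a coset representative $\gamma$, and a $\Gamma$-invariant window $W$, computes $\Iso_{\Gamma\gamma}^W(\Fx,\Fy)$, and then to bound its running time by a recursion in the window size $n\coloneqq|W|$ and in the group order. By Equation~\eqref{eq:string-isomorphism-alignment} we may assume $K=\Gamma$ is a group, and by orbit-by-orbit processing that $\Gamma$ is transitive on $W$. Computing a minimal block system $\FB$ makes $\Gamma[\FB]$ primitive. If $|\Gamma[\FB]|<b^{1+\log b}$, where $b=|\FB|$, we perform standard Luks reduction, spawning at most $b^{\CO(\log b)}$ subinstances over window size $n/b$, which is compatible with an $n^{\polylog n}$ bound exactly as in the proof of Corollary~\ref{cor:luks-alg-running-time}. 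The heart of the argument is the remaining case, where $\Gamma[\FB]$ is a large primitive group: by Theorem~\ref{thm:luks-obstacles} we obtain in polynomial time a normal subgroup $N\unlhd\Gamma$ of index at most $n$ together with an $N$-invariant equipartition on which $N$ acts as a Johnson group $A_m^{(t)}$ with $m\ge\log n$; iterating over the at most $n$ cosets of $N$ and viewing the Johnson action as a giant representation, we are reduced to an instance equipped with a giant representation $\varphi\colon\Gamma\to S_k$ where $k=m\ge\log n$.

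To process the giant representation, fix a parameter $t=\Theta(\log n)$ and run the Local Certificates Routine (Algorithm~\ref{alg:local-certificates}) for every test set $T\in\binom{[k]}{t}$ and for both strings. By the Unaffected Stabilizers Theorem~\ref{thm:unaffected-stabilizer-babai} each run terminates with a certificate of fullness or of non-fullness, and by the Affected Orbit Lemma~\ref{la:kernel-affected-orbits} every recursive String Isomorphism call made inside a run is over a window of size at most $n/k\le n/\log n$; since there are only $n^{\CO(\log n)}$ test sets and each run branches only quasi-polynomially, this phase costs $n^{\polylog n}$ overhead plus quasi-polynomially many recursive calls on windows of size at most $n/\log n$. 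Aggregating the certificates produces one of two outcomes. In the first, there are subsets $M_\Fx,M_\Fy\subseteq[k]$ of size at least $\frac34 k$, a bijection between them respected by every element of $\Iso_\Gamma(\Fx,\Fy)$, and a subgroup $\Delta\le\Aut_{\Gamma_{M_\Fx}}(\Fx)$ with $\Delta^\varphi[M_\Fx]\ge\Alt(M_\Fx)$; after aligning the two sides by the bijection it then suffices to recurse with the group $\ker(\varphi)$, which fixes the Johnson blocks setwise and hence has all orbits of size at most $n/\binom{m}{t}\le n/\log n$ --- again an honest window reduction. In the second outcome we obtain two isomorphism-invariant families of $k^{\CO(1)}$ many $t$-ary relational structures on the domain $[k]$, each with few symmetries, that must be carried onto one another by $\varphi(\gamma)$ for all $\gamma\in\Iso_\Gamma(\Fx,\Fy)$.

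In this last case we feed the structures through the combinatorial pipeline. The Design Lemma, built on the $t$-dimensional Weisfeiler-Leman algorithm, turns them into a small family of graphs on $[k]$, and the Split-or-Johnson Routine outputs, isomorphism-invariantly, either a coloring of $[k]$ with all color classes of size $\le\frac34 k$, or a non-trivial equipartition of a subset of size $\ge\frac34 k$, or a non-trivial Johnson graph $J(m',t')$ on a subset of size $\ge\frac34 k$. For each outcome one computes $\Iso$ of the simpler objects in polynomial time, pulls it back through $\varphi^{-1}$ to replace the current coset $\Gamma\gamma$ by a much smaller one, and recurses: the first two outcomes induce a split of $[k]$, and hence of $\FB$ and of $\Omega$, into parts each a constant factor smaller, while the third strictly decreases the Johnson parameter. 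Thus every step of the algorithm makes progress of one of two kinds --- it reduces the window size by a constant factor (or by a factor $1/\log n$), or it descends one level in the Johnson hierarchy --- and the latter can recur only $\polylog n$ times before the Johnson parameter falls below $\log n$ and we are back in the small-primitive-quotient case. A careful accounting then shows the recursion tree has depth $\polylog n$ and branching $n^{\polylog n}$ at each node, so the total running time is $n^{\polylog n}$; combined with Theorem~\ref{thm:reduction-graph-isomorphism-to-string-isomorphism} this also places Graph Isomorphism in quasi-polynomial time.

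The step we have only sketched --- the Split-or-Johnson Routine, together with the aggregation of local certificates into either a large group or into structures with few symmetries --- is also where the main difficulty lies: one must prove that \emph{few symmetries} of a combinatorial object can always be converted into genuine structural progress, namely a canonical split of the point set or a canonically defined, strictly smaller Johnson scheme. This is itself a recursive argument resting on a substantial amount of combinatorics around coherent configurations and the Weisfeiler-Leman algorithm, as well as on the deep group-theoretic input of the Unaffected Stabilizers Theorem; for the details of these subroutines we refer the reader to~\cite{Babai15,HelfgottBD17}.
\end{proofsketch}
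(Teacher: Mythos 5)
Your sketch follows essentially the same route as the paper's own exposition of Babai's algorithm: Luks recursion down to a large primitive (Cameron/Johnson) obstruction via Theorem~\ref{thm:luks-obstacles}, local certificates on test sets of size $\Theta(\log n)$ backed by the Unaffected Stabilizers Theorem and the Affected Orbit Lemma, aggregation into either a large alternating action or low-symmetry relational structures, and then the Design Lemma / Split-or-Johnson pipeline, with the same hard subroutines deferred to \cite{Babai15,HelfgottBD17} --- which is exactly what this survey does, since the full proof lives in \cite{Babai16}. This is correct at the level of a proof sketch; the only minor imprecision is that inside a local-certificates run on a test set $T$ the window shrinks by the factor $|T|=\Theta(\log n)$ rather than by $k$, but the bound $n/\log n$ you actually use afterwards is the right one.
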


\subsection{Faster Certificates for Groups with Restricted Composition Factors}
\label{subsec:si-gamma-d}

In the following two subsections we discuss extensions of Babai's algorithm for the String Isomorphism Problem.
We start by considering again the String Isomorphism Problem for $\mgamma_d$-groups.
For such an input group, Babai's algorithm never encounters an obstacle case (assuming $d = \CO(\log n)$) and thus, it simply follows Luks's algorithm which solves the problem in time $n^{\CO(d)}$ (see also Corollary \ref{cor:luks-alg-running-time}).
However, in light of the novel group-theoretic subroutines employed in Babai's algorithm, it seems plausible to also hope for improvements in case the input group is a $\mgamma_d$-group.

Looking at Babai's algorithm from a high level, there are two major hurdles that one needs to overcome.
First, we need to analyse the obstacle cases for Luks's algorithm, i.e., one needs to classify primitive $\mgamma_d$-groups of size larger than $n^{\CO(\log d)}$.
Without going into any details, let us just state that by a deep analysis of primitive $\mgamma_d$-groups \cite{GroheNS18}, one can prove that such groups are, once again, essentially Johnson groups in a similar manner as in Theorem \ref{thm:luks-obstacles}.

The second crucial hurdle is the adaptation of the Local Certificates Routine.
Recall that Babai's algorithm applies the Local Certificates Routine to test sets $T$ of size $|T| = t = \CO(\log n)$.
The routine runs in time $t!n^{\CO(1)}$ and performs $t!n$ many recursive calls to the String Isomorphism Problem over domain size at most $n/t$.
The central issue is that, in order to prove correctness building on the Unaffected Stabilizers Theorem, the Local Certificates Routine requires that $t > \max\{8,\log n + 2\}$.
However, to apply the Local Certificates Routine in the setting of $\mgamma_d$-groups, we wish to choose $t = \CO(\log d)$ to attain the desired running time.
Hence, a main obstacle for a faster algorithm for $\mgamma_d$-groups is to obtain a suitable variant of the Unaffected Stabilizers Theorem.
Unfortunately, it is not difficult to see that the natural variant of the Unaffected Stabilizers Theorem (where we restrict the input group to be in $\mgamma_d$ and update the bound on $t$ as desired) does not hold (see, e.g., \cite[Example 5.5.10]{Neuen19}).

The main idea to circumvent this problem \cite{GroheNS18} is to first normalize the input to ensure suitable restrictions on the input group $\Gamma$ that enable us to prove the desired variant of the Unaffected Stabilizers Theorem.
In the following, we only describe the main idea for the normalization.
Given a normalized input, one can provide a suitable variant of the Local Certificates Routine and, building on a more precise analysis of the recursion, extend Babai's algorithm to the setting of $\mgamma_d$-groups.
This results in an algorithm solving the String Isomorphism Problem for $\mgamma_d$-groups in time $n^{\polylog{d}}$.
We remark that the algorithm does not exploit the combinatorial subroutines directly, but one can simply rely on using Babai's algorithm as a black box instead.
In particular, this also simplifies the analysis of the entire algorithm.

To describe the normalization procedure, we follow the framework developed in the second author's PhD thesis \cite{Neuen19}.
A \defword{rooted tree} is a pair $(T,v_0)$ where $T$ is
a directed tree and $v_0 \in V(T)$ is the root of $T$ (all edges are
directed away from the root).  Let $L(T)$ denote the set of leaves of
$T$, i.e., vertices $v \in V(T)$ without outgoing edges.  For
$v \in V(T)$ we denote by $T^{v}$ the subtree of $T$ rooted at vertex $v$.

Let $\Gamma \leq \Sym(\Omega)$ be a permutation group.
A \defword{structure tree} for $\Gamma$ is a rooted tree $(T,v_0)$ such that $L(T) = \Omega$ and $\Gamma \leq (\Aut(T))[\Omega]$, i.e., all $\gamma \in \Gamma$ preserve the structure of $T$.

\begin{lemma}
 Let $\Gamma \leq \Sym(\Omega)$ be a transitive group and $(T,v_0)$ a structure tree for $\Gamma$.
 For every $v \in V(T)$ the set $L(T^{v})$ is a block of $\Gamma$.
 Moreover, $\{L(T^{w}) \mid w \in v^{\Aut(T)}\}$ forms a block system of the group $\Gamma$.
\end{lemma}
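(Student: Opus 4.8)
The plan is to unwind the two claims directly from the definitions of a structure tree and of a block. Recall that $(T,v_0)$ being a structure tree for $\Gamma$ means $L(T)=\Omega$ and every $\gamma\in\Gamma$, regarded as a permutation of the leaf set, extends to an automorphism $\hat\gamma\in\Aut(T)$; moreover this extension is unique, since an automorphism of a rooted tree is determined by its action on the leaves (each internal vertex is the meet of the leaves below it). So we obtain an injective homomorphism $\Gamma\hookrightarrow\Aut(T)$, $\gamma\mapsto\hat\gamma$, and it is harmless to think of elements of $\Gamma$ as acting on all of $V(T)$.

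For the first statement, fix $v\in V(T)$ and set $B\coloneqq L(T^v)$. I would show $B$ is a block by checking the defining property: for each $\gamma\in\Gamma$, either $B^\gamma=B$ or $B^\gamma\cap B=\emptyset$. The key observation is that $\hat\gamma$ maps the subtree $T^v$ isomorphically onto the subtree $T^{w}$ where $w\coloneqq v^{\hat\gamma}$; hence it maps the leaf set $L(T^v)$ bijectively onto $L(T^{w})$, i.e.\ $B^\gamma=L(T^{w})$. Now for two vertices $v,w$ at the same depth in $T$ (and $w=v^{\hat\gamma}$ is automatically at the same depth as $v$ since tree automorphisms preserve distance from the root), the leaf sets $L(T^v)$ and $L(T^{w})$ are either equal (if $v=w$) or disjoint (if $v\neq w$), because distinct vertices at a common depth have disjoint sets of descendants. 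Combining these two facts, $B^\gamma=L(T^{w})$ is either $B$ or disjoint from $B$, which is exactly what we need; and $B$ is nonempty since every vertex of a tree has at least one leaf below it. That $B$ is a block does not even use transitivity of $\Gamma$.

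For the second statement, consider $\FB\coloneqq\{L(T^{w})\mid w\in v^{\Aut(T)}\}$; one should really read this as $w$ ranging over $v^{\hat\Gamma}$, the $\Gamma$-orbit of $v$ under the induced action on $V(T)$, which is the intended meaning in context (and in any case, since $\Gamma\le\Aut(T)$, the argument below works verbatim). By the computation above, for $\gamma\in\Gamma$ and $w\in v^{\hat\Gamma}$ we have $L(T^{w})^\gamma=L(T^{w^{\hat\gamma}})$ with $w^{\hat\gamma}\in v^{\hat\Gamma}$, so $\FB$ is $\Gamma$-invariant as a family of sets, i.e.\ $\Gamma$ permutes the members of $\FB$. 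It remains to see that $\FB$ is a partition of $\Omega$: the members cover $\Omega$ because $\Gamma$ is transitive on $\Omega$ (any leaf $\alpha$ lies in $L(T^{v^{\hat\gamma}})$ once $\gamma$ moves a fixed leaf of $L(T^v)$ to $\alpha$), and they are pairwise equal-or-disjoint because, as noted, all the vertices $w\in v^{\hat\Gamma}$ sit at the same depth as $v$, so distinct $w$ give disjoint descendant-leaf sets. Hence $\FB$ is a block system.

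The content here is entirely routine once one has the right picture, so there is no real ``hard part''; the one point that requires a moment's care is the passage from a permutation of the leaves to a genuine tree automorphism and the uniqueness of that extension, since it is this that lets us talk about the action of $\Gamma$ on internal vertices and on subtrees. Everything else is the elementary fact that in a rooted tree, vertices at equal depth have disjoint sets of descendant leaves, together with transitivity of $\Gamma$ for the covering property.
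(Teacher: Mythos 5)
The paper states this lemma without proof, so there is nothing on the paper's side to compare against; your argument is the natural one and is correct. The only thing to flag is the parenthetical justification for uniqueness of the extension $\hat\gamma$: the claim that ``each internal vertex is the meet of the leaves below it'' fails for internal vertices with a single child, since in that case the meet (least common ancestor) of the descendant leaves is strictly deeper. Uniqueness does in fact hold --- one proves it top-down from the root, using that distinct siblings have disjoint nonempty descendant-leaf sets --- but your main argument never actually uses it: for the first claim you only need that \emph{some} extension $\hat\gamma$ exists (which the definition of structure tree guarantees), together with the facts that tree automorphisms preserve depth and that distinct vertices at equal depth have disjoint descendant-leaf sets. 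For the second claim, your worry about $v^{\Aut(T)}$ versus $v^{\hat\Gamma}$ is also moot: once $\Gamma$ is transitive, every vertex at the depth of $v$ is already in $v^{\hat\Gamma}$ (take a descendant leaf and use transitivity, as you do in the covering argument), so $v^{\hat\Gamma}=v^{\Aut(T)}=\{\text{all vertices at that depth}\}$ and the two readings coincide. With that observation the set you construct is exactly $\{L(T^v)^\gamma\mid\gamma\in\Gamma\}$, matching the paper's definition of a block system on the nose.
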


Let $\Gamma \leq \Sym(\Omega)$ be a transitive group.
The last lemma implies that every structure tree $(T,v_0)$ gives a sequence of $\Gamma$-invariant partitions $\{\Omega\} = \FB_0 \succ \dots \succ \FB_k = \{\{\alpha\} \mid \alpha \in \Omega\}$ (for each level $h \geq 0$, the blocks associated with the vertices at distance $h$ from the root form one of the partitions).
On the other hand, every such sequence of partitions gives a structure tree $(T,v_0)$ with
\[V(T) = \Omega \cup \bigcup_{i=0,\dots,k-1} \FB_i\]
and
\[E(T) = \{(B,B') \mid B \in \FB_{i-1}, B' \in \FB_i, B' \subseteq B\} \cup \{(B,\alpha) \mid B \in \FB_{k-1}, \alpha \in B\}.\]
The root is $v_0 = \Omega$.

The following definition describes the desired structure of a normalized group.
For $\FB$ a partition of $\Omega$ and $S \subseteq \Omega$ we denote by $\FB[S] \coloneqq \{B \cap S \mid B \in \FB, B \cap S \neq \emptyset\}$ the induced partition on $S$.

\begin{Def}[Almost $d$-ary Sequences of Partitions]
 Let $\Gamma \leq \Sym(\Omega)$ be a group and let $\{\Omega\} = \FB_0 \succ \dots \succ \FB_k = \{\{\alpha\} \mid \alpha \in \Omega\}$ be a sequence of $\Gamma$-invariant partitions.
 The sequence $\FB_0 \succ \dots \succ \FB_k$ is \defword{almost $d$-ary} if for every $i \in [k]$ and $B \in \FB_{i-1}$ it holds that
 \begin{enumerate}
  \item $|\FB_i[B]| \leq d$, or
  \item $\Gamma_B[\FB_i[B]]$ is semi-regular.
 \end{enumerate}
 If the first option is always satisfied the sequence $\FB_0 \succ \dots \succ \FB_k$ is called \defword{$d$-ary}.
 Similarly, a structure tree $(T,v_0)$ for $\Gamma$ is \defword{(almost) $d$-ary} if the corresponding sequence of partitions is.
\end{Def}

Let $\Gamma$ be a group for which there is an almost $d$-ary structure tree $(T,v_0)$.
Now, we can execute Luks's algorithm along the given structure tree $(T,v_0)$ (by always picking the next partition when performing standard Luks reduction).
Intuitively speaking, this allows us to restrict the primitive groups that are encountered during Luks's algorithm.
For a $d$-ary structure tree all primitive groups are subgroups of $S_d$.
For an almost $d$-ary structure tree we additionally need to be able to handle semi-regular groups.
However, handling such groups is simple since they have size at most $n$ where $n$ denotes the size of the permutation domain.

The goal of the \defword{Normalization Routine} is, given an instance $(\Gamma,\Fx,\Fy)$ of the String Isomorphism Problem where $\Gamma \in \mgamma_d$, to compute a normalized equivalent instance $(\Gamma^*,(T,v_0),\Fx^*,\Fy^*)$ such that $(T,v_0)$ forms an almost $d$-ary structure tree for $\Gamma^*$. 
The main tool to achieve the normalization are tree unfoldings of certain \defword{structure graphs}. 

Let $(G,v_0)$ be a rooted acyclic directed graph.
For $v \in V(G)$ define $N^{+}(v) \coloneqq \{w \in V(G) \mid (v,w) \in E(G)\}$ to be the set of outgoing neighbors of $v$.
The \defword{forward degree} of $v$ is $\deg^{+}(v) \coloneqq |N^{+}(v)|$.
A vertex is a \defword{leaf} of $G$ if it has no outgoing neighbors, i.e., $\deg^{+}(v) = 0$.
Let $L(G) = \{v \in V(G) \mid \deg^{+}(v) = 0\}$ denote the set of leaves of $G$.

Let $\Gamma \leq \Sym(\Omega)$ be a permutation group.
A \defword{structure graph} for $\Gamma$ is a triple $(G,v_0,\varphi)$ where $(G,v_0)$ is a rooted acyclic directed graph such that $L(G) = \Omega$ and $\Gamma \leq (\Aut(G))[\Omega]$
and $\varphi\colon \Gamma \rightarrow \Aut(G)$ is a homomorphism such that $(\gamma^{\varphi})[\Omega] = \gamma$ for all $\gamma \in \Gamma$.

Note that each structure tree can be viewed as a structure graph (for trees the homomorphism $\varphi$ is uniquely defined and can be easily computed).
As indicated above the strategy to normalize the action is to consider the tree unfolding of a suitable structure graph.
The permutation domain of the normalized action then corresponds to the leaves of the tree unfolding for which there is a natural action of the group $\Gamma$.

Let $(G,v_0)$ be a rooted acyclic directed graph.
A \defword{branch} of $(G,v_0)$ is a sequence $(v_0,v_1,\dots,v_k)$ such that $(v_{i-1},v_i) \in E(G)$ for all $i \in [k]$.
A branch $(v_0,v_1,\dots,v_k)$ is \defword{maximal} if it cannot be extended to a longer branch, i.e., if $v_k$ is a leaf of $(G,v_0)$.
Let $\Br(G,v_0)$ denote the set of branches of $(G,v_0)$ and $\Br^{*}(G,v_0)$ denote the set of maximal branches.
Note that $\Br^{*}(G,v_0) \subseteq \Br(G,v_0)$.
Also, for a maximal branch $\bar v = (v_0,v_1,\dots,v_k)$ let $L(\bar v) \coloneqq v_k$.
Note that $L(\bar v) \in L(G)$.

For a rooted acyclic directed graph $(G,v_0)$ the \defword{tree unfolding} of $(G,v_0)$ is defined to be the rooted tree $\Unf(G,v_0)$
with vertex set $\Br(G,v_0)$ and edge set
\[E(\Unf(G,v_0)) = \{((v_0,\dots,v_k),(v_0,\dots,v_k,v_{k+1})) \mid (v_0,\dots,v_{k+1}) \in \Br(G,v_0)\}.\]
Note that $L(\Unf(G,v_0)) = \Br^{*}(G,v_0)$, i.e., the leaves of the tree unfolding of $(G,v_0)$ are exactly the maximal branches of $(G,v_0)$.

\begin{example}
 \label{exa:structure-graph-for-johnson-groups}
 Let $m \leq d$, $t \leq \frac{m}{2}$ and consider the Johnson group $\Gamma = A_m^{(t)}$.
 Then a structure graph $(G,v_0,\varphi)$ for $\Gamma \leq \Sym(\binom{[m]}{t})$ can be constructed as follows.
 The vertices of the graph are all subsets of $[m]$ of size at most $t$, i.e.,
 \[V(G) \coloneqq \binom{[m]}{\leq t} = \{X \subseteq [m] \mid |X| \leq t\}.\]
 Two vertices $X$ and $Y$ are connected by an edge if $Y$ is the extension of $X$ by a single element, i.e.,
 \[E(G) \coloneqq \{(X,Y) \mid X \subseteq Y \wedge |Y \setminus X| = 1\}.\]
 The root of the structure graph is $v_0 \coloneqq \emptyset$.
 Note that $L(G) = \binom{[m]}{t}$ as desired.
 Intuitively speaking, $(G,v_0)$ corresponds to the first $t+1$ levels of the subset lattice of the set $[m]$.
 An example is given in Figure \ref{fig:structure-graph-johnson}.
 
 \begin{figure}
  \centering
  \begin{tikzpicture}
   \node[emptyvertex,label={above:$\emptyset$}] (r) at (5.4,3.2) {};
   \foreach \i in {1,...,5}{
    \node[emptyvertex,label={right:$\{\i\}$}] (u\i) at (2.4*\i - 1.8,1.6) {};
    \draw[thick,->] (r) edge (u\i);
   }
   \foreach \a/\b [count=\i] in {1/2,1/3,2/3,1/4,2/4,1/5,2/5,3/4,3/5,4/5}{
    \node[emptyvertex,label={below:$\{\a,\b\}$}] (l\i) at (1.2*\i - 1.2,0) {};
    \draw[thick,->] (u\a) edge (l\i);
    \draw[thick,->] (u\b) edge (l\i);
   }
  \end{tikzpicture}
  \caption{A structure graph for the Johnson group $A_5^{(2)}$}
  \label{fig:structure-graph-johnson}
 \end{figure}
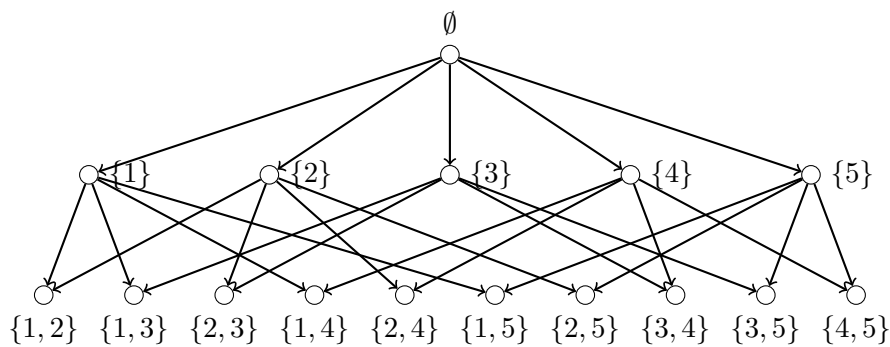
 
 For $\gamma \in A_m$ let $\gamma^{(t)}$ be the element obtained from the natural action of $\gamma$ on $\binom{[m]}{t}$.
 So $\Gamma = \{\gamma^{(t)} \mid \gamma \in A_m\}$.
 Let $\varphi\colon\Gamma\rightarrow\Aut(G)$ be defined by $X^{\varphi(\gamma^{(t)})} = X^{\gamma}$ where $X \in V(G)$.
 Then $(G,v_0,\varphi)$ is a structure graph for $\Gamma$.
 
 Now consider the tree unfolding $\Unf(G,v_0)$ of the graph $(G,v_0)$.
 A maximal branch $\bar v \in \Br^{*}(G,v_0)$ is a sequence $\bar v = (X_0,X_1,\dots,X_t)$ where $X_{i-1} \subseteq X_i$ and $|X_i| = i$ for all $i \in [t]$.
 This gives an ordering of the elements in $X_t$.
 Indeed, it is not difficult to see that there is a one-to-one correspondence between the maximal branches $\Br^{*}(G,v_0)$ and the set $[m]^{\langle t\rangle}$ of ordered $t$-tuples over the set $[m]$ with pairwise distinct elements.
 
 The group $A_m$ also acts naturally on the set $[m]^{\langle t\rangle}$.
 Let $A_m^{\langle t\rangle}$ be the permutation group obtained from this action.
 Now it can be observed that $\Unf(G,v_0)$ gives a structure tree for $A_m^{\langle t\rangle}$.
 Also, the degree of the permutation group $A_m^{\langle t\rangle}$ is only slightly larger than the degree of $A_m^{(t)}$.
 Indeed,
 \begin{equation}
  \label{eq:structure-graph-johnson-size}
  \left|\binom{[m]}{t}\right|^{\log m} = \binom{m}{t}^{\log m} \geq \left(\frac{m}{t}\right)^{t \log m} \geq 2^{t \log m} = m^{t} \geq |[m]^{\langle t\rangle}|.
 \end{equation}
\end{example}

In order to generalize this example to obtain a normalization for all groups one can first observe that a group $\Gamma$ naturally acts on the set of maximal branches of a structure graph and additionally, the tree unfolding of the structure graph forms a structure tree for this action.

\begin{lemma}
 \label{la:standard-action-on-branches}
 Let $\Gamma \leq \Sym(\Omega)$ be a permutation group and let $(G,v_0,\varphi)$ be a structure graph for $\Gamma$.
 Then there is an action $\psi\colon \Gamma \rightarrow \Sym(\Br^{*}(G))$ on the set of maximal branches of $(G,v_0)$ such that
 \begin{enumerate}
  \item $L(\bar v^{\psi(\gamma)}) = (L(\bar v))^{\gamma}$ for all $\bar v \in \Br^{*}(G)$ and $\gamma \in \Gamma$, and
  \item $\Unf(G,v_0)$ forms a structure tree for $\Gamma^{\psi}$.
 \end{enumerate}
 Moreover, given the group $\Gamma$ and the structure graph $(G,v_0,\varphi)$, the homomorphism $\psi$ can be computed in time polynomial in $|\Br^{*}(G,v_0)|$.
\end{lemma}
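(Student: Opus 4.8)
The plan is to define $\psi$ by transporting the natural action of $\Aut(G)$ on the vertices of $G$ to the branches of $(G,v_0)$, using $\varphi$ to turn elements of $\Gamma$ into automorphisms of $G$. Concretely, for $\gamma\in\Gamma$ I would put $\sigma_\gamma\coloneqq\gamma^\varphi\in\Aut(G)$; since $(G,v_0)$ is a rooted acyclic directed graph, $v_0$ is fixed by every automorphism of $G$ (it is the unique vertex of in-degree $0$), so $v_0^{\sigma_\gamma}=v_0$. For a branch $\bar v=(v_0,v_1,\dots,v_k)$ I would set $\bar v^{\sigma_\gamma}\coloneqq(v_0,v_1^{\sigma_\gamma},\dots,v_k^{\sigma_\gamma})$. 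Because $\sigma_\gamma$ preserves $E(G)$, consecutive pairs remain edges, so this is again a branch; because $\sigma_\gamma$ is an automorphism it preserves $L(G)=\Omega$, so it sends maximal branches to maximal branches. Hence $\bar v\mapsto\bar v^{\sigma_\gamma}$ is a permutation of $\Br^*(G,v_0)$ (its inverse being the map attached to $\gamma^{-1}$), which I would take as $\psi(\gamma)$. As the coordinatewise action $(\bar v,\sigma)\mapsto\bar v^\sigma$ is a genuine action of $\Aut(G)$ and $\varphi$ is a homomorphism, the map $\psi\colon\Gamma\to\Sym(\Br^*(G,v_0))$ is a homomorphism.

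Property~1 is then immediate: for $\bar v=(v_0,\dots,v_k)$ we have $L(\bar v)=v_k\in\Omega$, and therefore $L(\bar v^{\psi(\gamma)})=v_k^{\sigma_\gamma}=v_k^{\gamma^\varphi}=v_k^\gamma=(L(\bar v))^\gamma$, the third equality being exactly the defining property $(\gamma^\varphi)[\Omega]=\gamma$ of a structure graph.

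For Property~2 I would observe that the same formula $\bar v\mapsto\bar v^{\sigma_\gamma}$ also defines a permutation of the full vertex set $\Br(G,v_0)=V(\Unf(G,v_0))$, and that it preserves the edges of $\Unf(G,v_0)$: an edge there is a pair $\big((v_0,\dots,v_k),(v_0,\dots,v_k,v_{k+1})\big)$ with $(v_k,v_{k+1})\in E(G)$, and its $\sigma_\gamma$-image is again such a pair since $(v_k^{\sigma_\gamma},v_{k+1}^{\sigma_\gamma})\in E(G)$. So this permutation is an automorphism of the tree $\Unf(G,v_0)$ that restricts to $\psi(\gamma)$ on the leaves $L(\Unf(G,v_0))=\Br^*(G,v_0)$, giving $\Gamma^\psi\leq(\Aut(\Unf(G,v_0)))[\Br^*(G,v_0)]$, which together with $L(\Unf(G,v_0))=\Br^*(G,v_0)$ is precisely the statement that $\Unf(G,v_0)$ is a structure tree for $\Gamma^\psi$. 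For the algorithmic claim I would compute $\psi$ only on a generating set of $\Gamma$: read $\sigma_\gamma=\gamma^\varphi$ off the input structure graph and apply it coordinatewise to each maximal branch.

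I do not expect a serious obstacle here; the lemma is essentially a matter of unwinding the definitions of structure graph, branch, and tree unfolding. The two points that need minor care are that automorphisms of $G$ fix the root $v_0$ (so the coordinatewise map keeps branches anchored at $v_0$) and that they preserve $\Omega=L(G)$ (so maximality of branches is preserved); both are consequences of $\gamma^\varphi\in\Aut(G)$. The only mildly delicate spot is the running-time bound, where one should note that in the structure graphs actually used (cf.\ Example~\ref{exa:structure-graph-for-johnson-groups}) the relevant part of $G$ has size at most $|\Br^*(G,v_0)|$, so that applying $\sigma_\gamma$ along all maximal branches stays within the claimed polynomial bound.
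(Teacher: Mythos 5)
Your proof is correct and is exactly the intended argument: the paper states Lemma~\ref{la:standard-action-on-branches} without proof (deferring to \cite{GroheNS18,Neuen19}), and the standard justification is precisely your coordinatewise transport of $\gamma^{\varphi}$ along branches, using that automorphisms of the rooted DAG fix $v_0$ and preserve leaves, so that property~1 follows from $(\gamma^{\varphi})[\Omega]=\gamma$ and property~2 from the induced action on all of $\Br(G,v_0)=V(\Unf(G,v_0))$. Your reading of the running-time claim is also the right one: one evaluates $\psi$ only on a generating set by applying $\gamma^{\varphi}$ along each maximal branch, which is polynomial in the size of the given structure graph together with $|\Br^{*}(G,v_0)|$, as in the normalized structure graphs actually used.
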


Hence, for the normalization, the strategy is to first compute a suitable structure graph $(G,v_0,\varphi)$ for $\Gamma$.
Then we update the group by setting $\Gamma^* \coloneqq \Gamma^\psi$ to be the \defword{standard action on the set of maximal branches} described in Lemma \ref{la:standard-action-on-branches}.
Here, the tree unfolding $\Unf(G,v_0)$ should form an almost $d$-ary structure tree for $\Gamma^*$.
Towards this end, we say that $(G,v_0,\varphi)$ is an \defword{almost $d$-ary} structure graph for $\Gamma$ if $\Unf(G,v_0)$ forms an almost $d$-ary structure tree for $\Gamma^*$.
Moreover, we also update the strings $\Fx,\Fy\colon \Omega \rightarrow \Sigma$ by setting $\Fx^*\colon \Br^{*}(G)\rightarrow\Sigma\colon \bar v \mapsto \Fx(L(\bar v))$ and similarly $\Fy^*\colon \Br^{*}(G)\rightarrow\Sigma\colon \bar v \mapsto \Fy(L(\bar v))$.
One can easily show that
\[(\Iso_\Gamma(\Fx,\Fy))^\psi = \Iso_{\Gamma^*}(\Fx^*,\Fy^*).\]
Hence, to complete the normalization, it only remains to find an almost $d$-ary structure graph for which the number of maximal branches is not too large.

\begin{theorem}[\cite{GroheNS18,Neuen19}]
 \label{thm:construct-structure-graph}
 Let $\Gamma \leq \Sym(\Omega)$ be a $\mgamma_d$-group.
 Then there is an almost $d$-ary structure graph $(G,v_0,\varphi)$ for $\Gamma$ such that $|\Br(G,v_0)| \leq n^{\CO(\log d)}$.
 Moreover, there is an algorithm computing such a structure graph in time polynomial in the size of $G$.
\end{theorem}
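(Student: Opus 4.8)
The plan is to construct the structure graph recursively along a maximal chain of $\Gamma$-invariant partitions, peeling off one minimal block system at a time, so that everything reduces to the primitive case, which carries the real weight. I would first dispose of the non-transitive case: if $\Gamma$ has orbits $\Omega_1,\dots,\Omega_r$, recursively build almost $d$-ary structure graphs $(G_j,v_{0,j},\varphi_j)$ for the groups $\Gamma[\Omega_j]$ and glue them under a fresh common root $v_0$ joined to every $v_{0,j}$. The extra top level is harmless no matter how large $r$ is, because $\Gamma$ permutes the set $\{\Omega_1,\dots,\Omega_r\}$ of its own orbits trivially, so the associated quotient is the trivial group, which is semi-regular; and $\sum_j|\Omega_j|^{c}\le n^{c}$ for $c\ge 1$ keeps the number of branches at $n^{\CO(\log d)}$. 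So from now on assume $\Gamma\le\Sym(\Omega)$ is transitive with $n=|\Omega|\ge 2$.

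I would compute a minimal block system $\FB$ (Theorem~\ref{thm:permutation-group-library}); then $\Gamma[\FB]$ is primitive and, by Lemma~\ref{la:gamma-d-closure}, lies in $\mgamma_d$, and by the primitive case below it has an almost $d$-ary structure graph $(G_0,r_0,\varphi_0)$ on the $b\coloneqq|\FB|$ blocks with $|\Br(G_0,r_0)|\le b^{\CO(\log d)}$. For a block $B\in\FB$, the restriction $\Gamma_B[B]$ of the setwise stabilizer is a $\mgamma_d$-group of degree $n/b<n$ (again by Lemma~\ref{la:gamma-d-closure}), so recursion yields an almost $d$-ary structure graph $(G_B,r_B,\varphi_B)$ with $|\Br(G_B,r_B)|\le(n/b)^{\CO(\log d)}$; since all blocks lie in one $\Gamma$-orbit and $G_B$ is $\Gamma_B[B]$-invariant by construction, one can choose these compatibly by transporting a single construction along the block permutations. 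Now form $G$ by attaching, to each leaf of $G_0$ labelled by a block $B$, a fresh copy of $G_B$ (identifying the leaf with $r_B$), and assemble $\varphi$ from $\varphi_0$ and the $\varphi_B$ using that $\Gamma$ permutes the copies exactly as it permutes the blocks. Then $L(G)=\bigcup_{B\in\FB}B=\Omega$; the tree unfolding $\Unf(G)$ refines level by level first as $\Unf(G_0)$ and then, inside each block, as the corresponding $\Unf(G_B)$, so it is almost $d$-ary; and $|\Br(G,v_0)|\le|\Br(G_0,r_0)|\cdot(1+\max_{B}|\Br(G_B,r_B)|)\le b^{\CO(\log d)}\cdot(n/b)^{\CO(\log d)}=n^{\CO(\log d)}$.

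It remains to handle the primitive case: a primitive $\mgamma_d$-group $\Delta\le\Sym(\Lambda)$ with $b\coloneqq|\Lambda|$. If $b\le d$, the flat structure graph $r_0\to\Lambda$, a single $d$-ary refinement step, already works, so assume $b>d$. Here I would invoke the classification of primitive $\mgamma_d$-groups --- the extension to $\mgamma_d$ of Cameron's analysis of large primitive groups, via the O'Nan--Scott theorem and the classification of finite simple groups, carried out in~\cite{GroheNS18} --- which shows that a point of $\Lambda$ is always determined by a sequence of successive choices, each from a set of size at most $d$, possibly with a regular (hence semi-regular) step interspersed. Concretely, for almost simple groups with socle $\Alt(m)$ acting on $k$-subsets one has $m\le d$ because $\Alt(m)$ must embed into $S_d$, and the subset-lattice structure graph of Example~\ref{exa:structure-graph-for-johnson-groups} has every internal vertex of out-degree at most $m\le d$ and at most $\sum_{i\le k}m(m-1)\cdots(m-i+1)\le(k+1)m^{k}\le b^{\CO(\log d)}$ maximal branches by the estimate in~\eqref{eq:structure-graph-johnson-size}; for the classical and affine geometric types the relevant field has size $\CO(d)$ and the relevant dimension is $\CO(\log d)$, and one reveals the point through a flag of subspaces, one dimension at a time, each step offering at most $(q^{a}-1)/(q-1)\le d$ choices; and for product and diagonal types one interleaves a recursively built structure graph for the (smaller) action on the coordinates with recursively built structure graphs for the (smaller) action on a single coordinate, regular direct factors being absorbed by the semi-regular clause. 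In each case the number of maximal branches is $d^{\CO(\log d)}\le b^{\CO(\log d)}$ since $b>d$, a normal subgroup of polynomially bounded index is absorbed through the imprimitive recursion as in the reduction behind Theorem~\ref{thm:luks-obstacles}, the type and its parameters are recognisable in polynomial time (as for Johnson groups, cf.~\cite{BabaiLS87}), and every gluing is carried out in time polynomial in the size of the graph produced.

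The main obstacle is exactly this primitive case: one needs both the classification of primitive $\mgamma_d$-groups and, for each O'Nan--Scott type, an explicit construction of a structure graph in which every vertex has out-degree at most $d$ (or a semi-regular refinement step at each level) and which has only $b^{\CO(\log d)}$ maximal branches; once these primitive building blocks are available, the imprimitive recursion and the gluing are routine. A secondary but genuine point is to verify that $\varphi$ and, above all, the almost $d$-ary property survive the gluing, in particular at the interface between $\Unf(G_0)$ and the attached copies $\Unf(G_B)$, because the definition of ``almost $d$-ary'' is sensitive to precisely how the successive partitions nest.
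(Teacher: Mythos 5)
Your high-level plan — peel off orbits, then minimal block systems, attaching recursively-built structure graphs below the leaves, and reducing the real work to building almost $d$-ary structure graphs for primitive $\mgamma_d$-groups — is the same approach the paper alludes to: the paper also gives no full proof here, only pointing to the classification of large primitive $\mgamma_d$-groups in~\cite{GroheNS18} and the Johnson-group construction of Example~\ref{exa:structure-graph-for-johnson-groups}, and that is exactly what your primitive case invokes. Two places that deserve more care than your sketch gives them (and which the paper silently delegates to~\cite{GroheNS18,Neuen19}): first, the primitive case cannot stop at ``$b\le d$ or else invoke the classification'' — there are primitive $\mgamma_d$-groups of degree $b>d$ that are neither semi-regular nor within polynomial reach of a Johnson action, and showing that every such group still admits an almost $d$-ary structure graph with $b^{\CO(\log d)}$ branches is precisely where the O'Nan--Scott analysis has to be carried out in full, type by type; second, verifying that the almost $d$-ary condition survives the gluing at the interface between $\Unf(G_0)$ and the attached copies $\Unf(G_B)$ requires tracking the setwise stabilizers $\Gamma^*_B$ in the \emph{standard action on maximal branches} (not in $\Gamma$ itself), since the definition is stated for the unfolded action $\Gamma^\psi$, and this bookkeeping is not as automatic as your ``refines level by level'' phrasing suggests.
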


The proof is the theorem is far from trivial and builds on the characterization of large primitive $\mgamma_d$-groups (see \cite{GroheNS18}).

\begin{cor}
 \label{cor:action-normalization-gamma-d}
 There is a Turing-reduction from the String Isomorphism Problem for $\mgamma_d$-groups to
 the String Isomorphism Problem for groups equipped with an almost $d$-ary structure tree running in time $n^{\CO(\log d)}$.
\end{cor}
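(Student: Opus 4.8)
The plan is to chain together the ingredients already assembled above: Theorem~\ref{thm:construct-structure-graph} to obtain a small almost $d$-ary structure graph, Lemma~\ref{la:standard-action-on-branches} to pass to the action on its maximal branches, and the identity $(\Iso_\Gamma(\Fx,\Fy))^\psi = \Iso_{\Gamma^*}(\Fx^*,\Fy^*)$ for correctness. The resulting reduction will make a single oracle call, so it is in fact essentially many-one.

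Concretely, given an instance $(\Gamma,\Fx,\Fy)$ of the String Isomorphism Problem with $\Gamma \in \mgamma_d$ acting on $\Omega$, $n \coloneqq |\Omega|$, I would proceed as follows. First, invoke Theorem~\ref{thm:construct-structure-graph} to compute an almost $d$-ary structure graph $(G,v_0,\varphi)$ for $\Gamma$ with $|\Br(G,v_0)| \leq n^{\CO(\log d)}$; since every vertex of $G$ lies on a branch starting at $v_0$ we have $|V(G)| \leq |\Br(G,v_0)| \leq n^{\CO(\log d)}$, so this step runs in time $n^{\CO(\log d)}$. Next, form the tree unfolding $T \coloneqq \Unf(G,v_0)$, whose vertex set is $\Br(G,v_0)$ and whose leaf set is $\Br^{*}(G,v_0)$, hence of size $n^{\CO(\log d)}$ and computable in time polynomial in that. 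Then compute the standard action $\psi\colon\Gamma \rightarrow \Sym(\Br^{*}(G))$ of Lemma~\ref{la:standard-action-on-branches} in time polynomial in $|\Br^{*}(G,v_0)| \leq n^{\CO(\log d)}$, set $\Gamma^* \coloneqq \Gamma^\psi$ (given by the $\psi$-images of a generating set for $\Gamma$), and define $\Fx^*\colon\Br^{*}(G)\rightarrow\Sigma$ and $\Fy^*\colon\Br^{*}(G)\rightarrow\Sigma$ by $\Fx^*(\bar v) \coloneqq \Fx(L(\bar v))$ and $\Fy^*(\bar v) \coloneqq \Fy(L(\bar v))$. By Lemma~\ref{la:standard-action-on-branches}(2) and the definition of an almost $d$-ary structure graph, $T$ is an almost $d$-ary structure tree for $\Gamma^*$, so $(\Gamma^*,T,\Fx^*,\Fy^*)$ is a legitimate instance of the target problem, and everything up to here is bounded by $n^{\CO(\log d)}$.

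For correctness I would argue that a single oracle call on $(\Gamma^*,T,\Fx^*,\Fy^*)$ recovers $\Iso_\Gamma(\Fx,\Fy)$. The identity $(\Iso_\Gamma(\Fx,\Fy))^\psi = \Iso_{\Gamma^*}(\Fx^*,\Fy^*)$ immediately gives decision equivalence: $\Fx \cong_\Gamma \Fy$ iff $\Fx^* \cong_{\Gamma^*} \Fy^*$. To recover the whole isomorphism coset, note that $\psi$ is injective: by Lemma~\ref{la:standard-action-on-branches}(1), $\psi(\gamma) = \id$ forces $(L(\bar v))^\gamma = L(\bar v)$ for every maximal branch $\bar v$, and since $\{L(\bar v) \mid \bar v \in \Br^{*}(G)\} = L(G) = \Omega$ this means $\gamma = \id$. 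Hence $\psi$ is an isomorphism onto $\Gamma^*$, and from generators for $\Aut_{\Gamma^*}(\Fx^*)$ together with a representative of $\Iso_{\Gamma^*}(\Fx^*,\Fy^*)$ returned by the oracle one pulls each permutation back through $\psi$ using the preimage routine of Theorem~\ref{thm:permutation-group-library}(7), obtaining a generating set for $\Aut_\Gamma(\Fx)$ and a representative of $\Iso_\Gamma(\Fx,\Fy)$; if the oracle reports emptiness, then $\Iso_\Gamma(\Fx,\Fy) = \emptyset$ as well. These final steps run in time polynomial in the two degrees (at most $n^{\CO(\log d)}$) and the generating set sizes.

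The genuinely hard content is Theorem~\ref{thm:construct-structure-graph} itself — producing an almost $d$-ary structure graph whose branch count stays $n^{\CO(\log d)}$, which rests on the classification of large primitive $\mgamma_d$-groups — and is taken as given here. Within the proof of the corollary, the only point that needs care is that the bound $n^{\CO(\log d)}$ genuinely propagates through each stage (structure graph, unfolding, the action $\psi$, the strings, and the pullback) so that the composite exponent stays $\CO(\log d)$ and only one oracle call is made; the rest is bookkeeping on top of the lemmas already in place.
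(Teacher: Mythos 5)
Your proof is correct and matches the paper's intended argument: the corollary is an immediate consequence of Theorem~\ref{thm:construct-structure-graph} combined with Lemma~\ref{la:standard-action-on-branches} and the displayed identity $(\Iso_\Gamma(\Fx,\Fy))^\psi = \Iso_{\Gamma^*}(\Fx^*,\Fy^*)$, exactly as you chain them. Your extra observations — injectivity of $\psi$ via property (1) and $L(G)=\Omega$, the pullback through Theorem~\ref{thm:permutation-group-library}(7), and the bookkeeping showing the exponent $\CO(\log d)$ survives each stage — are the right details to fill in, and the remark that a single oracle call suffices is also accurate.
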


As already indicated above, having normalized the input to the String Isomorphism Problem, one can prove an adaptation of the Unaffected Stabilizers Theorem and extend Babai's algorithm to the setting of $\mgamma_d$-groups.

\begin{theorem}[Grohe, Neuen and Schweitzer~\cite{GroheNS18}]
 The String Isomorphism Problem for $\mgamma_d$-groups can be solved in time $n^{\polylog{d}}$.
\end{theorem}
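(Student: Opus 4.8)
The plan is to combine the normalization provided by Corollary~\ref{cor:action-normalization-gamma-d} with Babai's Local Certificates framework, run on \emph{short} test sets, and to invoke the remainder of Babai's quasi-polynomial algorithm as a black box (throughout we may assume $d\leq n$, since otherwise Babai's algorithm already runs in time $n^{\polylog n}\leq n^{\polylog d}$). By Corollary~\ref{cor:action-normalization-gamma-d} it suffices to solve the String Isomorphism Problem, within time $n^{\polylog d}$, for an instance $(\Gamma^*,(T,v_0),\Fx^*,\Fy^*)$ in which $(T,v_0)$ is an almost $d$-ary structure tree for $\Gamma^*$: the factor-$n^{\CO(\log d)}$ blow-up of the domain incurred by the reduction is then absorbed, since $(n^{\CO(\log d)})^{\polylog d}=n^{\polylog d}$, and $\Gamma^*\cong\Gamma$ is still a $\mgamma_d$-group. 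From this point on, every standard Luks reduction is performed along the next partition of the structure tree, so the transitive group being split is always either a subgroup of $S_d$ (of order at most $d!$) or semi-regular (of order at most the current window size). Semi-regular groups are handled directly; for the others we use the analogue of Theorem~\ref{thm:luks-obstacles} for $\mgamma_d$-groups, i.e.\ the classification of large primitive $\mgamma_d$-groups of~\cite{GroheNS18}: a primitive $\mgamma_d$-group of degree $b$ and order at most $b^{\CO(\log d)}$ is processed by brute-force Luks reduction, while one of larger order is essentially a Johnson group and therefore comes equipped with a giant representation $\varphi\colon\Gamma'\to S_k$. Here $k\leq d$ holds automatically, because $(\Gamma')^{\varphi}\geq A_k$ is a homomorphic image of a subgroup of $\Gamma^*$, hence lies in $\mgamma_d$ by Lemma~\ref{la:gamma-d-closure}, and the simple group $A_k$ must consequently embed into $S_d$.

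To exploit a giant representation $\varphi\colon\Gamma'\to S_k$, I would run the Local Certificates Routine (Algorithm~\ref{alg:local-certificates}) only on test sets $T\subseteq[k]$ of size $t=\Theta(\log d)$. This stays within budget: there are at most $\binom{k}{t}\leq d^{\CO(\log d)}\leq n^{\polylog d}$ such test sets, and the intrinsic $t!$-factor of the routine is $t^{\CO(t)}=d^{\CO(\log\log d)}\leq n^{\polylog d}$; moreover the Affected Orbit Lemma (Lemma~\ref{la:kernel-affected-orbits}), applied to the restricted representation $\varphi_T$, ensures that inside each affected orbit the orbits of $\ker(\varphi_T)$ are a $1/t$-fraction, so the recursive String Isomorphism calls are on genuinely smaller windows and Luks reduction along the kernel is cheap. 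The certificates of fullness and non-fullness are then aggregated exactly as in Babai's algorithm into either a canonical alternating action on a subset of $[k]$ of size at least $\frac{3}{4}k$, or a small family of $t$-ary relational structures with few symmetries; all further processing --- Design Lemma, Split-or-Johnson, and the ensuing recursion --- is used verbatim from Babai's algorithm as a black box, operating on a domain of size $k\leq d$ and hence in time $d^{\polylog d}\leq n^{\polylog d}$.

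The crux, and the step I expect to be the main obstacle, is to prove that the Local Certificates Routine remains \emph{correct} with test sets of size only $\Theta(\log d)$. Correctness rests on the Unaffected Stabilizers Theorem (Theorem~\ref{thm:unaffected-stabilizer-babai}), whose hypothesis $k>\max\{8,2+\log_2 n\}$ is precisely what we cannot afford and whose naive $\mgamma_d$-variant is known to fail. The plan is to establish a structure-tree-aware replacement: if $\Gamma'$ carries an almost $d$-ary structure tree and $\varphi\colon\Gamma'\to S_k$ is a giant representation with $k>\max\{8,2+c\log_2 d\}$ for a suitable absolute constant $c$, then, writing $D\subseteq\Omega$ for the set of points unaffected by $\varphi$, we still have $(\Gamma'_{(D)})^{\varphi}\geq A_k$, and in particular at least one point is affected. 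I would prove this by induction along the levels of the structure tree. At the coarsest nontrivial level the primitive quotient is either semi-regular --- in which case a section isomorphic to $A_k$ with $k$ above a constant cannot be accounted for at this level, forcing the giant representation to factor through a block stabilizer and pushing the analysis one level deeper --- or a subgroup of $S_d$; the bound $k>2+c\log_2 d$ is calibrated so that $A_k$ cannot be a section of anything supported on at most $d$ points, which isolates the affected points level by level and lets one run the index/intersection argument of Babai's original proof with $\log_2 n$ replaced by $\log_2 d$ on each level. The delicate part is the bookkeeping across the (possibly many) levels of a deep structure tree while keeping the bound on $k$ independent of $n$; this is exactly where the \emph{almost $d$-ary} hypothesis, rather than an arbitrary structure tree, is essential, since it forces each level to contribute only a factor controlled by $d$. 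With this adapted Unaffected Stabilizers Theorem in hand, the short-test-set Local Certificates Routine is correct, the refined recursion analysis --- tracking the window size as one descends the structure tree and the group order as one passes to kernels of giant representations --- closes to a bound of $n^{\polylog d}$, and composition with Corollary~\ref{cor:action-normalization-gamma-d} completes the proof.
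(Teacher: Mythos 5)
Your high-level plan matches the paper's essentially step by step: normalize via Corollary~\ref{cor:action-normalization-gamma-d} to an almost $d$-ary structure tree (absorbing the $n^{\CO(\log d)}$ blow-up in the domain), run Luks reduction along the tree, classify large primitive $\mgamma_d$-groups as Johnson-type with giant representations onto $S_k$, observe $k\leq d$ via Lemma~\ref{la:gamma-d-closure}, apply the Local Certificates Routine on test sets of size $\Theta(\log d)$, invoke Babai's combinatorial machinery as a black box (affordable since $k\leq d$), and close with a refined recursion analysis. You also correctly identify the adapted Unaffected Stabilizers Theorem as the crux which the normalization is designed to enable; the paper says exactly this and defers the proof to \cite{GroheNS18} without further detail.

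Where your proposal goes astray is in the sketch you offer for that adapted theorem. You assert that ``the bound $k>2+c\log_2 d$ is calibrated so that $A_k$ cannot be a section of anything supported on at most $d$ points,'' but this contradicts your own earlier, correct observation that $k\leq d$: for any $k\leq d$ the alternating group $A_k$ \emph{does} embed in $S_d$ and is a perfectly legitimate composition factor of a $\mgamma_d$-group, no matter how large $d$ is relative to $\log d$. So the mechanism you describe --- excluding $A_k$-sections level by level by a degree comparison --- cannot be what makes the almost $d$-ary structure tree force the conclusion; the argument in \cite{GroheNS18} is more delicate and in particular must track how a giant representation distributes across the \emph{semi-regular} levels as well, not merely rule it out degree-wise at the bounded levels. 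Since the survey under review does not itself prove the adapted Unaffected Stabilizers Theorem (it is cited from \cite{GroheNS18}), this inconsistency does not undermine your overall outline, but the specific justification you offer for the key ingredient does not hold up and should be replaced by a black-box citation rather than a sketch.
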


Recall that there is a polynomial-time Turing reduction from the Graph Isomorphism Problem for graphs of maximum degree $d$ to the String Isomorphism Problem for $\mgamma_d$-groups \cite{Luks82,BabaiL83}.
Hence, the Graph Isomorphism Problem for graphs of maximum degree $d$ can be solved in time $n^{\polylog{d}}$.
Actually, we reprove this result in Section \ref{sec:parameterized-isomorphism-tests} using slightly different tools.

\subsection{From Strings to Hypergraphs}
\label{subsec:sets-of-strings}

Next, we present another extension of the presented methods taking more general input structures into account.
Specifically, up to this point, we only considered strings for the input structures which are quite restrictive.
In this subsection, we briefly describe how to extend the methods to isomorphism testing for hypergraphs.

A hypergraph is a pair $\CH = (V,\CE)$ where $V$ is a finite vertex set and $\CE \subseteq 2^{V}$ (where $2^{V}$ denotes the power set of $V$).
As for strings, we are interested in the isomorphism problem between hypergraphs where we are additionally given a permutation group that restricts possible isomorphisms between the two given hypergraphs.
More precisely, in this work, the \defword{Hypergraph Isomorphism Problem} takes as input two hypergraphs $\CH_1 = (V,\CE_1)$ and $\CH_2 = (V,\CE_2)$ over the same vertex set and a permutation group $\Gamma \leq \Sym(V)$ (given by a set of generators),
and asks whether is some $\gamma \in \Gamma$ such that $\gamma\colon \CH_1 \cong \CH_2$ (i.e., $\gamma$ is an isomorphism from $\CH_1$ to $\CH_2$).

For the purpose of designing an algorithm it is actually more convenient to consider the following equivalent problem.
The \defword{Set-of-Strings Isomorphism Problem} takes as input two sets $\FX = \{\Fx_1,\dots,\Fx_m\}$ and $\FY = \{\Fy_1,\dots,\Fy_m\}$ where $\Fx_i,\Fy_i \colon \Omega \rightarrow \Sigma$ are strings, and a group $\Gamma \leq \Sym(\Omega)$,
and asks whether there is some $\gamma \in \Gamma$ such that $\FX^{\gamma} \coloneqq \{\Fx_1^{\gamma},\dots,\Fx_m^{\gamma}\} = \FY$.

\begin{theorem}
 \label{thm:eq-sets-of-strings-hypergraphs}
 The Hypergraph Isomorphism Problem for $\mgamma_d$-groups is
 polynomial-time equivalent to the Set-of-Strings Isomorphism Problem
 for $\mgamma_d$-groups under many-one reductions.
\end{theorem}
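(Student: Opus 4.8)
The plan is to establish the equivalence by exhibiting polynomial-time many-one reductions in both directions, with the main work being to verify that membership in $\mgamma_d$ is preserved. For the forward direction (Hypergraph Isomorphism $\to$ Set-of-Strings Isomorphism), given hypergraphs $\CH_1 = (V,\CE_1)$ and $\CH_2 = (V,\CE_2)$ together with $\Gamma \leq \Sym(V)$, I would set $\Omega \coloneqq V$, keep the group $\Gamma$ unchanged, and encode each hyperedge $e \in \CE_i$ as its characteristic string $\Fx_e \colon V \to \{0,1\}$ with $\Fx_e(v) = 1$ iff $v \in e$. Thus $\FX \coloneqq \{\Fx_e \mid e \in \CE_1\}$ and $\FY \coloneqq \{\Fx_e \mid e \in \CE_2\}$. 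The key observation is that applying $\gamma \in \Sym(V)$ to the string $\Fx_e$ yields exactly $\Fx_{e^\gamma}$ (up to the convention in the definition of $\Fx^\gamma$), so $\FX^\gamma = \FY$ if and only if $\{e^\gamma \mid e \in \CE_1\} = \CE_2$, i.e.\ $\gamma$ is a hypergraph isomorphism. Since the group is literally the same, it is a $\mgamma_d$-group on exactly the same domain; the reduction is clearly polynomial-time.

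For the converse direction (Set-of-Strings Isomorphism $\to$ Hypergraph Isomorphism), the difficulty is that strings over an alphabet $\Sigma$ carry more information than $0/1$-hyperedges, so the encoding must simulate alphabet symbols and must also prevent distinct strings from being conflated by the hypergraph encoding. The standard trick is to blow up the domain: introduce for each position $\alpha \in \Omega$ a bundle of $|\Sigma|$ ``slot'' points $(\alpha, s)$ for $s \in \Sigma$, so that a string $\Fx$ becomes the set of points $\{(\alpha, \Fx(\alpha)) \mid \alpha \in \Omega\}$, which is one hyperedge of size $|\Omega|$. The set of strings $\FX$ then becomes a hypergraph on vertex set $\Omega \times \Sigma$. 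The new group is the image of $\Gamma$ under the homomorphism $\Gamma \to \Sym(\Omega \times \Sigma)$ sending $\gamma$ to the permutation $(\alpha,s) \mapsto (\alpha^\gamma, s)$ acting diagonally on slot-bundles. This image is a homomorphic image of $\Gamma$, hence still in $\mgamma_d$ by Lemma~\ref{la:gamma-d-closure}(2); and the new domain $\Omega \times \Sigma$ has size polynomial in the input. One checks that $\gamma$ witnesses $\FX^\gamma = \FY$ iff its image witnesses an isomorphism of the constructed hypergraphs; the condition $|\FX| = |\FY|$ being a syntactic precondition handled trivially.

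The main obstacle I anticipate is not the combinatorics of either encoding but the bookkeeping needed to handle a few degenerate cases cleanly: strings that agree on large parts of $\Omega$, ensuring the slot-bundle encoding is injective on strings (which it is, since $(\alpha, s)$ and $(\alpha, s')$ are distinct points for $s \neq s'$), and making sure the group one attaches to the hypergraph instance genuinely restricts to the slot-preserving permutations rather than permitting spurious reshufflings of slot values across positions. This last point is exactly why one defines the group as the homomorphic image of $\Gamma$ rather than as some setwise stabilizer in $\Sym(\Omega\times\Sigma)$ — the image is automatically a $\mgamma_d$-group and automatically acts by position permutations only, so no extra intersection argument is needed. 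Given that, both directions are straightforward verifications, and the whole equivalence is polynomial-time many-one in each direction as claimed.
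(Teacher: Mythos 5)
Your proposal is correct and matches the paper's proof essentially verbatim: both directions use the same encodings (hyperedges as characteristic strings over $V$; strings as hyperedges $\{(\alpha,\Fx(\alpha))\mid\alpha\in\Omega\}$ on the blown-up domain $\Omega\times\Sigma$ with $\Gamma$ acting as $(\alpha,a)^\gamma=(\alpha^\gamma,a)$). Your additional remark that the translated group stays in $\mgamma_d$ as a homomorphic image of $\Gamma$ is a worthwhile explicit justification that the paper leaves implicit.
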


\begin{proofsketch}
 A hypergraph $\CH = (V,\CE)$ can be translated into a set $\FX \coloneqq \{\Fx_E \mid E \in \CE\}$ over domain $\Omega \coloneqq V$ where $\Fx_E$ is the characteristic function of $E$, i.e., $\Fx(v) = 1$ if $v \in E$ and $\Fx(v) = 0$ if $v \notin E$.
 
 In the other direction, a set of strings $\FX$ over domain $\Omega$ and alphabet $\Sigma$ can be translated into a hypergraph $\CH = (V,\CE)$ where $V \coloneqq \Omega \times \Sigma$ and $\CE \coloneqq \{E_\Fx \mid \Fx \in \FX\}$
 where $E_\Fx \coloneqq \{(\alpha,\Fx(\alpha)) \mid \alpha \in \Omega)\}$.
 The group $\Gamma \leq \Sym(\Omega)$ is translated into the natural action of $\Gamma$ on $V$ defined via $(\alpha,a)^\gamma \coloneqq (\alpha^\gamma,a)$.
 
 It is easy to verify that both translations preserve isomorphisms.
\end{proofsketch}

For the purpose of building a recursive algorithm, we consider a slightly different problem that crucially allows us to modify instances in a certain way.

Let $\Gamma \leq \Sym(\Omega)$ be a group and let $\FP$ be a $\Gamma$-invariant partition of the set $\Omega$.
A \defword{$\FP$-string} is a pair $(P,\Fx)$ where $P \in \FP$ and $\Fx\colon P \rightarrow \Sigma$ is a string over a finite alphabet $\Sigma$.
For $\sigma \in \Sym(\Omega)$ the string $\Fx^{\sigma}$ is defined by $\Fx^{\sigma}\colon P^{\sigma} \rightarrow \Sigma\colon \alpha \mapsto \Fx(\alpha^{\sigma^{-1}})$.
A permutation $\sigma \in \Sym(\Omega)$ is a \defword{$\Gamma$-isomorphism} from $(P,\Fx)$ to a second $\FP$-string $(Q,\Fy)$ if $\sigma \in \Gamma$ and $(P^{\sigma},\Fx^{\sigma}) = (Q,\Fy)$.

The \defword{Generalized String Isomorphism Problem} takes as input a permutation group $\Gamma \leq \Sym(\Omega)$,
a $\Gamma$-invariant partition $\FP$ of $\Omega$,
and $\FP$-strings $(P_1,\Fx_1),\dots,(P_m,\Fx_m)$ and $(Q_1,\Fy_1),\dots,(Q_m,\Fy_m)$,
and asks whether there is some $\gamma \in \Gamma$ such that
\begin{equation*}
\{(P_1^{\gamma},\Fx_1^{\gamma}),\dots,(P_m^{\gamma},\Fx_m^{\gamma})\} = \{(Q_1,\Fy_1),\dots,(Q_m,\Fy_m)\}.
\end{equation*}

We denote $\FX = \{(P_1,\Fx_1),\dots,(P_m,\Fx_m)\}$ and $\FY = \{(Q_1,\Fy_1),\dots,(Q_m,\Fy_m)\}$.
Additionally, $\Iso_\Gamma(\FX,\FY)$ denotes the set of $\Gamma$-isomorphisms from $\FX$ to $\FY$ and $\Aut_\Gamma(\FX) \coloneqq \Iso_\Gamma(\FX,\FX)$.

It is easy to see that the Set-of-Strings Isomorphism Problem forms a special case of the Generalized String Isomorphism Problem where $\FP$ is the trivial partition consisting of one block.

For the rest of this subsection we denote by $n \coloneqq |\Omega|$ the size of the domain, and $m$ denotes the size of $\FX$ and $\FY$ (we always assume $|\FX| = |\FY|$, otherwise the problem is trivial). 
The goal is to sketch an algorithm that solves the Generalized String Isomorphism Problem for $\mgamma_d$-groups in time $(n+m)^{\polylog{d}}$.

As a starting point it was already observed by Miller \cite{Miller83b} that Luks's algorithm can be easily extended to hypergraphs resulting in an isomorphism test running in time $(n+m)^{\CO(d)}$ for $\mgamma_d$-groups.
Similar to the previous subsection, the main obstacle we are facing to obtain a more efficient algorithm is the adaptation of the Local Certificates Routine.

Let $\Gamma \leq \Sym(\Omega)$ be a $\mgamma_d$-group, $\FP$ a $\Gamma$-invariant partition of $\Omega$, and $\FX,\FY$ two sets of $\FP$-strings.
Recall that, in a nutshell, the Local Certificates Routine considers a $\Gamma$-invariant window $W \subseteq \Omega$ such that $\Gamma[W] \leq \Aut(\FX[W])$ (i.e., the group $\Gamma$ respects $\FX$ restricted to the window $W$) and aims at creating automorphisms of the entire structure $\FX$ (from the local information that $\Gamma$ respects $\FX$ on the window $W$).
In order to create these automorphisms the Local Certificates Routine considers the group $\Gamma_{(\Omega \setminus W)}$ fixing every point outside of $W$.
Remember that the Unaffected Stabilizers Theorem (resp.\ the variant
suitable for $\mgamma_d$-groups) guarantees that the group
$\Gamma_{(\Omega \setminus W)}$ is large.

For the String Isomorphism Problem it is easy to see that
$\Gamma_{(\Omega \setminus W)}$ consists only of automorphisms of the
input string (assuming $\Gamma$ respects the input string $\Fx$ on the window $W$) since there are no dependencies between the positions
within the window $W$ and outside of $W$.
However, for the Generalized String Isomorphism Problem, this is not true anymore.
As a simple example, suppose the input is a graph on vertex set $\Omega$ (which, in particular, can be interpreted as a hypergraph and translated into a set of strings) and the edges between $W$ and $\Omega \setminus W$ form a perfect matching.
Then a permutation $\gamma \in \Gamma_{(\Omega \setminus W)}$ is not necessarily an automorphism of $G$ even if it respects $G[W]$, since it may not preserve the edges between $W$ and $\Omega \setminus W$.
Actually, since the edges between $W$ and $\Omega \setminus W$ form a perfect matching, the only automorphism of $G$ in the group $\Gamma_{(\Omega \setminus W)}$ is the identity mapping.

In other words, in order to compute $\Aut(\FX)$, it is not possible to consider $\FX[W]$ and $\FX[\Omega \setminus W]$ independently as is done by the Local Certificates Routine.

The solution to this problem is guided by the following simple observation.
Suppose that $\FX[W]$ is \defword{simple}, i.e.,
\[m_{\FX[W]}(P) \coloneqq |\{\Fx[W \cap P] \mid (P,\Fx) \in \FX, W \cap P \neq \emptyset\}| = 1\]
for all $P \in \FP[W]$ (in other words, when restricted to $W$, each block $P$ only contains one string).
In this case it is possible to consider $\FX[W]$ and $\FX[\Omega \setminus W]$ independently since there can be no more additional dependencies between $W$ and $\Omega \setminus W$.

Let $W \subseteq \Omega$ be a $\Gamma$-invariant set such that $\Gamma[W] \leq \Aut(\FX[W])$ (this is the case during the Local Certificates Routine).
In order to solve the problem described above in general, the basic idea is to introduce another normalization procedure, referred to as \defword{Simplification Routine}, modifying the instance in such a way that $\FX[W]$ becomes simple.
Eventually, this allows us to extend the Local Certificates Routine to the setting of the Generalized String Isomorphism Problem.

In the following we briefly describe the \defword{Simplification Routine} which exploits the specific definition of the Generalized String Isomorphism Problem.
Consider a set $P \in \FP$.
In order to ``simplify'' the instance we define an equivalence
relation on the set $\FX[[P]] \coloneqq \{\Fx \mid (P,\Fx) \in \FX\}$ of all strings on $P$.
Two $\FP$-strings $(P,\Fx_1)$ and $(P,\Fx_2)$ are $W$-equivalent if they are identical on the window $W$, i.e., $\Fx_1[W \cap P] = \Fx_2[W \cap P]$.
For each equivalence class we create a new block $P'$ containing exactly the strings from the equivalence class.
Since the group $\Gamma$ respects the induced sub-instance $\FX[W]$ it naturally acts on the equivalence classes.
This process is visualized in Figure \ref{fig:simplify-on-window-ext} and formalized below.

\begin{figure}
  \centering
  \scalebox{0.9}{
  \begin{tikzpicture}
   \node at (-4.2,0.8) {$\FX$};
   
   \draw[fill,gray!60] (-0.15,-0.15) rectangle (1.2,1.8);
   \draw[fill,gray!60] (3.15,-0.15) rectangle (4.2,1.8);
   
   \node at (1.35,2.4) {$P_1$};
   \node at (4.65,2.4) {$P_2$};
   
   \draw[thick] (-0.3,-0.3) -- (6.3,-0.3);
   \draw[thick] (-0.3,1.95) -- (6.3,1.95);
   \draw[thick] (-0.3,-0.3) -- (-0.3,1.95);
   \draw[thick] (3,-0.3) -- (3,1.95);
   \draw[thick] (6.3,-0.3) -- (6.3,1.95);
   
   \draw (0,0) -- (2.7,0);
   \draw (0,0.3) -- (2.7,0.3);
   \draw (0,0.45) -- (2.7,0.45);
   \draw (0,0.75) -- (2.7,0.75);
   \draw (0,0.9) -- (2.7,0.9);
   \draw (0,1.2) -- (2.7,1.2);
   \draw (0,1.35) -- (2.7,1.35);
   \draw (0,1.65) -- (2.7,1.65);
   \foreach \i in {0,...,9}{
    \draw (\i*0.3,0) -- (\i*0.3,0.3);
    \draw (\i*0.3,0.45) -- (\i*0.3,0.75);
    \draw (\i*0.3,0.9) -- (\i*0.3,1.2);
    \draw (\i*0.3,1.35) -- (\i*0.3,1.65);
   }
   \foreach \a [count=\i] in {a,b,a,a,a,b,b,a,b}{
    \node at (-0.15 + \i*0.3,0.15) {\scriptsize $\a$};
   }
   \foreach \a [count=\i] in {a,a,a,b,a,b,a,a,a}{
    \node at (-0.15 + \i*0.3,0.6) {\scriptsize $\a$};
   }
   \foreach \a [count=\i] in {a,a,a,b,a,a,a,b,a}{
    \node at (-0.15 + \i*0.3,1.05) {\scriptsize $\a$};
   }
   \foreach \a [count=\i] in {a,b,a,a,a,b,a,b,a}{
    \node at (-0.15 + \i*0.3,1.5) {\scriptsize $\a$};
   }
   
   \draw (3.3,0) -- (6,0);
   \draw (3.3,0.3) -- (6,0.3);
   \draw (3.3,0.45) -- (6,0.45);
   \draw (3.3,0.75) -- (6,0.75);
   \draw (3.3,0.9) -- (6,0.9);
   \draw (3.3,1.2) -- (6,1.2);
   \draw (3.3,1.35) -- (6,1.35);
   \draw (3.3,1.65) -- (6,1.65);
   \foreach \i in {0,...,9}{
    \draw (3.3+\i*0.3,0) -- (3.3+\i*0.3,0.3);-0.3
    \draw (3.3+\i*0.3,0.45) -- (3.3+\i*0.3,0.75);
    \draw (3.3+\i*0.3,0.9) -- (3.3+\i*0.3,1.2);
    \draw (3.3+\i*0.3,1.35) -- (3.3+\i*0.3,1.65);
   }
   \foreach \a [count=\i] in {a,a,b,b,b,b,b,a,a}{
    \node at (3.15 + \i*0.3,0.15) {\scriptsize $\a$};
   }
   \foreach \a [count=\i] in {a,b,a,a,a,a,a,b,b}{
    \node at (3.15 + \i*0.3,0.6) {\scriptsize $\a$};
   }
   \foreach \a [count=\i] in {a,b,a,a,a,b,a,a,b}{
    \node at (3.15 + \i*0.3,1.05) {\scriptsize $\a$};
   }
   \foreach \a [count=\i] in {a,b,a,a,b,a,a,b,b}{
    \node at (3.15 + \i*0.3,1.5) {\scriptsize $\a$};
   }
   
   \draw[thick,->] (1,-0.4) -- (-1.95,-1.4);
   \draw[thick,->] (2.3,-0.4) -- (1.25,-1.4);
   \draw[thick,->] (3.7,-0.4) -- (4.55,-1.4);
   \draw[thick,->] (5,-0.4) -- (7.85,-1.4);
   
   \node at (-4.2,-2.4) {$\FX'$};
   
   \draw[fill,gray!60] (-3.45,-3.15) rectangle (-2.1,-2.1);
   \draw[fill,gray!60] (-0.15,-3.15) rectangle (1.2,-2.1);
   \draw[fill,gray!60] (3.15,-3.15) rectangle (4.2,-1.65);
   \draw[fill,gray!60] (6.45,-3.15) rectangle (7.5,-2.55);
   
   \node at (-1.95,-3.75) {$P_1 \times \{abaa\}$};
   \node at (1.35,-3.75) {$P_1 \times \{aaab\}$};
   \node at (4.65,-3.75) {$P_2 \times \{aba\}$};
   \node at (7.95,-3.75) {$P_2 \times \{aab\}$};
   
   \draw[thick] (-3.6,-3.3) -- (9.6,-3.3);
   \draw[thick] (-3.6,-1.5) -- (9.6,-1.5);
   \draw[thick] (-3.6,-3.3) -- (-3.6,-1.5);
   \draw[thick] (-0.3,-3.3) -- (-0.3,-1.5);
   \draw[thick] (3,-3.3) -- (3,-1.5);
   \draw[thick] (6.3,-3.3) -- (6.3,-1.5);
   \draw[thick] (9.6,-3.3) -- (9.6,-1.5);
   
   \draw (0,-3) -- (2.7,-3);
   \draw (0,-2.7) -- (2.7,-2.7);
   \draw (0,-2.55) -- (2.7,-2.55);
   \draw (0,-2.25) -- (2.7,-2.25);
   \draw (-3.3,-3) -- (-0.6,-3);
   \draw (-3.3,-2.7) -- (-0.6,-2.7);
   \draw (-3.3,-2.55) -- (-0.6,-2.55);
   \draw (-3.3,-2.25) -- (-0.6,-2.25);
   \foreach \i in {0,...,9}{
    \draw (-3.3+\i*0.3,-3) -- (-3.3+\i*0.3,-2.7);
    \draw (-3.3+\i*0.3,-2.55) -- (-3.3+\i*0.3,-2.25);
    \draw (\i*0.3,-3) -- (\i*0.3,-2.7);
    \draw (\i*0.3,-2.55) -- (\i*0.3,-2.25);
   }
   \foreach \a [count=\i] in {a,b,a,a,a,b,b,a,b}{
    \node at (-3.45 + \i*0.3,-2.85) {\scriptsize $\a$};
   }
   \foreach \a [count=\i] in {a,a,a,b,a,b,a,a,a}{
    \node at (-0.15 + \i*0.3,-2.85) {\scriptsize $\a$};
   }
   \foreach \a [count=\i] in {a,a,a,b,a,a,a,b,a}{
    \node at (-0.15 + \i*0.3,-2.4) {\scriptsize $\a$};
   }
   \foreach \a [count=\i] in {a,b,a,a,a,b,a,b,a}{
    \node at (-3.45 + \i*0.3,-2.4) {\scriptsize $\a$};
   }
   
   \draw (3.3,-3) -- (6,-3);
   \draw (3.3,-2.7) -- (6,-2.7);
   \draw (3.3,-2.55) -- (6,-2.55);
   \draw (3.3,-2.25) -- (6,-2.25);
   \draw (3.3,-2.1) -- (6,-2.1);
   \draw (3.3,-1.8) -- (6,-1.8);
   \draw (6.6,-3) -- (9.3,-3);
   \draw (6.6,-2.7) -- (9.3,-2.7);
   \foreach \i in {0,...,9}{
    \draw (3.3+\i*0.3,-3) -- (3.3+\i*0.3,-2.7);
    \draw (3.3+\i*0.3,-2.55) -- (3.3+\i*0.3,-2.25);
    \draw (3.3+\i*0.3,-2.1) -- (3.3+\i*0.3,-1.8);
    \draw (6.6+\i*0.3,-3) -- (6.6+\i*0.3,-2.7);
   }
   \foreach \a [count=\i] in {a,a,b,b,b,b,b,a,a}{
    \node at (6.45 + \i*0.3,-2.85) {\scriptsize $\a$};
   }
   \foreach \a [count=\i] in {a,b,a,a,a,a,a,b,b}{
    \node at (3.15 + \i*0.3,-2.85) {\scriptsize $\a$};
   }
   \foreach \a [count=\i] in {a,b,a,a,a,b,a,a,b}{
    \node at (3.15 + \i*0.3,-2.4) {\scriptsize $\a$};
   }
   \foreach \a [count=\i] in {a,b,a,a,b,a,a,b,b}{
    \node at (3.15 + \i*0.3,-1.95) {\scriptsize $\a$};
   }
   
  \end{tikzpicture}
  }
  \caption{A set $\FX$ of $\FP$-strings is given in the top and the ``simplified'' instance $\FX'$ is given below. The window $W$ is marked in gray.
   Note that $\FX'[W']$ is simple where $W'$ denotes the window marked in gray in the bottom part of the figure.}
  \label{fig:simplify-on-window-ext}
\end{figure}
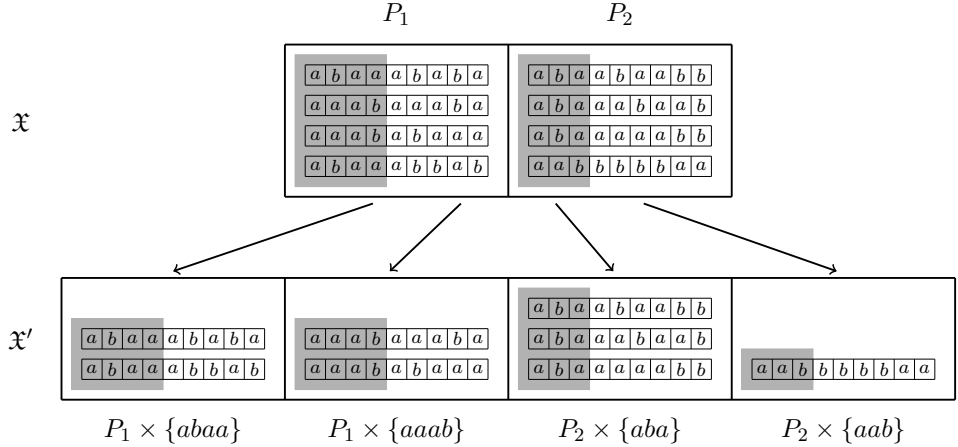

Let $\Omega' \coloneqq \bigcup_{P \in \FP} P \times \{\Fx[W \cap P] \mid (P,\Fx) \in \FX\}$
and $\FP' \coloneqq \{P \times \{\Fx[W \cap P]\} \mid (P,\Fx) \in \FX\}.$
Also define
\[\FX' \coloneqq \left\{\Bigl(P \times \{\Fx[W \cap P]\},\Fx'\Bigr) \;\;\Big|\;\; (P,\Fx) \in \FX\right\}\]
where $\Fx'\colon P \times \{\Fx[W \cap P]\} \rightarrow \Sigma\colon (\alpha,\Fx[W \cap P]) \mapsto \Fx(\alpha)$.
The set $\FY'$ is defined similarly for the instance $\FY$.
Note that $\FX'$ and $\FY'$ are sets of $\FP'$-strings.
Finally, the group $\Gamma$ faithfully acts on the set $\Omega'$ via $(\alpha,\Fz)^{\gamma} = (\alpha^{\gamma},\Fz^{\gamma})$ yielding an injective homomorphism $\psi \colon \Gamma \rightarrow \Sym(\Omega')$.
Define $\Gamma' \coloneqq \Gamma^{\psi}$.
It can be easily checked that the updated instance is equivalent to the original instance.
Also, $\FX'[W']$ is simple where $W' \coloneqq \{(\alpha,\Fz) \in \Omega' \mid \alpha \in W\}$.

While this simplification allows us to treat $\FX'[W']$ and $\FX'[W' \setminus \Omega']$ independently and thus solves the above problem, it creates several additional issues that need to be addressed.
First, this modification may destroy the normalization property (i.e., the existence of an almost $d$-ary sequence of partitions).
As a result, the Local Certificates Routine constantly needs to re-normalize the input instances which requires a precise analysis of the increase in size occurring from the re-normalization.
In turn, the re-normalization of instances creates another problem.
In the \defword{Aggregation of Local Certificates} (where the local certificates are combined into a relational structure), the outputs of the Local Certificates Routine are compared with each other requiring the outputs to be isomorphism-invariant.
However, the re-normalization procedure is not isomorphism-invariant.
The solution to this problem is to run the Local Certificates Routine in parallel on all pairs of test sets compared later on.
This way, one can ensure that all instances are normalized in the same way.

Overall, these ideas allow us to obtain the following theorem.

\begin{theorem}[Neuen \cite{Neuen20}]
 \label{thm:sets-of-strings}
 The Generalized String Isomorphism Problem for $\mgamma_d$-groups can be solved in time $(n+m)^{\polylog{d}}$ where $n$ denotes the size of the domain and $m$ the number of strings in the input sets.
\end{theorem}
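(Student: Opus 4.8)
The plan is to build on the algorithm for the String Isomorphism Problem for $\mgamma_d$-groups running in time $n^{\polylog d}$, upgrading each of its components to the more general input. Concretely, I would structure the algorithm around a recursion on the pair $(\Gamma,\FP)$ that mirrors Luks's recursion: when $\Gamma$ is intransitive on the current window we split orbit by orbit; when $\Gamma$ is transitive but imprimitive we perform standard Luks reduction along the next partition of an almost $d$-ary structure tree; and when we hit an obstacle — a large primitive $\mgamma_d$-group, which by the characterization of such groups is essentially a Johnson group admitting a giant representation $\varphi\colon\Gamma\to S_k$ — we invoke a Local Certificates Routine, aggregate the certificates into relational structures (or a large automorphism group), and hand these structures to Babai's combinatorial subroutines (Design Lemma, Split-or-Johnson) used as a black box. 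As already observed by Miller, the Luks recursion mechanisms themselves depend only on the group action and so carry over; the first new ingredient is that at the leaves of the recursion one compares \emph{sets} of restricted $\FP$-strings rather than single strings, which costs only a $\polylog$ of the extra parameter $m$ in the bookkeeping.

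The real obstacle is the Local Certificates Routine. For a single string $\Fx$, once $\Gamma$ respects $\Fx$ on a $\Gamma$-invariant window $W$, the pointwise stabilizer $\Gamma_{(\Omega\setminus W)}$ automatically consists of automorphisms of $\Fx$, which is exactly what makes the $\mgamma_d$-variant of the Unaffected Stabilizers Theorem yield a large group of genuine automorphisms. For a set of $\FP$-strings this fails: there can be correlations between positions inside $W$ and outside $W$ (the perfect-matching example in the text), so a permutation fixing $\Omega\setminus W$ pointwise and respecting $\FX[W]$ need not respect $\FX$. The remedy is the Simplification Routine: I would refine the invariant partition $\FP$ by splitting each block $P$ according to the $W$-equivalence classes of the strings it carries, i.e. replace $P$ by the blocks $P\times\{\Fx[W\cap P]\}$, obtaining an equivalent instance $(\Gamma',\FP',\FX',\FY')$ on the larger domain $\Omega' = \bigcup_{P\in\FP} P\times\{\Fx[W\cap P]\mid(P,\Fx)\in\FX\}$ in which $\FX'[W']$ is \emph{simple} for the lifted window $W'$. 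Simplicity removes all correlations across $W'$, so $\FX'[W']$ and $\FX'[\Omega'\setminus W']$ can be treated independently and the string-case analysis of the Local Certificates Routine goes through unchanged.

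Making this precise requires resolving two secondary difficulties, which I expect to be the technically heaviest part of the argument. First, simplification can destroy the almost $d$-ary structure tree, so the Local Certificates Routine must re-normalize (via Theorem~\ref{thm:construct-structure-graph}) each time it simplifies; one has to track how the domain size and the number of strings grow under repeated simplify-then-renormalize steps and verify that the product of these blow-ups over the whole recursion stays bounded by $(n+m)^{\polylog d}$. The key point is that the number of distinct restrictions $\Fx[W\cap P]$ is at most $m$ and that windows only grow, so the inflation at each level is polynomially bounded and the usual $\polylog d$ recursion depth does the rest. Second, re-normalization is not isomorphism-invariant, yet the Aggregation of Local Certificates step compares the outputs of the routine run on different test sets and needs them computed in an isomorphism-invariant fashion; this is fixed by running the Local Certificates Routine in parallel on each \emph{pair} of test sets that will later be compared, so that both copies are normalized identically and the comparison becomes meaningful. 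Combining the modified Local Certificates Routine, the unchanged aggregation, the black-box use of Babai's combinatorial machinery, and the running-time bounds yields the claimed $(n+m)^{\polylog d}$ bound; specializing $\FP$ to the trivial partition recovers the Set-of-Strings, hence Hypergraph, Isomorphism Problem for $\mgamma_d$-groups in the same time.
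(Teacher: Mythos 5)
Your proposal is correct and mirrors the paper's own sketch: it invokes Luks/Babai recursion on $(\Gamma,\FP)$-instances, identifies the failure of the Local Certificates Routine for sets of $\FP$-strings, introduces the same Simplification Routine via $W$-equivalence to restore independence across the window, and flags the same two secondary issues (re-normalization after simplification, with cost analysis, and running the Local Certificates Routine in parallel on pairs of test sets to restore isomorphism-invariance of the aggregation). The only minor discrepancy is that the paper recommends using Babai's quasi-polynomial algorithm itself as a black box for the obstacle cases rather than re-deriving the Design Lemma and Split-or-Johnson Routine within the $\mgamma_d$ framework, but this does not affect the validity of your argument.
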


Note that this gives an algorithm for the Set-of-Strings Isomorphism Problem as well as the Hypergraph Isomorphism Problem for $\mgamma_d$-groups with the same running time.
In particular, the Hypergraph Isomorphism Problem without any input
group can be solved in time $(n+m)^{\polylog{n}}$.
This is the fasted known algorithm for the problem. A different
algorithm with the same running time was obtained by Wiebking \cite{Wiebking20}.

\section{Quasi-Polynomial Parameterized Algorithms for Isomorphism Testing}
\label{sec:parameterized-isomorphism-tests}

In this section we present several applications of the results presented above for isomorphism testing on restricted classes of graphs.
Towards this end, we first introduce the notion of \defword{$t$-CR-bounded} graphs recently introduced in \cite{Neuen20} and build an isomorphism test for such graphs based on Theorem \ref{thm:sets-of-strings}.
It turns out that the notion of $t$-CR-bounded graphs forms a powerful tool when it comes to the task of designing isomorphism tests for restricted classes of graphs.
In this direction we build a series of reductions for the isomorphism problem for well-known parameterized classes of graphs to the isomorphism problem for $t$-CR-bounded graphs leading to the most efficient algorithms for isomorphism testing for mentioned classes.
An overview on the reductions can be found in Figure \ref{fig:reductions}.

\begin{figure}
 \centering
 \begin{tikzpicture}[every edge/.style={draw,thick,->}]
  
  \node[draw, thick,text width = 3cm,text centered, rounded corners] (HI) at (6.4,7.2) {\footnotesize Set-of-Strings Isomorphism for $\mgamma_d$-groups};
  \node[draw, thick,text width = 3cm,text centered, rounded corners] (SI) at (1,6) {\footnotesize String Isomorphism for $\mgamma_d$-groups};
  \node[draw, thick,text width = 3cm,text centered, rounded corners] (CR) at (6.4,4.8) {\footnotesize GI for $t$-CR-bounded graphs};
  \node[draw, thick,text width = 3cm,text centered, rounded corners] (TS) at (4.2,2.4) {\footnotesize GI for graphs excluding $K_{3,h}$ as topological subgraph};
  \node[draw, thick,text width = 3cm,text centered, rounded corners] (HA) at (8.8,2.4) {\footnotesize GI for graphs of Hadwiger number $h$};
  \node[draw, thick,text width = 3cm,text centered, rounded corners] (DE) at (2,0) {\footnotesize GI for graphs of maximum degree $d$};
  \node[draw, thick,text width = 3cm,text centered, rounded corners] (GE) at (6.4,0) {\footnotesize GI for graphs of genus $g$};
  \node[draw, thick,text width = 3cm,text centered, rounded corners] (TW) at (10.8,0) {\footnotesize GI for graphs of tree width $k$};
  
  \draw (SI) edge (HI);
  \draw (CR) edge node[right] {\footnotesize $d \coloneqq t$} (HI);
  \draw (TS) edge node[left] {\footnotesize $t \coloneqq 7h-1$} (CR);
  \draw (HA) edge[dashed] node[right] {\footnotesize $t \coloneqq \CO(h^4)$} (CR);
  \draw ([xshift=-1cm]DE.north) edge (SI);
  \draw (DE) edge node[right] {\footnotesize $h \coloneqq d+1$} (TS);
  \draw (GE) edge node[right] {\footnotesize $h \coloneqq 4g+3$} (TS);
  \draw (GE) edge node[right] {\footnotesize $h \coloneqq g+6$} (HA);
  \draw (TW) edge node[right] {\footnotesize $h \coloneqq k+1$} (HA);
  \draw ([xshift=0.6cm]TW.north) edge[dashed, bend right=45] node[right] {\footnotesize $t \coloneqq k$} (CR.east);
 \end{tikzpicture}
 \caption{Dependencies between isomorphism problems considered in this paper. Each solid edge represents a polynomial-time Turing reduction where the parameter is set according to the edge label. The dashed edges represent fpt reductions.}
 \label{fig:reductions}
\end{figure}
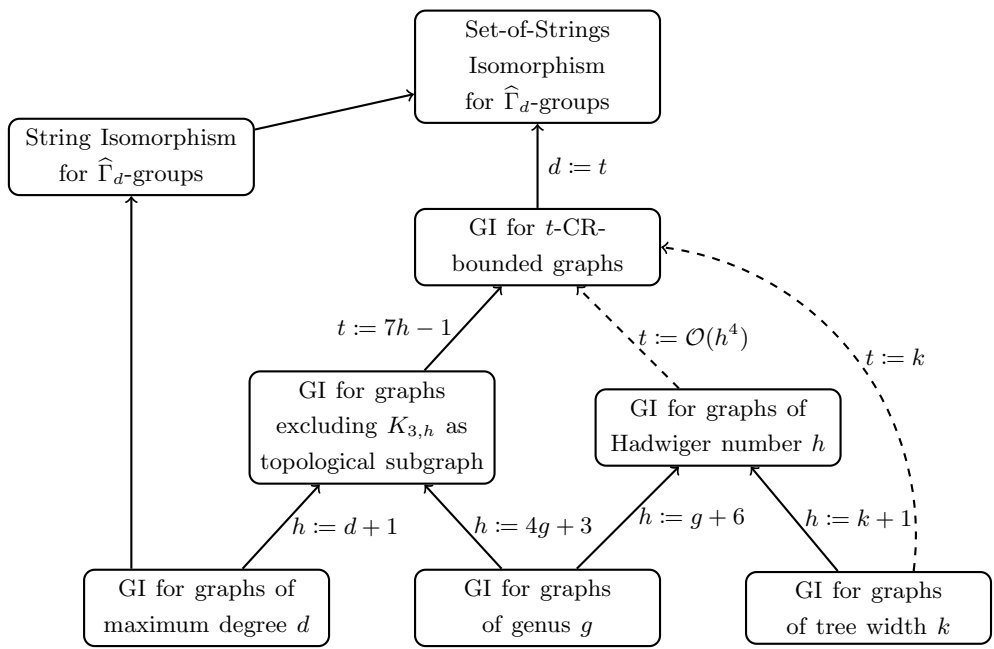

\subsection{Allowing Color Refinement to Split Small Classes}

As indicated above, the central tool for designing more efficient isomorphism tests based on the results of the previous section is the notion of $t$-CR-bounded graphs.
Intuitively speaking, a vertex-colored graph $(G,\chi)$ is \defword{$t$-CR-bounded} if the vertex-coloring can be transformed into a discrete coloring (i.e., a coloring where each vertex has its own color) by repeatedly applying the following operations:
\begin{enumerate}
 \item applying the Color Refinement algorithm, and
 \item picking a color class $[v]_\chi \coloneqq \{w \in V(G) \mid \chi(w) = \chi(v)\}$ for some vertex $v$ such that $|[v]_\chi| \leq t$ and assigning each vertex in the class its own color.
\end{enumerate}
The next definition formalizes this intuition.
For formal reason, it turns out to be more convenient to consider vertex- and arc-colored graphs.

\begin{Def}
 \label{def:t-cr-bounded}
 A vertex- and arc-colored graph $(G,\chi_V,\chi_E)$ is \defword{$t$-CR-bounded}
 if the sequence $(\chi_i)_{i \geq 0}$ reaches a discrete coloring where $\chi_0 \coloneqq \chi_V$,
 \[\chi_{2i+1} \coloneqq \ColRef{G,\chi_{2i},\chi_E}\]
 and
 \[\chi_{2i+2}(v) \coloneqq \begin{cases}
                     (v,1)              & \text{if } |[v]_{\chi_{2i+1}}| \leq t\\
                     (\chi_{2i+1}(v),0) & \text{otherwise}
                    \end{cases}\]
 for all $i \geq 0$.
 
 Also, for the minimal $i_\infty \geq 0$ such that $\chi_{i_\infty} \equiv \chi_{i_\infty+1}$, we refer to $\chi_{i_\infty}$ as the \defword{$t$-CR-stable} coloring of $G$ and denote it by $\tColRef{t}{G}$.
\end{Def}

In order to give an efficient isomorphism test for $t$-CR-bounded graphs we present a polynomial-time reduction to the Set-of-Strings Isomorphism Problem discussed in Section \ref{subsec:sets-of-strings}.

\begin{theorem}[\cite{Neuen20}]
 There is a polynomial-time Turing reduction from the Graph Isomorphism Problem for $t$-CR-bounded graphs to the Set-of-Strings Isomorphism Problem for $\mgamma_t$-groups.
\end{theorem}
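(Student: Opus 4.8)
The plan is to distill from the refinement process of Definition~\ref{def:t-cr-bounded} an isomorphism-invariant rooted tree of color classes, to observe that its automorphism group is a $\mgamma_t$-group, and then to encode the two input graphs as sets of strings over the vertex set on which this group acts, so that a $\Gamma$-isomorphism of the two sets of strings is exactly an isomorphism of the graphs. First I would fix two $t$-CR-bounded graphs $G,H$ with, without loss of generality, $V(G)=V(H)=:V$, and run the process $(\chi_i)_{i\ge 0}$ of Definition~\ref{def:t-cr-bounded} on each. Since the graphs are $t$-CR-bounded this terminates in the discrete coloring after $\CO(|V|)$ rounds, and along the way it produces a sequence of successively finer partitions of $V$; recording which color class is split in which round and into which parts yields a rooted tree $T_G$ whose root is $V$ (more precisely, whose root's children are the classes of $\chi_V$), whose leaves are the singletons $\{v\}$, identified with $V$, and each of whose nodes is labeled by the color of the corresponding class in this refinement sequence. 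Because the induced refinement of partitions is isomorphism-invariant (both Color Refinement and individualizing the classes of size at most $t$ are), $T_G$ is an isomorphism invariant of $G$; I would compute $T_G$ and $T_H$, output a trivial no-instance if they are non-isomorphic as labeled rooted trees, and otherwise fix a color-preserving isomorphism $T_G\to T_H$ and let $\tau_0\in\Sym(V)$ be the permutation it induces on leaves.

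The heart of the argument is the claim that $\Gamma\coloneqq\Aut(T_G)$, the group of color-preserving rooted-tree automorphisms of $T_G$ acting faithfully on the leaf set $V$, lies in $\mgamma_t$. I would prove this by induction along the levels of $T_G$. A node is \emph{flexible}---some automorphism permutes its children nontrivially---only if it was split by an individualization step, in which case its children are at most $t$ singletons and the induced action on them is isomorphic to a subgroup of $S_t$; at a node split by Color Refinement the children carry pairwise distinct colors and therefore cannot be moved. Thus $\Aut(T_G)$ is built as an iterated extension of direct products of groups each isomorphic to a subgroup of $S_t$, and since $\mgamma_t$ is, by its composition-factor definition together with Lemma~\ref{la:gamma-d-closure}, closed under subgroups, homomorphic images, direct products, and group extensions, it follows that $\Gamma\in\mgamma_t$; a polynomial-size generating set for the leaf action of $\Gamma$ is moreover computable in polynomial time by a standard tree-automorphism computation.

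Finally I would encode the graphs. For $v\in V$ let $\Fx_v\colon V\to\Sigma$ be the string with $\Fx_v(v)\coloneqq(\chi_V^G(v),\star)$ for a fresh symbol $\star$ and $\Fx_v(\alpha)\coloneqq(\chi_V^G(\alpha),\chi_E^G(v,\alpha))$ for $\alpha\neq v$, using a dedicated value for non-adjacency; put $\FX\coloneqq\{\Fx_v\mid v\in V\}$ and define $\FY$ from $H$ in the same way. The marker $\star$ lets one recover $v$ from $\Fx_v$, so $\FX$ determines $G$, and for $\gamma\in\Sym(V)$ one has $\FX^\gamma=\FY$ if and only if $\gamma\colon G\cong H$. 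Every isomorphism $G\cong H$ is in particular a color-preserving isomorphism $T_G\to T_H$ (it respects the whole isomorphism-invariant refinement process), hence lies in the coset $\Gamma\tau_0$; conversely $\Gamma\tau_0$ consists precisely of the leaf actions of color-preserving isomorphisms $T_G\to T_H$. Passing from the coset $\Gamma\tau_0$ to the group $\Gamma$ as in~\eqref{eq:string-isomorphism-alignment} (i.e.\ replacing $\FY$ by $\FY^{\tau_0^{-1}}$), the resulting Set-of-Strings instance over domain $V$ with group $\Gamma\in\mgamma_t$ and sets $\FX,\FY^{\tau_0^{-1}}$ is a yes-instance if and only if $G\cong H$; since every step is polynomial-time this gives the desired (in fact many-one) reduction.

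The step I expect to be the main obstacle is the structural claim $\Gamma\in\mgamma_t$: one has to argue carefully that a tree automorphism has freedom only at individualization nodes---which hinges on Color Refinement splitting a class into classes of \emph{pairwise distinct} colors---and to organize the inductive decomposition of $\Aut(T_G)$ so that the closure properties of $\mgamma_t$ apply cleanly. By contrast the remaining points are routine: that the process yields a polynomial-size tree and halts after polynomially many rounds (each round that is not already a fixed point strictly increases the number of color classes), and that the string encoding faithfully represents graph isomorphism.
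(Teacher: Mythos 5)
Your reduction hinges on the claim that $\Gamma=\Aut(T_G)$, the color-preserving automorphism group of the class-refinement tree acting on the leaves, lies in $\mgamma_t$, and this claim is where the argument breaks. You face a dichotomy in how the tree nodes are labeled. If you label them with the literal colors produced by the process of Definition~\ref{def:t-cr-bounded}, these colors are not isomorphism-invariant: after an individualization step the colors are of the form $(v,1)$ and all subsequent Color Refinement colors encode the names of the individualized vertices, so an isomorphism $G\cong H$ need not induce a label-preserving map $T_G\to T_H$ (indeed, if the individualized class of $H$ consists of different vertices than that of $G$, no label-preserving tree isomorphism exists at all), and your reduction can wrongly reject. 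If instead you use isomorphism-invariant labels --- which is what correctness of the containment $\Iso(G,H)\subseteq\Gamma\tau_0$ forces --- then your key step ``at a node split by Color Refinement the children carry pairwise distinct colors and therefore cannot be moved'' fails, because the distinctness of those CR colors may stem entirely from the non-invariant individualization data. Concretely, take even $t\geq 4$ and let $G$ have red vertices $c_1,\dots,c_t$ (no edges among them) and blue vertices $b_S$, one for each $S\in\binom{[t]}{t/2}$, with $b_S$ adjacent to exactly $\{c_j\mid j\in S\}$. The red/blue partition is CR-stable, the red class (size $t$) is individualized, and the next CR round makes the coloring discrete, so $G$ is $t$-CR-bounded. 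In $T_G$ the blue class node has $\binom{t}{t/2}>t$ singleton children; since $\Aut(G)\cong S_t$ acts transitively on them, any invariant labeling gives them identical labels, so $\Aut(T_G)$ contains the full symmetric group of degree $\binom{t}{t/2}$ permuting these sibling leaves (for $t=4$ this already yields the composition factor $A_6$, which is not a subgroup of $S_4$). Hence $\Aut(T_G)\notin\mgamma_t$, even though $\Aut(G)\cong S_t\in\mgamma_t$. The defect is structural: the class tree records only the refinement partial order and forgets the adjacency correlations between different classes, and no invariant relabeling can recover them, so the group you hand to the oracle is genuinely too large.

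This is also exactly where the paper's proof differs, and why the result is stated as a Turing reduction rather than the many-one reduction you end up claiming. The paper does not try to read off a $\mgamma_t$-overgroup of $\Aut(G)$ combinatorially in one shot; it builds groups $\Gamma_i\leq\Sym(\FP_i)$ with $(\Aut(G))[\FP_i]\leq\Gamma_i$ level by level, using the Set-of-Strings oracle itself at every Color Refinement iteration: for the current partition $\FP$ one forms the strings $\Fx_v(P)=\bigl(\mathbb{1}[v\in P],|N(v)\cap P|\bigr)$ and replaces $\Gamma$ by the action of $\Aut_\Gamma(\FX)$ on the refined partition, so the cross-class adjacency constraints (precisely what your tree discards) are enforced by the string data; genuinely new symmetry is introduced only at individualization steps, via wreath products with $S_p$ for $p\leq t$, which together with the closure properties of Lemma~\ref{la:gamma-d-closure} keeps every $\Gamma_i$ in $\mgamma_t$. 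In my example above this machinery correctly ties the permutation of the blue singletons to the $S_t$-action on the reds. Your final encoding of the graphs as one string per vertex and the coset manipulation via Equation~\eqref{eq:string-isomorphism-alignment} are fine, but without a valid $\mgamma_t$-group to supply, the oracle call is not available, so the proposal as it stands does not prove the theorem.
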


\begin{proofsketch}
 Let $G$ be a $t$-CR-bounded graph and let $(\chi_i)_{i \geq 0}$ be the corresponding sequence of colorings.
 For simplicity we assume the arc-coloring of $G$ is trivial and restrict ourselves to computing a generating set for the automorphism group $\Aut(G)$.
 Let $\FP_i \coloneqq \pi(\chi_i)$ be the partition of the vertex set into the color classes of the coloring $\chi_i$.
 Observe that $\FP_i$ is defined in an isomorphism-invariant manner with respect to $G$ and $\FP_{i+1} \preceq \FP_i$ for all $i \geq 0$.
 
 The idea for the algorithm is to iteratively compute $\mgamma_t$-groups $\Gamma_i \leq \Sym(\FP_i)$ such that $(\Aut(G))[\FP_i] \leq \Gamma_i$ for all $i \geq 0$.
 First note that this gives an algorithm for computing the automorphism group of $G$.
 Since $G$ is $t$-CR-bounded there is some $i_\infty \geq 0$ such that $\FP_{i_\infty}$ is the discrete partition.
 Hence, $\Aut(G) \leq \Gamma_{i_\infty}$\footnote{
 Formally, this is not correct since $\Aut(G) \leq \Sym(V(G))$ and $\Gamma_{i_\infty} \leq \Sym(\{\{v\} \mid v \in V(G)\})$.
 However, the algorithm can simply identify $v$ with the singleton set $\{v\}$ to obtain the desired supergroup.}
 where $\Gamma_{i_\infty}$ is a $\mgamma_t$-group and the automorphism group of $G$ can be computed using Theorems \ref{thm:eq-sets-of-strings-hypergraphs} and \ref{thm:sets-of-strings}.
 
 For the sequence of groups the algorithm sets $\Gamma_0 \coloneqq \{\id\}$ to be the trivial group (this is correct since $\chi_0$ is the vertex-coloring of the input graph $G$).
 Hence, let $i > 0$ and suppose the algorithm already computed a generating set for $\Gamma_{i-1}$.
 
 If $i$ is even $\FP_i$ is obtained from $\FP_{i-1}$ by splitting all sets $P \in \FP_{i-1}$ of size at most $t$.
 Towards this end, the algorithm first updates $\Gamma_{i-1}' \coloneqq \Aut_{\Gamma_{i-1}}(\{\Fx_{i-1}\})$ where $\Fx_{i-1}(P) = |P|$ for all $P \in \FP_{i-1}$.
 Then, for each orbit $\FB$ of $\Gamma_{i-1}'$ there is some number $p \in \NN$ such that $|P| = p$ for all $P \in \FB$.
 Now $\Gamma_i$ is obtained from $\Gamma_{i-1}'$ by taking the wreath product with the symmetric group $S_p$ for all orbits where $p \leq t$.
 Clearly, $\Gamma_i$ is still a $\mgamma_t$-group.
 
 If $i$ is odd $\FP_i$ is obtained from $\FP_{i-1}$ by performing the
 Color Refinement algorithm.  Consider a single iteration of the Color
 Refinement algorithm.  More precisely, let $\FP$ be an invariant
 partition (with respect to $\Aut(G)$) and
 $\Gamma \leq \Sym(\FP)$ be a $\mgamma_t$-group such that
 $(\Aut(G))[\FP] \leq \Gamma$.  Let $\FP'$ be the partition obtained
 from $\FP$ by performing a single iteration of the Color Refinement
 algorithm.  We argue how to compute a $\mgamma_t$-group
 $\Gamma' \leq \Sym(\FP')$ such that $(\Aut(G))[\FP'] \leq \Gamma'$.
 Repeating this procedure for all iterations of the Color Refinement
 algorithm then gives the desired group $\Gamma_i$.

 In order to compute the group $\Gamma'$ consider the following
 collection of strings.  For every $v \in V(G)$ define
 $\Fx_v\colon\FP \rightarrow \NN^2$ via $\Fx_v(P) = (1,|N(v) \cap P|)$
 if $v \in P$ and $\Fx_v(P) = (0,|N(v) \cap P|)$ otherwise.  Observe
 that, by the Definition of the Color Refinement algorithm,
 $\Fx_v = \Fx_w$ if and only if there is some $P' \in \FP'$ such that
 $v,w \in P'$.  Hence, there is a natural one-to-one correspondence
 between $\FX \coloneqq \{\Fx_v \mid v \in V(G)\}$ and $\FP'$.  Now
 define $\Gamma'$ be the induced action of $\Aut_\Gamma(\FX)$ on the
 set $\FP'$ (obtained from this correspondence).  It is easy to verify
 that $(\Aut(G))[\FP'] \leq \Gamma'$.
 
 Clearly, all steps of the algorithm can be performed in polynomial
 time using an oracle to the Set-of-Strings Isomorphism Problem for
 $\mgamma_t$-groups.
\end{proofsketch}

We remark that the proof of the last theorem also implies that $\Aut(G) \in \mgamma_t$ for every $t$-CR-bounded graph $G$.
In combination with Theorem \ref{thm:sets-of-strings} we obtain an efficient isomorphism test for $t$-CR-bounded graphs.

\begin{cor}
 \label{cor:isomorphism-t-cr-bounded}
 The Graph Isomorphism Problem for $t$-CR-bounded graphs can be solved in time $n^{\polylog{t}}$.
\end{cor}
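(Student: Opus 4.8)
The plan is to combine the polynomial-time Turing reduction of the preceding theorem with the isomorphism algorithm for $\mgamma_t$-groups from Theorem~\ref{thm:sets-of-strings}. Concretely, given a $t$-CR-bounded graph (or a pair of them), I would run the reduction to the Set-of-Strings Isomorphism Problem and answer every oracle query by the $(n+m)^{\polylog{t}}$-time algorithm of Theorem~\ref{thm:sets-of-strings} (setting the group-class parameter to $t$). The only thing that requires attention is tracking the sizes of the instances generated along the way, so that the running time of each oracle call collapses to $n^{\polylog{t}}$ and so that these calls compose without blow-up.

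First I would verify the parameter bounds. Every intermediate Set-of-Strings instance produced by the reduction has as its domain a partition $\FP$ of $V(G)$ into color classes, hence of size at most $n$, and it involves at most one string per vertex of $G$ (equal strings identified), hence at most $n$ strings. The reduction performs polynomially many such calls --- one per internal round of each Color Refinement phase and one per ``split'' phase, of which there are polynomially many since the coloring strictly refines at each stage --- interspersed with polynomial-time group manipulation on generating sets of size polynomial in $n$ (cf.\ Theorem~\ref{thm:permutation-group-library}). Moreover, as recorded in the remark preceding the corollary, the class $\mgamma_t$ is closed under all operations the reduction applies to the maintained groups --- passing to subgroups, taking homomorphic images (Lemma~\ref{la:gamma-d-closure}), and forming the wreath product with $S_p$ for orbits of size $p \le t$ --- so every oracle query is a genuine $\mgamma_t$-instance. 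Applying Theorem~\ref{thm:sets-of-strings} with $n + m \le 2n$ gives running time $(2n)^{\polylog{t}} = n^{\polylog{t}}$ per call, and polynomially many such calls plus polynomial-time overhead yield the claimed overall bound.

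Two minor points complete the argument. The proof sketch of the preceding theorem phrases the reduction as computing a generating set for $\Aut(G)$; to decide whether two $t$-CR-bounded graphs $G_1,G_2$ are isomorphic one uses the standard device of applying the reduction to the disjoint union of $G_1$ and $G_2$ (with the two parts distinguished by an extra vertex color), which is again $t$-CR-bounded because the Color Refinement and split operations act independently on the two parts, and then checking whether its automorphism group contains an element swapping the parts --- equivalently, one runs the Set-of-Strings (or Hypergraph) isomorphism algorithm of Theorem~\ref{thm:sets-of-strings} directly on the two structures. Nontrivial arc colorings are absorbed into the color classes exactly as in Definition~\ref{def:t-cr-bounded} and change nothing. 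I do not anticipate a genuine obstacle here: the mathematical content lies entirely in the preceding theorem and in Theorem~\ref{thm:sets-of-strings}, and the work is the bookkeeping needed to see that the $m$-dependence in Theorem~\ref{thm:sets-of-strings} is harmless and that the composition of polynomially many subroutine calls stays within $n^{\polylog{t}}$.
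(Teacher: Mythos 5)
Your proposal is correct and follows essentially the same route as the paper, which obtains the corollary simply by combining the Turing reduction of the preceding theorem (already stated for the Graph Isomorphism Problem, not just automorphism computation) with the $(n+m)^{\polylog{t}}$ algorithm of Theorem~\ref{thm:sets-of-strings}, the bookkeeping on domain size, number of strings, and closure of $\mgamma_t$ being exactly as you describe. The only wrinkle is your disjoint-union remark: if you distinguish the two parts by an extra color, no color-preserving automorphism can swap them, so you should either leave the parts indistinguishable (the union is then $2t$-CR-bounded, which is harmless) or, as you also note, invoke the reduction for isomorphism directly.
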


For the remainder of this section we shall exploit the algorithm from the corollary to design efficient isomorphism tests for a number of graph classes.
Towards this end, we typically build on another standard tool for isomorphism testing which is individualization of single vertices.
Intuitively, this allows us to break potential regularities in the input graph (for example, on a $d$-regular graph, the $t$-CR-stable coloring achieves no refinement unless $t \geq n$) and identify a ``starting point'' for analyzing the $t$-CR-stable coloring.

Let $G$ be a graph and let $X \subseteq V(G)$ be a set of vertices.
Let $\chi_V^*\colon V(G) \rightarrow C$ be the vertex-coloring
obtained from individualizing all vertices in the set $X$, i.e.,
$\chi_V^*(v) \coloneqq (v,1)$ for $v \in X$ and
$\chi_V^*(v) \coloneqq (0,0)$ for $v \in V(G) \setminus X$.  Let
$\chi \coloneqq \tColRef{t}{G,\chi_V^*}$ denote the $t$-CR-stable coloring
with respect to the input graph $(G,\chi_V^*)$.  We define the
\defword{$t$-closure} of the set $X$ (with respect to $G$) to be the set
\[\cl_t^G(X) \coloneqq \left\{v \in V(G) \mid |[v]_{\chi}| = 1\right\}.\]
Observe that $X \subseteq \cl_t^G(X)$.
For $v_1,\dots,v_\ell$ we use $\cl_t^G(v_1,\dots,v_\ell)$ as a shorthand for $\cl_t^G(\{v_1,\dots,v_\ell\})$

If $\cl_t^G(X) = V(G)$ then there is an isomorphism test for $G$ running in time $n^{|X| + \polylog{t}}$.
An algorithm first individualizes all vertices from $X$ creating $n^{|X|}$ many instances of GI for $t$-CR-bounded graphs each of which can be solved using Corollary \ref{cor:isomorphism-t-cr-bounded}.
This provides us a generic and powerful method for obtaining polynomial-time isomorphism tests for various classes of graphs.
As a first, simple example we argue that isomorphism for graphs of maximum degree $d$ can be tackled this way.

\begin{theorem}\label{theo:bounded-degree}
 Let $G$ be a connected graph of maximum degree $d$ and let $v \in V(G)$.
 Then $\cl_d^G(v) = V(G)$.
 
 In particular, there is a polynomial-time Turing reduction from the Graph Isomorphism Problem for graphs of maximum degree $d$ to the Graph Isomorphism Problem for $d$-CR-bounded graphs.
\end{theorem}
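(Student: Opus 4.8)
The plan is to prove the combinatorial claim $\cl_d^G(v)=V(G)$ by a ``contagion'' argument that spreads singleton colours outward from $v$, and then to read off the reduction. Write $\chi_V^*$ for the colouring of $V(G)$ individualising $v$ and consider the sequence $(\chi_i)_{i\ge 0}$ of Definition~\ref{def:t-cr-bounded} with $t\coloneqq d$ and $\chi_0\coloneqq\chi_V^*$; since the colourings refine monotonically, the set $D_i\coloneqq\{w\in V(G)\mid |[w]_{\chi_i}|=1\}$ only grows, and $v\in D_0$. It suffices to show that $D_i=V(G)$ for some $i$, since then $\tColRef{d}{G,\chi_V^*}$ is discrete and hence $\cl_d^G(v)=V(G)$ by definition.

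The key lemma I would isolate is the following: if $\chi$ is a colouring of $V(G)$ that is stable under the Color Refinement algorithm and $u\in V(G)$ satisfies $|[u]_\chi|=1$, then $N_G(u)$ is a union of colour classes of $\chi$, each of size at most $\deg_G(u)\le d$. To see this, take $w_1,w_2$ with $\chi(w_1)=\chi(w_2)$; by CR-stability they have the same multiset of neighbour colours, and in particular the same number of neighbours of colour $\chi(u)$. As $u$ is the only vertex of that colour, this number is $1$ if the vertex is adjacent to $u$ and $0$ otherwise, whence $w_1\in N_G(u)\iff w_2\in N_G(u)$. Thus every colour class of $\chi$ either lies inside $N_G(u)$ or avoids it, and the size bound is immediate.

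With the lemma in hand the contagion step is short. Suppose $D_i\neq V(G)$ for some $i$. Pick an odd index $j\ge i$; then $\chi_j=\ColRef{G,\chi_{j-1},\chi_E}$ is CR-stable by construction, and $D_i\subseteq D_j$ by monotonicity. If $D_j=V(G)$ we are done; otherwise, since $G$ is connected and $\emptyset\neq D_j\neq V(G)$, there is an edge $uw\in E(G)$ with $u\in D_j$ and $w\notin D_j$. By the lemma the class $[w]_{\chi_j}$ is contained in $N_G(u)$, so $2\le |[w]_{\chi_j}|\le \deg_G(u)\le d$; hence the next, splitting step of the iteration breaks $[w]_{\chi_j}$ into singletons, giving $D_{j+1}\supsetneq D_j\supseteq D_i$. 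As $V(G)$ is finite this can occur only finitely often, so the iteration reaches a discrete colouring and $\cl_d^G(v)=V(G)$.

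For the reduction I would first pass to the connected case in the standard way (isomorphism of graphs reduces in polynomial time to isomorphism of connected graphs by matching connected components). Given connected graphs $G,H$ of maximum degree $d$, fix any $v_0\in V(G)$ and, for each $w\in V(H)$, build the vertex-coloured graphs obtained by individualising $v_0$ in $G$ and $w$ in $H$; by the first part each of these $\CO(|V(H)|)$ coloured graphs is $d$-CR-bounded, and $G\cong H$ if and only if one of the resulting $|V(H)|$ instances is a yes-instance of isomorphism for $d$-CR-bounded graphs. The one genuinely non-obvious ingredient is the key lemma --- that a singleton vertex forces its whole (size $\le d$) neighbourhood to be a union of colour classes once Color Refinement has stabilised --- and this is exactly where the bound $t=d$ in the definition of $d$-CR-bounded is needed. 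The remaining pieces, namely monotonicity of the refinement sequence, finiteness of the contagion, and the individualisation-based reduction, are routine.
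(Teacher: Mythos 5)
Your proposal is correct and takes essentially the same approach as the paper. The paper's version is slightly more compact: rather than tracking the refinement sequence $(\chi_i)$ and arguing that the singleton set $D_j$ strictly grows, it works directly with the fixpoint $\chi = \tColRef{d}{G,\chi_V^*}$ and inducts on distance from $v$, using precisely your key lemma in the form "$u$ singleton, $uw\in E(G)$, and $\chi$ CR-stable implies $[w]_\chi\subseteq N(u)$, so $|[w]_\chi|\le d$, so $|[w]_\chi|=1$ by $d$-CR-stability of $\chi$"; but the underlying contagion idea and the reduction by individualization are the same.
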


\begin{proof}
 Let $\chi \coloneqq \tColRef{d}{G,\chi_V^*}$ denote the $d$-CR-stable coloring where $\chi_V^*$ is the coloring obtained from individualizing $v$.
 For $i \geq 0$ let $V_i \coloneqq \{w \in V(G) \mid \dist_G(v,w) \leq i\}$.
 We prove by induction on $i \geq 0$ that $V_i \subseteq \cl_d^G(v)$.
 Since $V_n = V(G)$ this implies that $\cl_d^G(v) = V(G)$.
 
 The base case $i=0$ is trivial since $V_0 = \{v\} \subseteq \cl_d^G(v)$ as already observed above.
 So suppose $i \geq 0$ and let $w \in V_{i+1}$.
 Then there is some $u \in V_i$ such that $uw \in E(G)$.
 Moreover, $u \in \cl_d^G(v)$ by the induction hypothesis.
 This means that $|[u]_\chi| = 1$.
 Since $\chi$ is stable with respect to the Color Refinement algorithm $[w]_\chi \subseteq N(u)$.
 So $|[w]_\chi| \leq \deg(u) \leq d$.
 Hence, $|[w]_\chi| = 1$ because $\chi$ is $d$-CR-stable.
\end{proof}

This gives an isomorphism test for graphs of maximum degree $d$ running in time $n^{\polylog{d}}$.
Observe that the algorithm is obtained via a reduction to the Set-of-Strings Isomorphism Problem for $\mgamma_d$-groups.
We remark that, using similar, but slightly more involved ideas, there is also a polynomial-time Turing reduction from the isomorphism problem for graphs of maximum degree $d$ to the String Isomorphism Problem for $\mgamma_d$-groups \cite{Luks82,BabaiL83} (see also Section \ref{subsec:luks}).
Recall that the String Isomorphism Problem is a simpler problem which can be seen as a special case of the Set-of-Strings Isomorphism Problem (where the set of strings contains only one element).
Indeed, the original $n^{\polylog{d}}$ isomorphism test for graphs of maximum degree $d$ from \cite{GroheNS18} is obtained via this route.
However, considering the remaining applications of the isomorphism test for $t$-CR-bounded graphs in this section, such a behaviour seems to be an exception and reductions to the String Isomorphism Problem for $\mgamma_d$-groups are usually not known.
This highlights the significance of the Set-of-Strings Isomorphism Problem in comparison to the String Isomorphism Problem.

\subsection{Graphs of Small Genus}
\label{subsec:genus}

Next, we turn to the isomorphism problem for graphs of bounded Euler genus.
Recall that a graph has Euler genus at most $g$ if it can be embedded on a surface of Euler genus $g$.
We omit a formal definition of the genus of a graph and instead only rely on the following basic property.
A graph $H$ is a \defword{minor} of graph $G$ if $H$ can be obtained from $G$ by deleting vertices and edges as well as contracting edges.
A graph $G$ \defword{excludes $H$ as a minor} if it has no minor isomorphic to $H$.
It is well known that graphs of Euler genus $g$ exclude $K_{3,4g+3}$ (the complete bipartite graph with $3$ vertices on the left and $4g+3$ vertices on the right) as a minor \cite{Ringel65}.
The next lemma connects graphs that exclude $K_{3,h}$ as a minor to $t$-CR-bounded graphs.

Recall that a graph $G$ is \defword{3-connected} if $G - X$ is connected for every set $X \subseteq V(G)$ of size $|X| \leq 2$.

\begin{lemma}
 \label{la:extend-t-cr-bounded-scope}
 Let $(G,\chi)$ be a 3-connected, vertex-colored
 graph that excludes $K_{3,h}$ as a minor and suppose $V_1 \uplus V_2 = V(G)$ such that
 \begin{enumerate}
  \item $|[v]_\chi| = 1$ for all $v \in V_1$,
  \item $\chi$ is stable with respect to the Color Refinement algorithm, and
  \item $|V_1| \geq 3$.
 \end{enumerate}
 Then there exists $u \in V_2$ such that $|[u]_\chi| \leq h-1$.
\end{lemma}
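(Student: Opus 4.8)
The plan is to find a vertex $u \in V_2$ that has at least three neighbors in the discretely-colored set $V_1$, and then argue that the color class $[u]_\chi$ must be small because it is ``pinned down'' by these three neighbors through $3$-connectivity and the exclusion of $K_{3,h}$ as a minor. First I would show that such a $u$ exists: since $G$ is $3$-connected and $|V_1| \geq 3$, if every vertex of $V_2$ had at most two neighbors in $V_1$, one could try to find a small separator, but more robustly, I would instead pick any three vertices in $V_1$ and use $3$-connectivity to route three internally disjoint paths, or simply observe that since $\chi$ is stable and the vertices of $V_1$ are individualized, the neighborhoods $N(v) \cap V_1$ are constrained; the cleanest route is: because $G$ is connected and $|V_2|$ is ``attached'' to $V_1$, and because any vertex $u \in V_2$ at distance $1$ from $V_1$ whose class we want to bound, we need $|N(u) \cap V_1| \geq 3$ — this is where $3$-connectivity enters, guaranteeing that one cannot peel off $V_2$ through a cut of size $\leq 2$ into $V_1$. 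I would formalize: let $V_2' \subseteq V_2$ be the vertices with at least one neighbor in $V_1$; if some $u \in V_2'$ has $\geq 3$ neighbors in $V_1$ we proceed to the second step, and otherwise every such $u$ has $1$ or $2$ neighbors in $V_1$, and I would derive a contradiction with $3$-connectivity (a vertex $u$ with exactly one neighbor $x \in V_1$ would make $\{x\}$ together with the rest a near-cut; exactly two neighbors $x,y$ would make $\{x,y\}$ a cut separating the component of $u$ in $G - V_1$ part, unless that whole side collapses into $V_1$, handled by $|V_1| \geq 3$).

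The heart of the argument is the second step: suppose $u \in V_2$ has three distinct neighbors $x_1, x_2, x_3 \in V_1$, each of which has its own color under $\chi$. I claim $|[u]_\chi| \leq h-1$. Every vertex $w \in [u]_\chi$ satisfies $\atp$ and neighborhood-count conditions identical to $u$ under the stable coloring; in particular, since $\chi$ is Color-Refinement-stable and each $x_i$ is a singleton color class, $w$ must be adjacent to each $x_i$ as well (the number of neighbors of color $\chi(x_i)$ is the same for $w$ as for $u$, namely $1$, and that one neighbor is $x_i$ itself). So every vertex of $[u]_\chi$ lies in $N(x_1) \cap N(x_2) \cap N(x_3)$. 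Now if $|[u]_\chi| \geq h$, then $\{x_1, x_2, x_3\}$ together with $h$ vertices of $[u]_\chi$ form a $K_{3,h}$ subgraph, hence a $K_{3,h}$ minor, contradicting the hypothesis. Therefore $|[u]_\chi| \leq h-1$, as desired.

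The main obstacle I anticipate is the \emph{existence} step — guaranteeing a vertex of $V_2$ with three neighbors in $V_1$ — rather than the minor/counting step, which is essentially immediate once three such neighbors are in hand. The subtlety is that $V_1$ need not be connected to all of $V_2$ directly; one may need to first restrict to the ``boundary'' of $V_2$, and the $3$-connectivity hypothesis must be invoked carefully (it says $G - X$ is connected for $|X| \leq 2$, so one cannot separate two vertices by removing $\leq 2$ others). A clean way to phrase it: consider the bipartite ``attachment'' structure; if every vertex in the boundary layer of $V_2$ has $\leq 2$ neighbors in $V_1$, pick a boundary vertex $u$ with neighbor set $S = N(u) \cap V_1$, $|S| \leq 2$; then $G - S$ would have to keep $u$ connected to the rest, forcing a path from $u$ back to $V_1$ avoiding $S$, which after iterating the argument across the whole boundary (and using $|V_1| \geq 3$ so that $S \neq V_1$) yields a contradiction or directly produces the desired vertex. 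I would spell this out via a shortest-path / BFS-layer argument from $V_1$ into $V_2$, picking the first vertex encountered that has three distinct ``entry points'' into $V_1$, whose existence is forced precisely by $3$-connectivity.
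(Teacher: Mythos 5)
Your second step is sound as far as it goes: if some $u \in V_2$ has three neighbors $x_1,x_2,x_3 \in V_1$, then stability together with the fact that each $\{x_i\}$ is a singleton color class indeed forces every $w \in [u]_\chi$ to be adjacent to all three $x_i$, so $|[u]_\chi| \geq h$ would yield a $K_{3,h}$ subgraph. The genuine gap is the existence step, which you yourself identify as the main obstacle: $3$-connectivity does \emph{not} provide a vertex of $V_2$ with three neighbors in $V_1$. What it gives is only that each connected component $Z$ of $G[V_2]$ satisfies $|N(Z)| \geq 3$ (with $N(Z) \subseteq V_1$), i.e.\ the component \emph{as a whole} attaches to at least three singleton vertices; these attachments may be realized by three different vertices of $Z$, each having just one neighbor in $V_1$. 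Your sketched contradiction (``one neighbor $x \in V_1$ makes $\{x\}$ a near-cut, two neighbors make a cut'') does not work: a boundary vertex $u$ with one or two neighbors in $V_1$ is still connected to the rest of the graph through its neighbors inside $V_2$, which reach $V_1$ elsewhere, so no separator of size at most $2$ arises. Since the lemma only asserts the existence of a small color class somewhere in $V_2$, not of one whose vertices see three singletons directly, the statement can be true while your step 1 fails, and the proof cannot be completed along these lines.

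The paper's argument is built precisely for this situation. It passes to the quotient graph $H$ on color classes, takes a connected component $C'$ of $H[C_2]$, notes $|N_H(C')| \geq 3$ by $3$-connectivity and the singleton property of $V_1$, and extracts from a spanning tree of $H[C' \cup \{c_1,c_2,c_3\}]$ a subtree $T'$ whose only leaves are three singleton colors $c_1,c_2,c_3$ and which has a unique branch color $c \in C_2$. Because the $c_i$ are singletons and $\chi$ is stable, the unions $U_i$ of the classes along the three arms induce connected subgraphs of $G$, and stability forces every vertex of $U \coloneqq \chi^{-1}(c)$ to have a neighbor in each $U_i$; contracting each $U_i$ to a single vertex then produces $K_{3,|U|}$ as a minor, whence $|U| \leq h-1$. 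In other words, the minor hypothesis must be exploited through contractions of connected unions of color classes, not merely through a $K_{3,h}$ subgraph anchored at a single vertex class adjacent to three singletons; that contraction idea is what is missing from your proposal.
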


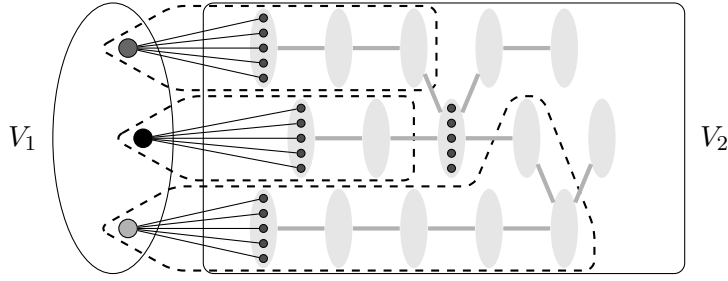
\begin{figure}
 \centering
 \begin{tikzpicture}
  \draw[thick,dashed,rounded corners] (0,0) -- (1.0,0.56) -- (4.0,0.56) -- (4.0,-0.56) -- (1.0,-0.56) -- cycle;
  \draw[thick,dashed,rounded corners] (-0.2,1.2) -- (0.8,1.76) -- (4.3,1.76) -- (4.3,0.64) -- (0.8,0.64) -- cycle;
  \draw[thick,dashed,rounded corners] (-0.2,-1.2) -- (0.8,-1.76) -- (6.4,-1.76) -- (6.4,-1.2) -- (5.7,0.56) -- (5.3,0.56) -- (4.8,-0.64) -- (0.8,-0.64) -- cycle;
  
  \draw (0,0) ellipse (0.8cm and 1.8cm);
  \draw[rounded corners] (1.2,-1.8) rectangle (7.6,1.8);
  
  \node at (-1.2,0) {$V_1$};
  \node at (8.0,0) {$V_2$};
  
  \node[ellipse,minimum height=30pt,minimum width=10pt,align=center,fill=gray!20] (v1) at (2.5,0) {};
  \node[ellipse,minimum height=30pt,minimum width=10pt,align=center,fill=gray!20] (v2) at (3.5,0) {};
  \node[ellipse,minimum height=30pt,minimum width=10pt,align=center,fill=gray!20] (v3) at (4.5,0) {};
  \node[ellipse,minimum height=30pt,minimum width=10pt,align=center,fill=gray!20] (v4) at (5.5,0) {};
  \node[ellipse,minimum height=30pt,minimum width=10pt,align=center,fill=gray!20] (v6) at (2,1.2) {};
  \node[ellipse,minimum height=30pt,minimum width=10pt,align=center,fill=gray!20] (v7) at (3,1.2) {};
  \node[ellipse,minimum height=30pt,minimum width=10pt,align=center,fill=gray!20] (v8) at (4,1.2) {};
  \node[ellipse,minimum height=30pt,minimum width=10pt,align=center,fill=gray!20] (v11) at (2,-1.2) {};
  \node[ellipse,minimum height=30pt,minimum width=10pt,align=center,fill=gray!20] (v12) at (3,-1.2) {};
  \node[ellipse,minimum height=30pt,minimum width=10pt,align=center,fill=gray!20] (v13) at (4,-1.2) {};
  \node[ellipse,minimum height=30pt,minimum width=10pt,align=center,fill=gray!20] (v14) at (5,-1.2) {};
  \node[ellipse,minimum height=30pt,minimum width=10pt,align=center,fill=gray!20] (v15) at (6,-1.2) {};
  
  \foreach \i/\j in {1/2,2/3,3/4,6/7,7/8,3/8,4/15,11/12,12/13,13/14,14/15}{
   \draw[thickedge] (v\i) edge (v\j);
  }
  
  \node[ellipse,minimum height=30pt,minimum width=10pt,align=center,fill=gray!20] (v5) at (6.5,0) {};
  \node[ellipse,minimum height=30pt,minimum width=10pt,align=center,fill=gray!20] (v9) at (5,1.2) {};
  \node[ellipse,minimum height=30pt,minimum width=10pt,align=center,fill=gray!20] (v10) at (6,1.2) {};
  
  \foreach \i/\j in {3/9,9/10,5/15}{
   \draw[thickedge] (v\i) edge (v\j);
  }
    
  \node[emptyvertex,fill=black!100] (w1) at (0.4,0) {};
  \node[emptyvertex,fill=black!60] (w2) at (0.2,1.2) {};
  \node[emptyvertex,fill=black!30] (w3) at (0.2,-1.2) {};
  
  \foreach \s in {-2,-1,0,1,2}{
   \node[tinyvertex,fill=black!70] (a\s) at ($(2.5,0)+(0,0.2*\s)$) {};
   \draw (w1) edge (a\s);
   \node[tinyvertex,fill=black!70] (b\s) at ($(2,1.2)+(0,0.2*\s)$) {};
   \draw (w2) edge (b\s);
   \node[tinyvertex,fill=black!70] (c\s) at ($(2,-1.2)+(0,0.2*\s)$) {};
   \draw (w3) edge (c\s);
  }
  
  \foreach \s in {-2,-1,0,1,2}{
   \node[tinyvertex,fill=black!70] (a\s) at ($(4.5,0)+(0,0.2*\s)$) {};
  }
 \end{tikzpicture}
 \caption{Visualization of the spanning tree $T$ described in the proof of Lemma \ref{la:extend-t-cr-bounded-scope}. Contracting the dashed regions into single vertices gives a subgraph isomorphic to $K_{3,h}$.}
 \label{fig:genus-to-cr-bounded}
\end{figure}

\begin{proof}
 Let $C \coloneqq \chi(V(G))$, $C_1 \coloneqq \chi(V_1)$ and $C_2 \coloneqq \chi(V_2)$.
 Also define $H$ to be the graph with vertex set $V(H) \coloneqq C$ and edge set
 \[E(H) = \{c_1c_2 \mid \exists v_1 \in \chi^{-1}(c_1), v_2 \in \chi^{-1}(c_2) \colon v_1v_2 \in E(G)\}.\]
 Let $C' \subseteq C_2$ be the vertex set of a connected component of $H[C_2]$.
 Then $|N_H(C')| \geq 3$ since each $v \in V_1$ forms a singleton color class with respect to $\chi$ and $G$ is $3$-connected.
 
 Now let $c_1,c_2,c_3 \in N_H(C')$ be distinct and also let $v_i \in \chi^{-1}(c_i)$ for $i \in [3]$.
 Also let $T$ be a spanning tree of $H[C' \cup \{c_1,c_2,c_3\}]$ such that $c_1,c_2,c_3 \in L(T)$ where $L(T)$ denotes the set of leaves of $T$ (see Figure \ref{fig:genus-to-cr-bounded}).
 Moreover, let $T'$ be the subtree of $T$ obtained from repeatedly removing all leaves $c \in C'$.
 Hence, $L(T') = \{c_1,c_2,c_3\}$.
 Then there is a unique color $c$ such that $\deg_{T'}(c) = 3$.
 Also, for $i \in [3]$, define $C_i'$ to be the set of internal vertices on the unique path from $c_i$ to $c$ in the tree $T'$.
 
 Since $|\chi^{-1}(c_i)| = 1$ and $\chi$ is stable with respect to the Color Refinement algorithm it holds that
 \[G\left[\chi^{-1}(C_i' \cup \{c_i\})\right]\]
 is connected.
 Let $U_i \coloneqq \chi^{-1}(C_i' \cup \{c_i\})$, $i \in [3]$.
 Also let $U = \chi^{-1}(c)$ and suppose that $|U| \geq h$.
 Then $N(U_i) \cap U \neq \emptyset$ by the definition of the tree $T$.
 Moreover, this implies $U \subseteq N(U_i)$ since $\chi$ is stable with respect to the Color Refinement algorithm.
 Hence, $G$ contains a minor isomorphic to $K_{3,h}$.
\end{proof}

\begin{cor}[\cite{Neuen20}]
 \label{cor:excluded-k3h-t-cr-bounded}
 Let $(G,\chi_V,\chi_E)$ be a $3$-connected, vertex- and arc-colored graph that excludes $K_{3,h}$ as a minor and let $v_1,v_2,v_3 \in V(G)$ be distinct vertices.
 Then $\cl_{h-1}^G(v_1,v_2,v_3) = V(G)$.
\end{cor}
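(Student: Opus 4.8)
The plan is to deduce the corollary directly from Lemma~\ref{la:extend-t-cr-bounded-scope} by a short argument at the fixed point of the $(h-1)$-CR refinement process. Write $\chi_V^*$ for the coloring of $G$ obtained by individualizing $v_1,v_2,v_3$, and set $\chi \coloneqq \tColRef{h-1}{G,\chi_V^*,\chi_E}$; by definition $\cl_{h-1}^G(v_1,v_2,v_3) = \{v \in V(G) \mid |[v]_\chi| = 1\}$. First I would record the two structural properties of $\chi$ that matter here: being the $(h-1)$-CR-stable coloring, $\chi$ is a genuine fixed point of the alternation between Color Refinement and splitting of color classes of size at most $h-1$ in Definition~\ref{def:t-cr-bounded}; hence (a) $\chi$ is stable with respect to Color Refinement, and (b) every color class of $\chi$ of size at most $h-1$ is a singleton.

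Next I would set up the partition needed to invoke the lemma. Suppose, for contradiction, that $\cl_{h-1}^G(v_1,v_2,v_3) \neq V(G)$, and put $V_1 \coloneqq \{v \in V(G) \mid |[v]_\chi| = 1\}$ and $V_2 \coloneqq V(G)\setminus V_1 \neq \emptyset$, so that $V_1 \uplus V_2 = V(G)$. I then check the three hypotheses of Lemma~\ref{la:extend-t-cr-bounded-scope} for the vertex-colored graph $(G,\chi)$: condition~1 holds by the definition of $V_1$; condition~2 is property~(a), where I would note that stability under Color Refinement on the arc-colored graph $(G,\chi,\chi_E)$ implies stability on the vertex-colored graph $(G,\chi)$, since the vertex-Color-Refinement step uses only the multiset of neighbour colours, which is a coarsening of the multiset of (neighbour colour, arc colour, reverse arc colour) used in the arc-colored version; and condition~3 holds because $\chi$ refines $\chi_V^*$, so $[v_i]_\chi \subseteq [v_i]_{\chi_V^*} = \{v_i\}$ for $i\in[3]$, whence the distinct vertices $v_1,v_2,v_3$ all lie in $V_1$ and $|V_1|\ge 3$. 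The remaining hypotheses of the lemma, that $G$ is $3$-connected and excludes $K_{3,h}$ as a minor, are exactly those of the corollary.

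Applying Lemma~\ref{la:extend-t-cr-bounded-scope} now yields a vertex $u \in V_2$ with $|[u]_\chi| \leq h-1$; since $u\in V_2$ we also have $|[u]_\chi|\ge 2$. This contradicts property~(b): the splitting operation of Definition~\ref{def:t-cr-bounded} with $t = h-1$ would assign every vertex of the non-singleton class $[u]_\chi$ its own colour, producing a coloring strictly finer than $\chi$, which is impossible for the $(h-1)$-CR-stable coloring. Hence $V_2 = \emptyset$ and $\cl_{h-1}^G(v_1,v_2,v_3) = V(G)$, as claimed. Equivalently, one can phrase the argument iteratively: as long as not every class is a singleton, the lemma hands us a new class of size at most $h-1$ that the refinement process splits, and since $V(G)$ is finite the process can terminate only once $\cl_{h-1}^G(v_1,v_2,v_3) = V(G)$.

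As for where the difficulty lies: essentially all of the content sits in Lemma~\ref{la:extend-t-cr-bounded-scope} (via the spanning-tree/contraction construction of a $K_{3,h}$-minor), so the corollary itself is a one-line deduction. The only points needing care are bookkeeping: matching the arc-colored setting of the corollary to the vertex-colored statement of the lemma, extracting $|V_1|\ge 3$ from the individualization, and using the ``fixed point'' reading of the $t$-CR-stable coloring---simultaneously Color-Refinement-stable and free of non-singleton classes of size at most $t$---which is precisely what makes the contradiction bite.
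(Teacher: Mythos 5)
Your proof is correct and follows essentially the same route as the paper's: individualize $v_1,v_2,v_3$, pass to the $(h-1)$-CR-stable coloring $\chi$, and derive a contradiction with the stability of $\chi$ by invoking Lemma~\ref{la:extend-t-cr-bounded-scope}. The one point where you deviate is in the choice of the partition: you take $V_1 \coloneqq \{v : |[v]_\chi| = 1\}$ to be \emph{all} singleton classes and $V_2 \coloneqq V(G)\setminus V_1$, and then apply the lemma to $(G,\chi)$ itself, whereas the paper sets $V_2 \coloneqq \{v : |[v]_\chi| > 1\}$, $V_1 \coloneqq N_G(V_2)$, and applies the lemma to the induced subgraph $G[V_1\cup V_2]$. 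Your version is arguably the cleaner one: the hypothesis $V_1\uplus V_2 = V(G)$ of the lemma holds literally, the ambient graph $G$ is genuinely $3$-connected and $K_{3,h}$-minor-free by assumption, and you avoid the question (left implicit in the paper) of why $G[V_1\cup V_2]$ should satisfy the lemma's $3$-connectivity hypothesis. You also correctly handle the two bookkeeping points the paper passes over silently---that arc-colored CR-stability implies vertex-colored CR-stability, and that a $t$-CR-stable coloring cannot have a non-singleton class of size at most $t$.
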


\begin{proof}
 Let $\chi^*_V$ be the coloring obtained from $\chi_V$ after individualizing $v_1$, $v_2$ an $v_3$.
 Also let $\chi \coloneqq \tColRef{(h-1)}{G,\chi_V^*,\chi_E}$ be the $(h-1)$-CR-stable coloring with respect to the graph $(G,\chi_V^*,\chi_E)$.
 
 Suppose towards a contradiction that $\chi$ is not discrete (i.e., not every color class is a singleton).
 Let $V_2 \coloneqq \{v \in V(G) \mid |[v]_\chi| > 1\}$ and let $V_1 \coloneqq N_G(V_2)$.
 Then $|V_1| \geq 3$ since $|V(G) \setminus V_2| \geq 3$ and $G$ is $3$-connected.
 Also note that $\chi|_{V_1 \cup V_2}$ is a stable coloring for the graph $G[V_1 \cup V_2]$.
 Hence, by Lemma \ref{la:extend-t-cr-bounded-scope}, there is some vertex $u \in V_2$ such that $|[u]_\chi| \leq h-1$.
 Also $|[u]_\chi| > 1$ by the definition of $V_2$.
 But this contradicts the definition of an $(h-1)$-CR-stable coloring.
\end{proof}

Note that Corollary \ref{cor:excluded-k3h-t-cr-bounded} only provides an isomorphism test for
$3$-connected graphs excluding $K_{3,h}$ as a minor.  We can easily
remedy this problem by using the well-known fact that, for a
minor-closed class of graph, it suffices to solve the isomorphism
problem on vertex- and arc-colored $3$-connected graphs exploiting
the standard decomposition into $3$-connected components
\cite{hoptar72}.  This gives us an isomorphism test for graphs of
Euler genus at most $g$ running in time $n^{\polylog{g}}$.

\subsection{Graphs of Small Tree Width}

After discussing two fairly simple applications of the isomorphism test for $t$-CR-bounded graphs we now turn to the first more involved application considering graphs of small tree width.
In the following subsection we will extend some of the basic ideas discussed for graphs of small tree width further to deal with the more general case of excluding the complete graph $K_h$ as a minor.

We assume that the reader is familiar with the notion of tree width
(see, e.g., \cite[Chapter~12]{die16}, \cite[Chapter~7]{cygfomkow+15}).
For the sake of completeness, let us nevertheless give the basic definitions.
Let $G$ be a graph.
A \defword{tree decomposition} of $G$ is a pair $(T,\beta)$ where $T$ is a rooted tree and $\beta\colon V(T) \rightarrow 2^{V(G)}$, where $2^{V(G)}$ denotes the powerset of $V(G)$, such that
\begin{enumerate}
 \item[(T.1)] for every edge $vw \in E(G)$ there is some $t \in V(T)$ such that $v,w \in \beta(t)$, and
 \item[(T.2)] for every $v \in V(G)$ the set
   $\beta^{-1}(v)\coloneqq\{t \in V(T) \mid v \in \beta(t)\}$ is
   non-empty and connected in $T$, that is, the induced subgraph
   $T[\beta^{-1}(v)]$ is a subtree.
\end{enumerate}
The sets $\beta(t)$, $t \in V(T)$, are the \defword{bags} of the decomposition.
Also, the sets $\beta(t) \cap \beta(s)$, $st \in E(T)$, are the \defword{adhesion sets} of the decomposition.
The \defword{width} of a tree decomposition is defined as the maximum bag size minus one, i.e., $\width(T,\beta) \coloneqq \max_{t \in V(T)} |\beta(t)| - 1$.
The \defword{tree width} of a graph $G$, denoted $\tw(G)$, is the minimal width of any tree decomposition of $G$.

Our goal for this section is to design efficient isomorphism tests for
graphs of small tree width.  Once again, we want to build on the
isomorphism test for $t$-CR-bounded graphs.  However, compared to
graphs of small degree or genus, identifying parts of graphs of
bounded tree width that are $t$-CR-bounded (after fixing a constant
number of vertices) is more challenging.  This comes from the fact
that the automorphism group of a graph of tree width $k$ may contain
arbitrarily large symmetric groups.  For example, let $T_{d,h}$ denote
a complete $d$-ary tree of height $h$, where we should think of $d$ as
being unbounded and much larger than $k$. Also, let
$K_k \otimes T_{d,h}$ denote the graph obtained from $T_{d,h}$ by
replacing each vertex with a clique of size $k$ and two cliques being
completely connected whenever there is an edge in $T_{d,h}$.  Then
$\tw(K_k \otimes T_{d,h}) \leq 2k$, the graph $K_k \otimes T_{d,h}$
does not contain any $(k-1)$-separator, and the automorphism group
contains various symmetric groups $S_d$.

To circumvent this problem we exploit \defword{clique-separator
  decompositions} of \defword{improved graphs} which already form an
essential part of the first fixed-parameter tractable isomorphism test
for graphs of bounded tree width \cite{LokshtanovPPS17}, running in time $2^{\CO(k^5 \log k)}n^{5}$.
We first require some additional notation.
Let $G$ be a graph.
A pair $(A,B)$ where $A \cup B = V(G)$ is called a \defword{separation} if $E(A\setminus B,B\setminus A) = \emptyset$ (here, $E(X,Y) \coloneqq \{vw \in E(G) \mid v \in X,w \in Y\}$ denotes the set of edges with one endpoint in $X$ and one endpoint in $Y$).
In this case we call $A \cap B$ a \defword{separator}.
A separation $(A,B)$ is a called a \defword{clique separation} if $A \cap B$ is a clique and $A \setminus B \neq \emptyset$ and $B \setminus A \neq\emptyset$.
In this case we call $A \cap B$ a \defword{clique separator}.

\begin{Def}[\cite{Bodlaender03,LokshtanovPPS17}]
 The \defword{$k$-improvement} of a graph $G$ is the graph $G^k$ obtained from $G$ by adding an edge between every pair of non-adjacent vertices $v,w$
 for which there are more than $k$ pairwise internally vertex disjoint paths connecting~$v$ and~$w$.
 We say that a graph $G$ \defword{is $k$-improved} when $G^k=G$.
 
 A graph is \defword{$k$-basic} if it is $k$-improved and does not have any separating cliques.
\end{Def}

Note that a $k$-basic graph is $2$-connected.
We summarize several structural properties of~$G^k$.

\begin{lemma}[\cite{LokshtanovPPS17}]
 \label{la:k-improvement}
 Let $G$ be a graph and $k\in\NN$.
 \begin{enumerate}
  \item The $k$-improvement $G^k$ is $k$-improved, i.e., $(G^k)^k=G^k$.
  \item Every tree decomposition of $G$ of width at most $k$ is also a tree decomposition of $G^k$.
  \item There exists an algorithm that, given $G$ and $k$, runs in $\CO(k^2n^3)$ time and either correctly concludes that $\tw(G) > k$, or computes~$G^k$.
 \end{enumerate}
\end{lemma}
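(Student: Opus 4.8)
The plan is to leverage one structural fact underlying all three parts: passing from $G$ to $G^k$ cannot create new vertex cuts of size at most $k$, so small separators of $G$ survive in $G^k$. I would prove (2) first. Given a tree decomposition $(T,\beta)$ of $G$ of width at most $k$, I may assume it is \emph{reduced}, i.e.\ no bag is contained in an adjacent bag (contracting the offending edges neither increases the width nor deletes a bag, since every bag of the reduced decomposition is still a bag of the original), so that every adhesion set $\beta(s)\cap\beta(t)$ has size at most $k$. Let $vw\in E(G^k)\setminus E(G)$, so $G$ has more than $k$ internally vertex-disjoint $v$--$w$ paths. If $v$ and $w$ lay in no common bag, then $\beta^{-1}(v)$ and $\beta^{-1}(w)$ are disjoint connected subtrees of $T$, and the adhesion set on a suitable edge of the $T$-path between them separates $v$ from $w$ in $G$; having size at most $k$, it caps the number of internally vertex-disjoint $v$--$w$ paths at $k$ by Menger's theorem, a contradiction. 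Hence every new edge joins two vertices of a common bag, so (T.1) holds for $G^k$, while (T.2) is unaffected since the vertex set does not change; thus $(T,\beta)$ is a tree decomposition of $G^k$ (of width at most $k$).

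Part (1) applies the same idea directly to separators. If $v,w$ are non-adjacent in $G^k$, they are non-adjacent in $G$, and since the edge $vw$ was not added, $G$ contains a $v$--$w$ separator $S\subseteq V(G)\setminus\{v,w\}$ with $|S|\le k$ (Menger). I claim $S$ still separates $v$ from $w$ in $G^k$: otherwise some path of $G^k-S$ joins the component of $v$ in $G-S$ to that of $w$, hence traverses an edge $xy$ with $x,y\notin S$ lying in distinct components of $G-S$; this edge is not in $G$, so it is an improvement edge, but then $S$ is a $v$--$w$\nobreakdash-type separator of size at most $k$ between $x$ and $y$ in $G$, contradicting that $G$ has more than $k$ internally vertex-disjoint $x$--$y$ paths. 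So $v$ and $w$ are separated by a set of size at most $k$ in $G^k$ as well, whence the improvement step adds no edge between them; ranging over all non-adjacent pairs gives $(G^k)^k=G^k$.

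For (3) I would compute $G^k$ pair by pair, with one shortcut to control the running time. A graph of treewidth at most $k$ is $k$-degenerate and so has at most $kn$ edges; thus if $|E(G)|>kn$ we may immediately and correctly report $\tw(G)>k$. Otherwise $|E(G)|\le kn$, and for each of the $\CO(n^2)$ non-adjacent pairs $\{v,w\}$ we decide whether the local vertex connectivity exceeds $k$ by the usual vertex-splitting reduction to a unit-capacity maximum-flow instance: run BFS augmenting-path rounds until either $k+1$ of them succeed (connectivity $>k$, so add $vw$) or one fails (connectivity $\le k$). Each round costs $\CO(n+|E(G)|)=\CO(kn)$, so each pair costs $\CO(k^2 n)$ and the whole computation costs $\CO(k^2 n^3)$; the added edges are exactly those prescribed by the definition of $G^k$.

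I expect the only non-routine point to be the edge-count shortcut in (3): without first observing that $\tw(G)\le k$ forces $|E(G)|=\CO(kn)$ --- so that each flow computation runs in $\CO(kn)$ rather than $\CO(n^2)$ time --- the naive estimate would be $\CO(kn^4)$. The remaining arguments are standard applications of Menger's theorem and the elementary theory of tree decompositions, the one nuance in (2) being the passage to a reduced decomposition, which is precisely what bounds the adhesion sets by $k$ instead of $k+1$.
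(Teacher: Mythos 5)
Your proof is correct. Note that the paper itself does not prove this lemma but imports it from \cite{LokshtanovPPS17}, and your arguments are essentially the standard ones from that line of work: Menger's theorem plus adhesion-set separators for parts (1) and (2), and pairwise unit-capacity flow computations (with the $|E(G)|\le kn$ degeneracy shortcut, which is indeed the point needed to hit the stated $\CO(k^2n^3)$ bound) for part (3). One small simplification in (2): passing to a reduced decomposition is unnecessary, since if you take the edge $st$ of $T$ where $s$ is the last node of the path containing $v$ in its bag, then $v\in\beta(s)\setminus\beta(t)$ and $w\notin\beta(s)$, so the adhesion set $S=\beta(s)\cap\beta(t)$ automatically avoids $v,w$ and has size at most $|\beta(s)|-1\le k$; making this choice of edge explicit also closes the only slightly glossed point in your argument, namely that the separating adhesion set must not contain $v$ or $w$ for the internally-disjoint-paths version of Menger to apply.
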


Since the construction of $G^k$ from $G$ is isomorphism-invariant, the
concept of the improved graph can be exploited for isomorphism
testing.  Given a graph, we can compute the $k$-improvement and assign
a special color to all added edges to distinguish them from the
original edges.  Hence, it suffices to solve the isomorphism problem
for $k$-improved graphs. In order to further reduce isomorphism testing to the case
of $k$-basic graphs, we build on a decomposition result of Leimer
\cite{Leimer93} which provides an isomorphism-invariant tree
decomposition of a graph into clique-separator free parts.

\begin{theorem}[\cite{Leimer93,ElberfeldS17}]
 \label{thm:clique-separator-decomposition}
 For every connected graph~$G$ there is an isomorphism-invariant tree
 decomposition $(T,\beta)$ of $G$, called clique separator
 decomposition, such that
 \begin{enumerate}
  \item for every $t \in V(T)$ the graph $G[\beta(t)]$ is clique-separator free, and
  \item each adhesion set of $(T,\beta)$ is a clique.
 \end{enumerate}
 Moreover, given a graph $G$, the clique separator decomposition of $G$ can be computed in polynomial time.
\end{theorem}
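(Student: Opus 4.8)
The plan is to prove this by induction on $|V(G)|$, following Leimer's analysis of clique separators. If $G$ has no clique separator, let $T$ be a single node $t$ with $\beta(t)\coloneqq V(G)$; this is a tree decomposition with no adhesion set at all, both properties hold vacuously, and the construction is trivially isomorphism-invariant. So assume $G$ has a clique separator. The central notion is that of a \emph{minimal clique separator}: a set $S\subseteq V(G)$ such that $G[S]$ is a clique and $S$ is an inclusion-minimal $a$--$b$ separator for some pair $a,b\in V(G)$. The set $\mathcal{S}(G)$ of all minimal clique separators of $G$ is clearly isomorphism-invariant, and it is non-empty whenever $G$ has a clique separator, since any clique separator contains a minimal $a$--$b$ separator for a suitable pair, and such a separator is again a clique. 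The first key ingredient is \emph{Leimer's non-crossing lemma}: any two $S,S'\in\mathcal{S}(G)$ are non-crossing, i.e.\ $S\setminus S'$ lies in a single connected component of $G-S'$ (and symmetrically). I would prove this by contradiction: if $S$ and $S'$ crossed, then -- using crucially that $S'$ induces a clique, so it can be traversed in a single step -- one can exhibit a proper subset of $S$ that still separates the pair witnessing minimality of $S$, a contradiction.

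Using the non-crossing lemma, I would analyze the recursive splitting procedure: given $G$ with $\mathcal{S}(G)\neq\emptyset$, pick some $S\in\mathcal{S}(G)$, let $C_1,\dots,C_r$ be the connected components of $G-S$ (note $r\ge 2$ since $S$ is a separator), set $G_i\coloneqq G[C_i\cup S]$, and recurse on each $G_i$. The technical heart of the argument -- and the step I expect to be the main obstacle -- is to show this procedure is \emph{confluent}: the collection of clique-separator-free pieces obtained at the leaves of the recursion, which we call the \emph{atoms} of $G$, does not depend on which minimal clique separator is chosen at each stage. Concretely, one checks, using non-crossing, that a second minimal clique separator $S'\neq S$ of $G$ remains a minimal clique separator of the unique piece $G_i$ containing $S'\setminus S$, and that splitting along $S$ and then along $S'$ inside that piece yields the same pieces as splitting along $S'$ first and then $S$; together with termination (each new piece $G_i$ has strictly fewer vertices than $G$ because $r\ge2$) this exhibits splitting as a confluent, terminating rewriting system. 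Hence the atom set -- and also the set of separators actually used -- are well-defined functions of $G$, and therefore isomorphism-invariant. One also records the observation that every edge $vw\in E(G)$ stays inside a single piece at each split, hence lies inside some atom.

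To assemble $(T,\beta)$ I would read the tree off the recursion: each split of a current graph along $S$ contributes a node with bag $S$, joined to the decompositions of the pieces; the leaves of the recursion contribute the atoms themselves as bags. When attaching the decomposition of a piece $G_i$ to its parent node (bag $S$), I would attach it via an edge to a bag of the $G_i$-decomposition that contains $S$ -- such a bag exists because $S$, being a clique, is contained in a single atom of $G_i$ (no clique separator of $G_i$ can split the clique $S$). After merging each separator node with an adjacent bag that contains it, one obtains a tree decomposition. Property (1) holds because the atom bags induce clique-separator-free subgraphs by construction; property (2) holds because every adhesion set equals one of the $S\in\mathcal{S}(G)$ used in the recursion, hence is a clique; (T.1) is the edge-containment observation above; and (T.2) holds because at each split a vertex $v$ either lies in $S$ (and then, by the attachment choice, its bags in each subtree form a connected subtree reaching the attachment point adjacent to the root) or lies in exactly one $C_i$ (and then occurs in exactly one subtree). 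Isomorphism-invariance of the whole object is immediate from the invariance of $\mathcal{S}(G)$ and of the atom set.

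Finally, for the algorithmic statement: a minimal clique separator of $G$, or a certificate that $G$ has none, can be found in polynomial time -- for instance by computing, for each pair $a,b\in V(G)$, a minimum $a$--$b$ vertex separator via a max-flow computation and testing whether it induces a clique, or by directly invoking a clique-separator decomposition algorithm in the style of Tarjan. By the non-crossing lemma the used separators behave like the edges of a tree whose nodes are the atoms, so there are only polynomially many atoms and separators and the resulting decomposition has polynomial total size; since each split costs polynomial time, the clique separator decomposition of $G$ is computed in polynomial time overall. (For the applications in this paper it is then routine to recolor the added edges of $G^k$ and to treat each bag of this decomposition as a vertex- and arc-colored $2$-connected graph, reducing isomorphism of $k$-improved graphs to isomorphism of $k$-basic graphs.)
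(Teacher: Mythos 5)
The paper does not actually prove this theorem: it imports it from \cite{Leimer93,ElberfeldS17}, so your attempt has to be measured against those results rather than against an in-paper argument. Your reconstruction of the classical part is essentially sound: minimal clique separators exist whenever a clique separator exists, they are pairwise non-crossing (in fact this needs no minimality at all -- $S\setminus S'$ is a clique, hence connected in $G-S'$, hence inside one component), recursive splitting terminates, confluence makes the set of atoms a well-defined invariant of $G$, every edge survives into some atom, adhesion sets are the separators used and hence cliques, and the whole computation is polynomial (Tarjan/Leimer). This is the right skeleton for existence and for the algorithmic claim.

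The genuine gap is the isomorphism-invariance of the \emph{tree decomposition}, which is precisely the property this survey needs and the reason \cite{ElberfeldS17} is cited alongside \cite{Leimer93}. Confluence gives you choice-independence of the atom set (and of the multiset of separators used), but not of the tree: your construction makes two kinds of arbitrary choices -- which element of $\mathcal{S}(G)$ to split on at each stage, and, when attaching a piece or absorbing a separator node, which bag containing $S$ to use -- and different choices can yield non-isomorphic trees. Atom trees are non-unique in the same way clique trees of a chordal graph are non-unique; for instance, if the clique $S$ along which you split is contained in two different atoms of a piece $G_i$ (say $S\subseteq S'$ for a minimal clique separator $S'$ of $G_i$ with atoms on both sides containing $S'$), then the two possible attachment points give genuinely different decompositions, so the map $G\mapsto(T,\beta)$ is not even well defined, let alone commuting with isomorphisms. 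Hence ``invariance is immediate from the invariance of $\mathcal{S}(G)$ and of the atoms'' does not follow. A correct route is choice-free: take the entire family of minimal clique separators, observe that non-crossing makes the induced separations $\{C\cup S,\,V(G)\setminus C\}$ (over all $S\in\mathcal{S}(G)$ and components $C$ of $G-S$) pairwise nested, and realize this nested family simultaneously in one canonical tree decomposition whose parts are the atoms and whose adhesion sets are exactly these separators; alternatively invoke the canonization machinery of \cite{ElberfeldS17}. Without such a step your proof establishes existence of \emph{some} decomposition with properties (1) and (2), but not the isomorphism-invariant one claimed by the theorem.
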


Observe that applying the theorem to a $k$-improved graph results in a
tree decomposition $(T,\beta)$ such that $G[\beta(t)]$ is $k$-basic
for every $t \in V(T)$.  Using standard dynamic programming arguments,
this decomposition allows us to essentially restrict to the case of
$k$-basic graphs at the cost of an additional factor
$2^{\CO(k \log k)}n^{\CO(1)}$ in the running time.  Now, a key insight
is that $k$-basic graphs are once again $k$-CR-bounded after
individualizing a single vertex.  The next lemma provides us the main
tool for proving this.  Recall that, for a graph $G$ and sets
$A,B \subseteq V(G)$, we denote $E(A,B) \coloneqq \{vw \in E(G) \mid v \in A, w \in B\}$.

\begin{lemma}
 \label{la:many-disjoint-paths}
 Let $G$ be a graph and let $\chi$ be a coloring that is stable with respect to the Color Refinement algorithm.
 Let $X_1,\dots,X_m$ be distinct color classes of $\chi$ and suppose that $E(X_i,X_{i+1}) \neq \emptyset$ for all $i \in [m-1]$.
 Then there are $\min_{i \in [m]}|X_i|$ many vertex-disjoint paths from $X_1$ to $X_m$.
\end{lemma}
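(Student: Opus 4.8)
The statement is a Menger-type result, so the natural route is to invoke Menger's theorem: it suffices to show that every vertex set $S \subseteq V(G)$ separating $X_1$ from $X_m$ has size at least $\min_{i \in [m]} |X_i|$. Equivalently, letting $p \coloneqq \min_i |X_i|$, I will show that any $S$ with $|S| < p$ fails to separate $X_1$ from $X_m$, i.e., after deleting $S$ there is still a path from (a vertex of) $X_1$ to (a vertex of) $X_m$.

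\textbf{Key steps.} First I would fix a separator $S$ with $|S| \le p-1$ and consider the component structure of $G - S$. Since $|S| < p \le |X_i|$ for every $i$, each color class $X_i$ has a vertex outside $S$; more importantly, I want to track which \emph{components} of $G-S$ meet which $X_i$. The crucial tool is stability of $\chi$ under Color Refinement, which I would use in the following form: if $E(X_i, X_{i+1}) \neq \emptyset$, then \emph{every} vertex of $X_{i+1}$ has a neighbor in $X_i$ and vice versa (stability forces all vertices of one color to see the same number $\ge 1$ of neighbors in any other color). The plan is then a propagation argument along the chain $X_1, X_2, \dots, X_m$: let $U$ be the union of the components of $G - S$ that contain at least one vertex of $X_1 \setminus S$. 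I claim $U$ meets $X_i \setminus S$ for every $i$, proved by induction on $i$. For the inductive step, suppose $U$ contains some $v \in X_i \setminus S$. Because $v$ has a neighbor in $X_{i-1}$? — rather, I want to move forward: by stability and $E(X_i, X_{i+1}) \neq \emptyset$, every vertex of $X_i$, in particular $v$, has a neighbor $w \in X_{i+1}$; if $w \notin S$ then $w$ lies in the same component as $v$, hence $w \in U \cap X_{i+1}$, closing the induction. The danger is that \emph{all} neighbors of $v$ in $X_{i+1}$ lie in $S$.

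\textbf{Main obstacle.} The difficulty is precisely handling the case where $S$ ``absorbs'' the link between consecutive color classes. To handle it I would strengthen the bookkeeping: instead of following a single vertex, follow the whole set $U \cap X_i$. Let $c_i$ denote the (common, by stability) number of neighbors in $X_{i+1}$ of a vertex of $X_i$; since $E(X_i,X_{i+1})\neq\emptyset$, we have $c_i \ge 1$, and by double counting the number of edges between $U \cap X_i$ and $X_{i+1}$ equals $|U \cap X_i| \cdot c_i$ (using that each vertex of $X_i$ has exactly $c_i$ forward neighbors), while each vertex of $X_{i+1}$ receives at most $c_i'$ of them where $c_i'$ is the back-degree. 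A cleaner approach avoiding arithmetic: argue that $|U \cap X_i| \ge |X_i| - |S| \ge p - |S| \ge 1$ for all $i$, by induction — the base case is $|U \cap X_1| \ge |X_1| - |S|$ since $X_1 \setminus S \subseteq U$; for the step, every vertex of $X_{i+1} \setminus S$ that has a neighbor in $X_i \setminus S$ lies in $U$, and since every vertex of $X_{i+1}$ has a neighbor in $X_i$ (stability) while at most $|S|$ vertices of $X_{i+1}$ can have \emph{all} their $X_i$-neighbors blocked only if those lie in $S$ — here I'd need that a vertex of $X_{i+1}$ with a neighbor in the nonempty set $X_i \setminus S$ survives into $U$, but to guarantee such neighbors exist for enough vertices I use that the bipartite graph between $X_i$ and $X_{i+1}$ is biregular (both sides regular, again by stability), so deleting $|S|$ vertices from $X_i$ kills the $U$-membership of at most $|S|$ vertices of $X_{i+1}$. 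Thus $|U \cap X_{i+1}| \ge |X_{i+1}| - |S| \ge p - |S| \ge 1$. In particular $U \cap X_m \neq \emptyset$, so $S$ does not separate $X_1$ from $X_m$; since $|S| < p$ was arbitrary, Menger's theorem yields $p$ vertex-disjoint $X_1$–$X_m$ paths. I expect the biregularity observation and the precise ``$|U\cap X_{i+1}| \ge |X_{i+1}| - |S|$'' estimate to be the only genuinely delicate point; everything else is Menger plus bookkeeping.
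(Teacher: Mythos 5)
Your overall route is genuinely different from the paper's: the paper proves the lemma constructively, observing that stability makes each bipartite graph between consecutive classes biregular, so (in the equal-size case) Hall's theorem yields perfect matchings whose concatenation gives the paths, with the general case deferred to \cite{GroheNW20}; you instead reduce to Menger's theorem and try to show that no set $S$ with $|S|<p\coloneqq\min_i|X_i|$ separates $X_1$ from $X_m$. That reduction and the propagation idea are sound in principle, but the quantitative step your induction rests on is false. The claim that ``deleting $|S|$ vertices from $X_i$ kills the $U$-membership of at most $|S|$ vertices of $X_{i+1}$'', i.e.\ the invariant $|U\cap X_{i+1}|\ge |X_{i+1}|-|S|$, fails whenever $|X_{i+1}|>|X_i|$: biregularity gives forward degree $c_i$ and back degree $c_i'$ with $|X_i|\,c_i=|X_{i+1}|\,c_i'$, so a set $B\subseteq X_i$ of size $b$ can contain \emph{all} $X_i$-neighbours of up to $b\,c_i/c_i'=b\,|X_{i+1}|/|X_i|$ vertices of $X_{i+1}$. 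Concretely, take two disjoint stars $K_{1,50}$ with the centres forming $X_i$ and the leaves $X_{i+1}$ (a stable partition): removing a single centre strands $50$ leaves, not at most $1$. (Relatedly, ``every vertex of $X_{i+1}\setminus S$ with a neighbour in $X_i\setminus S$ lies in $U$'' is also not right as stated; you need a neighbour in $U\cap X_i$.)

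The fix is exactly the double count you set up and then abandoned as ``arithmetic'': biregularity bounds the \emph{fraction} of stranded vertices, not their number. Writing $B_i\coloneqq X_i\setminus U$, every vertex of $B_{i+1}\setminus S$ has all its $c_i'$ back-neighbours in $B_i$, and counting edges leaving $B_i$ gives $|B_{i+1}\setminus S|\,c_i'\le |B_i|\,c_i$, i.e.
\[
\frac{|B_{i+1}|}{|X_{i+1}|}\;\le\;\frac{|B_i|}{|X_i|}+\frac{|S\cap X_{i+1}|}{|X_{i+1}|},
\qquad\text{with}\qquad \frac{|B_1|}{|X_1|}=\frac{|S\cap X_1|}{|X_1|}.
\]
Summing along the chain and using that the $X_i$ are pairwise disjoint (so $\sum_i|S\cap X_i|\le|S|$) yields $|B_m|/|X_m|\le |S|/p<1$, hence $U\cap X_m\neq\emptyset$ and Menger does give the $p$ disjoint paths. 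So your approach can be completed, but as written the inductive invariant is wrong, and the missing idea is precisely this relative (per-class-normalized) bookkeeping combined with the disjointness of the colour classes.
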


To gain some intuition on the lemma, consider the simple case where
$\ell \coloneqq |X_i| = |X_j|$ for all $i,j \in [m]$.  Since $\chi$ is
stable we conclude that the bipartite graph
$H_i \coloneqq (X_i \cup X_{i+1},E(X_i,X_{i+1}))$ is non-empty and
biregular.  Hence, by Hall's Marriage Theorem, there is a perfect
matching $M_i$ in $H_i$ of size $|M_i| = \ell$.  Taking the disjoint
union of all sets $M_i$, $i \in [m-1]$, gives $\ell$ vertex-disjoint
paths from $X_1$ to $X_m$.  The general case can be proved using
similar arguments.  The details can be found in \cite{GroheNW20}.

\begin{lemma}
 \label{la:k-basic-CR-bounded}
 Let $G$ be a $k$-basic graph.
 Also let $v \in V(G)$ such that $\deg(v) \leq k$.
 Then $\cl_k^G(v) = V(G)$.
\end{lemma}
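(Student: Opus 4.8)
The plan is to argue by contradiction. Let $\chi_V^*$ be the coloring obtained by individualizing $v$, put $\chi \coloneqq \tColRef{k}{G,\chi_V^*}$, and suppose $\chi$ is not discrete. Set $U \coloneqq \{w \in V(G) \mid |[w]_\chi| > 1\}$ (so $U \neq \emptyset$) and $S \coloneqq V(G) \setminus U$; note $v \in S$. Since $\chi$ is a fixed point of the $k$-CR-bounded refinement, every color class has size $1$ or $> k$, so in particular every color class meeting $U$ has size $> k$. First I would show, exactly as in the proof of Theorem~\ref{theo:bounded-degree}, that $N(v) \subseteq S$: for $w \in N(v)$, stability of $\chi$ under Color Refinement together with the fact that $v$ is a singleton forces every vertex of $[w]_\chi$ to have its unique neighbor of color $\chi(v)$ equal to $v$, whence $[w]_\chi \subseteq N(v)$ and $|[w]_\chi| \leq \deg(v) \leq k$, so $|[w]_\chi| = 1$. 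In particular $v$ has no neighbor in $U$. (This is the only place the hypothesis $\deg(v) \le k$ is used.)

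Next I would localize to a single connected component $C$ of $G[U]$ and consider its neighborhood $\partial \coloneqq N_G(C)$ in $G$. By maximality of $C$ we have $\partial \subseteq S$, and $\partial \neq \emptyset$ since $G$ is connected and $v \notin C$. The key structural claim is that $\partial$ is not a clique: otherwise $(C \cup \partial,\; V(G) \setminus C)$ would be a separation whose separator $\partial$ is a clique, with $C \neq \emptyset$ and $v \in V(G) \setminus (C \cup \partial) \neq \emptyset$ (using $v \notin \partial$), i.e.\ a clique separation, contradicting that $G$ is $k$-basic. Hence $|\partial| \geq 2$ and I can fix two non-adjacent vertices $a, b \in \partial$.

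It then remains to derive a contradiction with $G$ being $k$-improved by exhibiting more than $k$ pairwise internally vertex-disjoint $a$--$b$ paths. Since $a,b \in N_G(C)$, choose $x_a \in C \cap N(a)$, $x_b \in C \cap N(b)$ and set $X_1 \coloneqq [x_a]_\chi$, $X_m \coloneqq [x_b]_\chi$; as $a,b$ are singletons and $\chi$ is Color-Refinement-stable, $X_1 \subseteq N(a)$ and $X_m \subseteq N(b)$. As $G[C]$ is connected, the colors along any $x_a$--$x_b$ path in $G[C]$ form a walk in the ``quotient graph'' on color classes (consecutive classes being joined because the corresponding vertices are adjacent), and extracting a path from this walk yields distinct color classes $X_1, X_2, \dots, X_m$, each meeting $C \subseteq U$ and therefore of size $> k$, with $E(X_i,X_{i+1}) \neq \emptyset$ for all $i$. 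Now Lemma~\ref{la:many-disjoint-paths} supplies $\min_i |X_i| > k$ pairwise vertex-disjoint paths from $X_1$ to $X_m$ that use only vertices of $X_1 \cup \dots \cup X_m \subseteq U$; since $X_1 \subseteq N(a)$, $X_m \subseteq N(b)$, and $a,b \in S$ lie on none of them, prepending $a$ and appending $b$ turns each into an $a$--$b$ path, giving more than $k$ pairwise internally vertex-disjoint $a$--$b$ paths in $G$ --- contradicting that $G = G^k$. (If $X_1 = X_m$, the same conclusion follows by taking the one-vertex paths through the $> k$ vertices of $X_1$.) Therefore $\chi$ is discrete, i.e.\ $\cl_k^G(v) = V(G)$.

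\emph{Main obstacle.} I expect the delicate part to be the final paragraph: one has to set up the application of Lemma~\ref{la:many-disjoint-paths} so that the ``tube'' of color classes $X_1, \dots, X_m$ stays inside the component $C$ (so the resulting paths remain in $U$ and thus avoid the singleton endpoints $a,b$), so that the first and last classes are contained in $N(a)$ and $N(b)$ respectively, and so that all of them have size $> k$ --- the last point being precisely what is needed to beat the threshold $k$ in the definition of $k$-improved. The other genuinely necessary ingredient, specific to $k$-basic graphs, is using the absence of clique separators to produce the non-adjacent pair $a,b$ on the boundary of the non-discrete region; everything else is bookkeeping parallel to the proofs of Theorem~\ref{theo:bounded-degree} and Corollary~\ref{cor:excluded-k3h-t-cr-bounded}.
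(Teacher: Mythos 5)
Your proof is correct and follows essentially the same route as the paper's: individualize $v$, observe $N(v)\subseteq\cl_k^G(v)$, find non-adjacent $w_1,w_2$ on the boundary of a component of the non-closed region using the absence of clique separators, and then apply Lemma~\ref{la:many-disjoint-paths} to a chain of color classes (each of size $>k$) through that component to produce $>k$ internally vertex-disjoint $w_1$--$w_2$ paths, contradicting $G=G^k$. You are in fact a bit more careful than the paper's writeup in two places: you explicitly extract a \emph{path} in the quotient-on-colors from the walk induced by a $G[C]$-path (so that the $X_i$ are genuinely distinct, as Lemma~\ref{la:many-disjoint-paths} requires), and you handle the degenerate case $X_1=X_m$.
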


\begin{proof}
 Let $\chi$ denote the $k$-CR-stable coloring of $G$ after individualizing $v$.
 First note that $N(v) \subseteq \cl_k^G(v)$ since $\deg(v) \leq k$.
 Now suppose towards a contradiction that $\cl_k^G(v) \neq V(G)$ and let $W$ be the vertex set of a connected component of $G - \cl_k^G(v)$.
 Then $v \notin N(W)$.
 Since $G$ does not have any separating cliques there exist $w_1,w_2 \in N(W)$ such that $w_1w_2 \notin E(G)$.
 Let $u_0,u_1,\dots,u_m,u_{m+1}$ be a path from $w_1 = u_0$ to $w_2 = u_{m+1}$ such that $u_i \in W$ for all $i \in [m]$.
 Let $X_i \coloneqq [u_i]_\chi$ and note that $|X_i| \geq k+1$.
 Also note that $X_1 \subseteq N(w_1)$ and $X_m \subseteq N(w_2)$ since $w_1,w_2 \in \cl_k^G(v)$ and $\chi$ is stable.
 Hence, by Lemma \ref{la:many-disjoint-paths}, there are $k+1$ internally vertex-disjoint paths from $w_1$ to $w_2$.
 But this contradicts the fact that $G$ is $k$-improved.
\end{proof}

Overall, this gives an isomorphism test for graphs of tree width $k$
running in time $2^{\CO(k \log k)} n^{\polylog{k}}$ as follows.
First, the algorithm computes the clique-separator decomposition of
the $k$-improved graph using Lemma \ref{la:k-improvement} and Theorem
\ref{thm:clique-separator-decomposition}.  The isomorphism problem for
the $k$-basic parts can be solved in time $n^{\polylog{k}}$ by Lemma
\ref{la:k-basic-CR-bounded} and Corollary
\ref{cor:isomorphism-t-cr-bounded}.  Finally, the results for the bags
can be joined using dynamic programming along the clique separator
decomposition.  (Formally, this requires the algorithm to solve a more
general problem for $k$-basic graphs where information obtained by
dynamic programming on the adhesion sets to the children of a bag are
incorporated.  But this can be easily achieved within the
group-theoretic framework; see e.g.\ \cite{GroheNW20}.)

Using further advanced techniques (which are beyond the scope of this
article) this result can be strengthened into two directions.  First,
building isomorphism-invariant decompositions of $k$-basic graphs, it
is possible to build an improved fixed-parameter tractable isomorphism test for graphs of tree width
$k$ running in time $2^{k \polylog{k}}n^3$ \cite{GroheNSW20}
(improving on the $2^{\CO(k^5\log k)}n^5$ algorithm by Lokshtanov et al.\ \cite{LokshtanovPPS17}).

And second, isomorphism of graphs of tree width $k$ can also be tested in time $n^{\polylog{k}}$.
Here, the crucial step is to speed up to dynamic programming procedure
along the clique separator decomposition.  Let $D$ be the vertex set
of the root bag of the clique separator decomposition and let
$Z_1,\dots,Z_\ell$ be connected components of $G - D$ such that
$S \coloneqq N_G(Z_i) = N_G(Z_j)$ for all $i,j \in [\ell]$.  Observe
that $|S| \leq k+1$ since it is a clique by the properties of the
clique separator decomposition.  After recursively computing
representations for the sets of isomorphisms between $(G[Z_i \cup S],S)$
and $(G[Z_j \cup S],S)$ for all $i,j \in [\ell]$ the partial solutions
have to be joined into the automorphism group of
$(G[Z_1 \cup \dots \cup Z_\ell \cup S],S)$.  A simple strategy is to
iterate through all permutations $\gamma \in \Sym(S)$ and check in
polynomial time which permutations can be extended to an automorphism.
This gives a running time $2^{\CO(k \log k)} \ell^{\CO(1)}$ as already
exploited above.  Giving another extension of Babai's quasi-polynomial
time algorithm, Wiebking \cite{Wiebking20} presented an algorithm
solving this problem in time $\ell^{\polylog{k}}$.  This enables us to
achieve the running time $n^{\polylog{k}}$ for isomorphism testing of
graphs of tree width $k$.

\subsection{Graphs Excluding Small (Topological) Minors}

In the previous two subsections we have seen isomorphism tests running
in time $n^{\polylog{g}}$ for graphs of genus $g$ and
$n^{\polylog{k}}$ for graphs of tree width $k$.  Both algorithms
follow a similar pattern of computing an isomorphism-invariant
tree-decomposition of small adhesion-width such that (the torso of)
each bag is $t$-CR-bounded after individualizing a constant number of
vertices.  Recall that the same strategy also works for graphs of
bounded degree (taking the trivial tree decomposition with a single
bag).  This suggests that a similar approach might work in a much more
general setting considering graphs that exclude an arbitrary
$h$-vertex graph as a minor or even as a topological subgraph.

Let $G,H$ be graphs.  Recall that $H$ is a minor of $G$ if $H$ can be
obtained from $G$ by deleting vertices and edges as well as
contracting edges.  Moreover, $H$ is a \defword{topological subgraph} of $G$
if $H$ can be obtained from $G$ by deleting vertices and edges as well
as dissolving degree-two vertices (i.e., deleting the vertex and
connecting its two neighbors).  Note that, if $H$ is a topological subgraph of $G$,
then $H$ is also a minor of $G$ (in general, this is not
true in the other direction).  A graph $G$ \defword{excludes $H$ as a
(topological) minor} if it has no (topological) minor isomorphic to $H$.

A common argument for the correctness of the algorithms considered so
far is the construction of forbidden subgraphs from $t$-CR-stable
colorings that are not discrete.  In order to generalize the existing
algorithms we first introduce a tool that allows us to construct large
(topological) minors from $t$-CR-stable colorings which are not
discrete, but still contain some individualized vertices.  Let $G$ be
a graph, let $\chi\colon V(G) \rightarrow C$ be a vertex-coloring and
let $T$ be a tree with vertex set $V(T)=C$.  A subgraph
$H \subseteq G$ \defword{agrees} with $T$ if
$\chi|_{V(H)} \colon H \cong T$, i.e., the coloring $\chi$ induces an
isomorphism between $H$ and $T$.  Observe that each $H\subseteq G$
that agrees with a tree $T$ is also a tree.

For a tree $T$ we define $V_{\leq i}(T) \coloneqq \{t \in V(T) \mid \deg(t) \leq i\}$ and $V_{\geq i}(T) \coloneqq \{t \in V(T) \mid \deg(t) \geq i\}$.
It is well known that for trees $T$ it holds that $|V_{\geq 3}(T)|<|V_{\leq 1}(T)|$.

\begin{lemma}[\cite{GroheNW20}]
 \label{la:find-disjoint-trees}
 Let $G$ be a graph and let $\chi\colon V(G)\to C$ be a vertex-coloring that is stable with respect to the Color Refinement algorithm.
 Also let $T$ be a tree with vertex set $V(T) = C$ and assume that for every $c_1c_2\in E(T)$ there are $v_1 \in \chi^{-1}(c_1)$ and $v_2 \in \chi^{-1}(c_2)$ such that $v_1v_2 \in E(G)$.
 Let $m \coloneqq \min_{c\in C} |\chi^{-1}(c)|$ and let $\ell \coloneqq 2|V_{\leq 1}(T)|+|V_{\geq 3}(T)|$.
 
 Then there are (at least) $\lfloor\frac{m}{\ell}\rfloor$ pairwise vertex-disjoint trees in $G$ that agree with $T$.
\end{lemma}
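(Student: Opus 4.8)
The plan is to build the disjoint trees greedily, one at a time, using the structure of the color-refinement-stable coloring and a counting argument that keeps track of how many vertices in each color class have already been used. First I would set up the bookkeeping: let $m = \min_{c \in C}|\chi^{-1}(c)|$ and $\ell = 2|V_{\leq 1}(T)| + |V_{\geq 3}(T)|$. I claim that as long as at least $\ell$ vertices remain unused in every color class, we can extract one more tree agreeing with $T$ that is disjoint from all previously chosen ones; since removing a tree agreeing with $T$ uses exactly one vertex of each color, this immediately yields $\lfloor m/\ell \rfloor$ disjoint trees.

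The heart of the argument is the single-extraction step. Here I would build the tree agreeing with $T$ by rooting $T$ at an arbitrary leaf and processing vertices in, say, a BFS order, at each step picking a vertex of $G$ of the required color that is adjacent (in $G$) to the already-chosen image of $T$'s parent and that has not been used — neither by a previously extracted tree nor by the partial tree currently under construction. The crucial point, exploiting stability under the Color Refinement algorithm, is that if $c_1 c_2 \in E(T)$ and some vertex $v_1 \in \chi^{-1}(c_1)$ has a neighbor in $\chi^{-1}(c_2)$, then \emph{every} vertex of color $c_1$ has at least one neighbor of color $c_2$ (stability makes the bipartite graph $(\chi^{-1}(c_1) \cup \chi^{-1}(c_2), E(\chi^{-1}(c_1),\chi^{-1}(c_2)))$ have no isolated vertices on the $c_1$ side). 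The only way this extension can fail at color $c_2$ is if \emph{all} neighbors of the chosen $c_1$-image inside $\chi^{-1}(c_2)$ are already blocked. To control this, I would argue that the number of vertices of any fixed color that get blocked during the construction of a single tree is bounded by $\ell$: a color $c$ is ``threatened'' from a neighboring color only when $T$ branches, and the branching structure of a tree is captured precisely by the quantity $2|V_{\leq 1}(T)| + |V_{\geq 3}(T)|$ — this is where the well-known bound $|V_{\geq 3}(T)| < |V_{\leq 1}(T)|$ and a careful accounting of how many ``demands'' are placed on each color class enter.

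More concretely, I would analyze the process by associating, to each vertex $t$ of $T$ and each already-placed neighbor, a potential demand on the color classes, and observe that at the moment we need to place the image of $t$, the number of $G$-vertices of color $\chi(t)$ that are forbidden for use is at most the number of other images already placed that could conflict — which telescopes to something bounded by the number of leaves and high-degree vertices of $T$, hence by $\ell$. Combined with the assumption that each color class still has $\geq \ell$ unused vertices available, at least one valid choice always remains, so the extension never gets stuck and a full tree agreeing with $T$ is produced. Then we delete its $|C|$ vertices and repeat.

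The main obstacle I anticipate is getting the constant in the ``blocked vertices'' bound exactly right so that it matches $\ell = 2|V_{\leq 1}(T)| + |V_{\geq 3}(T)|$ rather than some looser bound like $|V(T)|$ or $2|V(T)|$ — this requires exploiting that along any path of degree-two vertices in $T$ there is no branching, so the demands propagate without multiplying, and only leaves (which contribute a factor reflecting that a leaf can be reached but imposes no onward demand) and degree-$\geq 3$ vertices (each extra branch contributing one more parallel demand) cost anything. Making this precise, and being careful that a vertex of $G$ already used by a \emph{previous} tree versus by the \emph{current partial} tree are handled uniformly in the counting, is the delicate part; everything else is routine. The details, as the paper notes, are in \cite{GroheNW20}.
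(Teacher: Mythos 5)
The central step of your greedy approach---extracting one tree at a time while bounding the number of ``blocked'' vertices per color class---does not survive scrutiny. The problem is local, not global: when you sit at the already-placed image $u$ of a $T$-vertex $c_1$ and try to extend to a child of color $c_2$, the only vertices available are the \emph{neighbors of $u$} inside $\chi^{-1}(c_2)$. By stability, the bipartite graph between $\chi^{-1}(c_1)$ and $\chi^{-1}(c_2)$ is biregular, say of degree $d$ on the $c_1$-side, and $d$ can be as small as $1$ (a perfect matching). In that case, as soon as one previously extracted tree has consumed the unique $c_2$-neighbor of $u$, you are stuck at $u$---even though the overwhelming majority of $\chi^{-1}(c_2)$ remains unused, so ``at least $\ell$ unused vertices in every color class'' offers no protection. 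Your accounting controls the number of used vertices \emph{in} $\chi^{-1}(c_2)$, but what matters is the number of used vertices \emph{among the $d$ neighbors of $u$}, and these are unrelated when $d$ is small. Avoiding this requires choosing, at every step, an image that ``looks ahead'' to all downstream colors, which in effect forces one to build all $\lfloor m/\ell\rfloor$ trees simultaneously and coordinate the choices. That is the route indicated by the authors for the special path case (Lemma~\ref{la:many-disjoint-paths}): decompose each biregular bipartite layer into perfect matchings via Hall's theorem and stitch the layers together, rather than peeling off trees one by one.

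A second, smaller issue is that the bookkeeping does not add up. Your stated premise (``as long as $\geq\ell$ vertices remain unused per class, one more tree can be extracted, using one vertex per class'') would yield $m-\ell+1$ disjoint trees, which exceeds $\lfloor m/\ell\rfloor$ whenever $m>\ell$; the fact that you only conclude $\lfloor m/\ell\rfloor$ is a sign that the premise you would actually need is weaker. Concretely, each extraction should be charged $\ell$ vertices per color rather than one---the extra $\ell-1$ accounting for vertices rendered unusable by the leaves and high-degree vertices of $T$, hence the formula $\ell=2|V_{\leq 1}(T)|+|V_{\geq 3}(T)|$---and one must explain why this waste does not compound across successive extractions. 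Supplying that charge and its amortized analysis is precisely the technical content that is missing from the proposal.
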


Intuitively speaking, assuming there is a stable coloring in which all color classes are large, the lemma states that one can find a large number of vertex-disjoint trees with predefined color patterns.
Observe that Lemma \ref{la:find-disjoint-trees} generalizes Lemma \ref{la:many-disjoint-paths} which provides such a statement for paths instead of arbitrary trees (ignoring the additional factor of two for the number of vertex-disjoint paths in comparison to the minimum color class size).

The proof of this lemma is quite technical and we omit any details here.
As a first, simple application we can provide a strengthened version of Lemma \ref{la:extend-t-cr-bounded-scope}.

\begin{lemma}
 \label{la:extend-t-cr-bounded-scope-topological}
 Let $(G,\chi)$ be a 3-connected, vertex-colored graph that excludes $K_{3,h}$ as a topological subgraph and suppose $V_1 \uplus V_2 = V(G)$ such that
 \begin{enumerate}
  \item $|[v]_\chi| = 1$ for all $v \in V_1$,
  \item $\chi$ is stable with respect to the Color Refinement algorithm, and
  \item $|V_1| \geq 3$.
 \end{enumerate}
 Then there exists $u \in V_2$ such that $|[u]_\chi| \leq 7h-1$.
\end{lemma}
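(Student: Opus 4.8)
The plan is to follow the proof of Lemma~\ref{la:extend-t-cr-bounded-scope}, but to replace the minor-theoretic contraction step by an appeal to Lemma~\ref{la:find-disjoint-trees}, which produces \emph{many vertex-disjoint} copies of a prescribed tree and hence yields a topological subgraph rather than just a minor. As in that proof, I would work with the quotient graph $H$ on the color set $C \coloneqq \chi(V(G))$ (writing $C_i \coloneqq \chi(V_i)$), choose a connected component $C' \subseteq C_2$ of $H[C_2]$, and use $3$-connectivity of $G$ together with the fact that every color in $C_1$ is a singleton class to conclude $|N_H(C')| \geq 3$. Fixing distinct $c_1,c_2,c_3 \in N_H(C')$ with $v_i \coloneqq \chi^{-1}(c_i)$, the same construction as before produces a subtree $T'$ of $H$ that is a \emph{tripod}: its set of leaves is $\{c_1,c_2,c_3\}$, it has a unique vertex $c$ of degree $3$, and its three legs are internally disjoint paths from $c$ to $c_1$, to $c_2$, and to $c_3$. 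Every color occurring on $T'$ other than $c_1,c_2,c_3$ lies in $C_2$.

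Now I argue by contradiction: suppose every color class $[u]_\chi$ with $u \in V_2$ has size at least $7h$. Consider first the main case in which all three legs of $T'$ have length at least $2$, and let $\tilde T$ be the subtree of $T'$ induced on $V(T') \setminus \{c_1,c_2,c_3\}$. Then $\tilde T$ is again a tripod, with the same branch vertex $c$ and with leaves $c_1',c_2',c_3'$, where $c_i'$ is the color adjacent to $c_i$ on the $i$-th leg. A tripod $\tilde T$ satisfies $2|V_{\leq 1}(\tilde T)| + |V_{\geq 3}(\tilde T)| = 2\cdot 3 + 1 = 7$, and all of its colors lie in $C_2$, so the minimum color-class size among them is at least $7h$; hence Lemma~\ref{la:find-disjoint-trees} (applied to $\tilde T$ with the restriction of $\chi$) yields $h$ pairwise vertex-disjoint trees $S_1,\dots,S_h$ in $G$, each agreeing with $\tilde T$. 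Each $S_j$ therefore has exactly one leaf of each of the colors $c_1',c_2',c_3'$ and exactly one vertex $r_j$ of color $c$, which has degree $3$ in $S_j$. Since $c_i'c_i \in E(H)$, the color $c_i$ is a singleton, and $\chi$ is stable, every vertex of color $c_i'$ is adjacent to $v_i$; in particular the $c_i'$-leaf of $S_j$ is. Adding the three distinct vertices $v_1,v_2,v_3$ together with the $3h$ edges joining them to the corresponding leaves of the $S_j$ glues the $h$ pairwise vertex-disjoint subdivided claws $S_j$ along their new leaves $v_1,v_2,v_3$. Since no $S_j$ contains any of $v_1,v_2,v_3$ (their colors $c_1,c_2,c_3$ do not occur in $\tilde T$), the $3h$ resulting paths are internally disjoint, so we have exhibited a subdivision of $K_{3,h}$ in $G$, contradicting the hypothesis.

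It remains to treat the degenerate cases, where at least one leg of $T'$ has length $1$, i.e., some $c_i$ is adjacent to $c$ in $T'$. Stability of $\chi$ and the singleton property of $c_i$ then force $\chi^{-1}(c) \subseteq N(v_i)$, so the corresponding leg of the sought subdivision can be taken to be a single edge from $v_i$ to a vertex of color $c$. If all three legs have length $1$, then $\chi^{-1}(c) \subseteq N(v_1)\cap N(v_2)\cap N(v_3)$, and since $|\chi^{-1}(c)| \geq 7h \geq h$ we already get a $K_{3,h}$ \emph{subgraph}, a contradiction. If exactly one or two legs have length $1$, one replaces $\tilde T$ by the bipod, respectively the path, obtained from the remaining long legs of $T'$ (for which $2|V_{\leq 1}| + |V_{\geq 3}| = 4 \leq 7$), applies Lemma~\ref{la:find-disjoint-trees} exactly as above, and attaches $v_1,v_2,v_3$ --- using $\chi^{-1}(c) \subseteq N(v_i)$ for the short legs and the stability argument for the long ones --- to obtain a $K_{3,h}$ subdivision, again a contradiction. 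In every case the assumption that all $V_2$-classes have size at least $7h$ fails, so there is a vertex $u \in V_2$ with $|[u]_\chi| \leq 7h-1$.

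The step I expect to require the most care is the application of Lemma~\ref{la:find-disjoint-trees} to a \emph{subtree} $\tilde T$ rather than to a tree spanning all of $C$: one must make sure its hypotheses are met by the restriction of $\chi$ (the edges of $\tilde T$ are realized in $G$ simply because $\tilde T \subseteq H$), and then carefully verify internal disjointness of the $3h$ paths of the final subdivision --- which holds because the $S_j$ are pairwise vertex-disjoint, the legs of each tripod $S_j$ meet only in $r_j$, and $v_1,v_2,v_3$ avoid all of the $S_j$. The constant $7$ is exactly $2\cdot 3 + 1$, forced by the tripod shape of $\tilde T$ in the main case; the degenerate cases are genuinely easier and only need the weaker constant $4$.
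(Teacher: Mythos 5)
Your proof is correct and follows the same route as the paper, which for this lemma only gives a brief sketch: re-use the tree $T'$ from the proof of Lemma~\ref{la:extend-t-cr-bounded-scope}, apply Lemma~\ref{la:find-disjoint-trees}, and observe $2|V_{\leq 1}(T')|+|V_{\geq 3}(T')|=7$. Your modification of passing to the subtree $\tilde T$ (dropping the singleton leaves $c_1,c_2,c_3$) is a necessary correction to that sketch: if Lemma~\ref{la:find-disjoint-trees} were applied to $T'$ verbatim, the parameter $m=\min_c|\chi^{-1}(c)|$ would be $1$ because the colors $c_i$ are singletons, and the lemma would yield nothing. You recover the same constant $\ell=7$ for $\tilde T$ in the main (tripod) case and then glue $v_1,v_2,v_3$ in by hand via stability, which is exactly the right fix. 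The concern you flag — whether the hypotheses of Lemma~\ref{la:find-disjoint-trees} hold when restricting to $\tilde T$ — resolves cleanly: restrict $G$ to $G[\chi^{-1}(V(\tilde T))]$; since each $\chi^{-1}(c')$ with $c'\in V(\tilde T)$ is contained in this subgraph, the neighbourhood counts $|N(v)\cap\chi^{-1}(c')|$ are unchanged, so the restricted coloring remains CR-stable, and the edge hypothesis holds because $\tilde T\subseteq H$. Your degenerate-case analysis (legs of $T'$ of length $1$, reducing $\tilde T$ to a path or to a direct use of $\chi^{-1}(c)$) is also correct and closes gaps the paper's sketch leaves implicit.
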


In comparison with Lemma \ref{la:extend-t-cr-bounded-scope} the graph $G$ is only required to exclude $K_{3,h}$ as a topological subgraph instead of a minor.
Hence, Lemma \ref{la:extend-t-cr-bounded-scope-topological} covers a much larger class of graphs.
Observe that the bound on the size of the color class $[u]_\chi$ is slightly worse which however is not relevant for most algorithmic applications since the additional factor disappears in the $\CO$-notation.

The proof of Lemma \ref{la:extend-t-cr-bounded-scope-topological} is similar to the proof of Lemma \ref{la:extend-t-cr-bounded-scope}.
Recall that in the proof of Lemma~\ref{la:extend-t-cr-bounded-scope} we constructed a tree $T'$ with three leaves representing three vertices $v_1,v_2,v_3 \in V_1$.  
Instead of using $T'$ to construct a minor $K_{3,h}$ this tree can also be used to construct a topological subgraph $K_{3,h}$ building on Lemma \ref{la:find-disjoint-trees}.
Note that $2|V_{\leq 1}(T')|+|V_{\geq 3}(T')| \leq 7$ which gives the additional factor on the bound of $|[u]_\chi|$.

Following the same arguments as in Section \ref{subsec:genus} we obtain a polynomial-time Turing reduction from GI for graphs excluding $K_{3,h}$ as a topological subgraph to GI for $(7h-1)$-CR-bounded graphs.

\begin{theorem}
 The Graph Isomorphism Problem for graphs excluding $K_{3,h}$ as a topological subgraph can be solved in time $n^{\polylog{h}}$.
\end{theorem}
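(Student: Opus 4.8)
The plan is to mirror Section~\ref{subsec:genus} almost verbatim, replacing the minor relation by the topological-subgraph relation and Lemma~\ref{la:extend-t-cr-bounded-scope} by Lemma~\ref{la:extend-t-cr-bounded-scope-topological}. We may assume $h \geq 3$, so that $K_{3,h}$ is $3$-connected; the cases $h \leq 2$ are handled directly since excluding $K_{3,h}$ as a topological subgraph then forces a very restricted (e.g.\ bounded-treewidth) structure.

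First I would record the topological analogue of Corollary~\ref{cor:excluded-k3h-t-cr-bounded}: if $(G,\chi_V,\chi_E)$ is a $3$-connected, vertex- and arc-colored graph that excludes $K_{3,h}$ as a topological subgraph and $v_1,v_2,v_3 \in V(G)$ are distinct, then $\cl_{7h-1}^G(v_1,v_2,v_3) = V(G)$. The argument is the same as for Corollary~\ref{cor:excluded-k3h-t-cr-bounded}: individualize $v_1,v_2,v_3$, compute the $(7h-1)$-CR-stable coloring $\chi$, and if $\chi$ were not discrete, set $V_2 \coloneqq \{v \in V(G) \mid |[v]_\chi| > 1\}$ and $V_1 \coloneqq N_G(V_2)$; then $|V_1| \geq 3$ by $3$-connectivity, and Lemma~\ref{la:extend-t-cr-bounded-scope-topological} yields some $u \in V_2$ with $1 < |[u]_\chi| \leq 7h-1$, contradicting $(7h-1)$-CR-stability. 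Individualizing three vertices thus produces at most $n^3$ instances of GI for $(7h-1)$-CR-bounded graphs, each solvable in time $n^{\polylog{7h-1}} = n^{\polylog h}$ by Corollary~\ref{cor:isomorphism-t-cr-bounded}. Since the class at hand is closed under individualizing vertices and adding arc colors, this already gives an $n^{\polylog h}$ isomorphism test for $3$-connected (colored) graphs excluding $K_{3,h}$ as a topological subgraph.

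Next I would remove the $3$-connectivity assumption. The class of graphs containing no subdivision of a fixed graph is closed under subgraphs and under topological minors, so I can invoke the standard isomorphism-invariant decomposition of a graph into its $3$-connected components (cycles, bonds, and $3$-connected torsos), computable in polynomial time~\cite{hoptar72}, exactly as in the minor-closed case of Section~\ref{subsec:genus}. Each virtual edge of a torso gets a fresh arc color so the reduction is isomorphism-faithful, and the global isomorphism problem is solved by dynamic programming over the (isomorphism-invariant) decomposition tree: at each node one computes a representation of the set of color-preserving isomorphisms between the torsos, incorporating the data already obtained at the children along their separating cliques, at polynomial overhead. Cycle and bond torsos are trivial, and for a $3$-connected torso the previous paragraph applies---provided the torso itself still excludes $K_{3,h}$ as a topological subgraph.

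The only point where the topological setting genuinely differs from the minor setting of Section~\ref{subsec:genus}, and hence the step I expect to require the actual work, is exactly this last proviso. In the minor case one simply uses that torsos are minors of $G$. Here I would instead use that the virtual edges inside a single torso can be realized by internally vertex-disjoint paths through the corresponding subtrees of the decomposition; consequently a subdivision of $K_{3,h}$ in a torso that uses virtual edges lifts to a subdivision of $K_{3,h}$ in $G$, so every torso indeed excludes $K_{3,h}$ as a topological subgraph. ($3$-connectedness of $K_{3,h}$, i.e.\ $h \geq 3$, is what prevents a $K_{3,h}$-subdivision from being "split'' across the separating cliques.) Together with the bookkeeping of the dynamic program this completes the reduction to GI for $(7h-1)$-CR-bounded graphs, and Corollary~\ref{cor:isomorphism-t-cr-bounded} yields the running time $n^{\polylog h}$; the combinatorial heart has already been supplied by Lemma~\ref{la:extend-t-cr-bounded-scope-topological} via Lemma~\ref{la:find-disjoint-trees}.
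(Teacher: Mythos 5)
Your proposal is correct and follows essentially the same route as the paper, which derives the theorem by repeating the argument of Section~\ref{subsec:genus} with Lemma~\ref{la:extend-t-cr-bounded-scope-topological} in place of Lemma~\ref{la:extend-t-cr-bounded-scope}: individualize three vertices in each $3$-connected torso, apply Corollary~\ref{cor:isomorphism-t-cr-bounded} for $(7h-1)$-CR-bounded graphs, and combine via the standard decomposition into $3$-connected components. Your extra observation that torsos still exclude $K_{3,h}$ as a topological subgraph (since virtual edges are realized by internally disjoint paths) is a point the paper leaves implicit, and it is argued correctly.
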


This generalizes the previous algorithms for graphs of bounded degree as well as graphs of bounded genus (see Figure \ref{fig:reductions}).

Next, we wish to generalize the algorithms for graphs of bounded genus and graphs of bounded tree width.
More precisely, we aim for an algorithm that decides isomorphism of graphs that exclude an arbitrary $h$-vertex graph as a minor in time $n^{\polylog{h}}$.
Towards this end, first observe that a graph $G$ excludes an arbitrary $h$-vertex graph as a minor if and only if $G$ excludes $K_h$ as a minor.
The maximum $h$ such that $K_h$ is a minor of $G$ is also known as the \defword{Hadwiger number} of $G$.
Note that a graph of Hadwiger number $h$ excludes $K_{h+1}$ as a minor.
So in other words, the goal is to give an isomorphism test for graphs of Hadwiger number $h$ running in time $n^{\polylog{h}}$.

As a first basic tool we argue that, given a set $X \subseteq V(G)$, the $t$-closure of $X$ only stops to grow at a separator of small size.
Note that similar properties are implicitly proved for graphs of small genus and graphs of small tree width in the previous two subsections.

\begin{lemma}[\cite{GroheNW20}]
 \label{la:closure-small-separator}
 Let $t \geq 3h^3$.
 Let $G$ be a graph that excludes $K_h$ as a topological subgraph.
 Also let $X \subseteq V(G)$ and let $Z$ be the vertex set of a connected component of $G - \cl_t^G(X)$.
 Then $|N(Z)| < h$.
\end{lemma}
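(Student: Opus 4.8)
The plan is to argue by contradiction: assuming $|N(Z)|\geq h$, I would build a subdivision of $K_h$ inside $G$, contradicting the hypothesis that $G$ excludes $K_h$ as a topological subgraph. Write $\chi\coloneqq\tColRef{t}{G,\chi_V^*}$ for the $t$-CR-stable coloring obtained after individualizing $X$, so that $\cl_t^G(X)=\{v\mid |[v]_\chi|=1\}$. Two facts will be used throughout. First, by Definition~\ref{def:t-cr-bounded}, once the sequence $(\chi_i)_{i\geq 0}$ stabilizes the ``individualize small classes'' step can no longer act, so every non-singleton class of $\chi$ has size at least $t+1\geq 3h^3+1$; in particular every vertex of $Z$ lies in such a class. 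Second, $N(Z)\subseteq\cl_t^G(X)$ (a vertex adjacent to $Z$ but not in $\cl_t^G(X)$ would lie in the same component of $G-\cl_t^G(X)$ as $Z$), so each $W\in N(Z)$ is a singleton class; picking $z\in Z$ with $z\in N(W)$ and setting $d\coloneqq\chi(z)$, stability of $\chi$ forces $W$ to be adjacent to \emph{all} of $\chi^{-1}(d)$ (every vertex of color $d$ has the same number of neighbors of the singleton color $\chi(W)$, which is $1$ for $z$, hence $1$ for all of them). Thus every vertex of $N(Z)$ has a full large color class contained in its neighborhood.

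Next I would prepare a tree on colors to feed into Lemma~\ref{la:find-disjoint-trees}. Fix distinct $W_1,\dots,W_h\in N(Z)$ and colors $d_1,\dots,d_h$ as above, so that $d_j\in C^*\coloneqq\chi(Z)$ and $W_j$ is adjacent to all of $\chi^{-1}(d_j)$. Since $G[Z]$ is connected, the quotient graph on $C^*$ (with an edge $cc'$ whenever some $\chi^{-1}(c)$-vertex is adjacent to some $\chi^{-1}(c')$-vertex) is connected; let $S$ be a minimal subtree of a spanning tree of this quotient that contains $\{d_1,\dots,d_h\}$. Then every leaf of $S$ is one of the $d_j$, so $|V_{\leq 1}(S)|\leq h$, and using $|V_{\geq 3}(S)|<|V_{\leq 1}(S)|$ we get $\ell\coloneqq 2|V_{\leq 1}(S)|+|V_{\geq 3}(S)|\leq 3h$. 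Now I would apply Lemma~\ref{la:find-disjoint-trees} to the induced subgraph $\hat G\coloneqq G[\chi^{-1}(V(S))]$ equipped with the restricted coloring (whose image is exactly $V(S)$) and the tree $S$: the restricted coloring is still Color-Refinement--stable because restricting to a union of color classes only removes, uniformly across each class, the neighbors of color outside $V(S)$; the edges of $S$ are realized in $\hat G$ by construction; and every color class of $\hat G$ has size at least $t+1$. Hence I obtain at least $\lfloor(t+1)/\ell\rfloor\geq\lfloor(3h^3{+}1)/(3h)\rfloor\geq h^2\geq\binom{h}{2}$ pairwise vertex-disjoint subgraphs of $G$, each isomorphic to $S$ via $\chi$.

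Finally I would assemble the $K_h$-subdivision. Biject these $\binom{h}{2}$ disjoint trees with the edges of $K_h$, and write $H_{ij}$ for the one assigned to $\{i,j\}$. Inside the tree $H_{ij}$ there is a unique vertex $y_i$ of color $d_i$ and a unique vertex $y_j$ of color $d_j$ (the same vertex when $d_i=d_j$), joined by a unique path $Q_{ij}$ inside $H_{ij}$; prepending the edge $W_iy_i$ and appending $W_jy_j$ (which exist by the adjacency fact above) gives a $W_i$--$W_j$ path $P_{ij}$ all of whose internal vertices lie in $H_{ij}$. Since the $H_{ij}$ are pairwise vertex-disjoint and none of them contains any $W_k$ (all colors of $V(S)\subseteq C^*$ are non-singleton, whereas $\{W_k\}$ is a singleton class), the paths $P_{ij}$ are pairwise internally disjoint and meet $\{W_1,\dots,W_h\}$ only in their endpoints. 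Therefore $\{W_1,\dots,W_h\}$ together with the paths $P_{ij}$ forms a subdivision of $K_h$ in $G$, the desired contradiction.

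I expect everything after invoking Lemma~\ref{la:find-disjoint-trees} to be routine bookkeeping, so the one delicate point is the setup in the second paragraph: checking that passing to the induced subgraph $\hat G$ keeps the coloring Color-Refinement--stable and that its color set is exactly $V(S)$ (so that $S$ really is a spanning tree of colors, as Lemma~\ref{la:find-disjoint-trees} demands), that $S$ can be taken with at most $h$ leaves, and that the degenerate cases (coinciding colors $d_j$, $S$ a single vertex, $|N(Z)|$ exactly $h$, $h$ small) still leave enough disjoint trees to cover all $\binom{h}{2}$ edges of $K_h$.
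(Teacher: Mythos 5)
Your proposal is correct and follows essentially the same route as the paper, whose proof sketch likewise assumes $|N(Z)| \geq h$ and invokes Lemma~\ref{la:find-disjoint-trees} to obtain a large family of vertex-disjoint trees in which every vertex of an $h$-element subset of $N(Z)$ has a neighbor, assigning one tree per pair to get internally disjoint paths and hence a topological $K_h$. The details you supply (non-singleton classes have size $\geq t+1$, singleton neighbors are adjacent to entire classes by stability, stability is preserved on the union of full color classes, $\ell \leq 3h$, and $\lfloor (t+1)/(3h)\rfloor \geq \binom{h}{2}$) are all sound.
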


The strategy for the proof is very similar to the previous arguments building on Lemma \ref{la:find-disjoint-trees}.
We assume towards a contradiction that $|N(Z)| \geq h$ for some connected component $Z$ of $G - \cl_t^G(X)$ and construct a topological subgraph $K_h$.
Towards this end, it suffices to find vertex-disjoint paths between all pairs of vertices of a subset $Y \subseteq N(Z)$ of size at least $h$.
Using Lemma \ref{la:find-disjoint-trees} we can actually construct a large number of vertex-disjoint trees such that each $v \in Y$ has a neighbor in all of those trees.
In particular, this gives the desired set of vertex-disjoint paths.

Now, similar to the previous examples, the central step is to obtain a suitable isomorphism-invariant decomposition of the input graph so that each part of the decomposition is $t$-CR-bounded after individualizing a constant number of vertices.
For graphs of bounded genus as well as graphs of bounded tree width we could rely on existing decompositions from the literature.
However, for graphs excluding $K_h$ as a minor, such results do not seem to be known.
In particular, neither of the decompositions applied for graphs of bounded genus and graphs of bounded tree width are suitable for graph classes that only exclude $K_h$ as a minor.

The central idea to resolve this problem is to use the $t$-closure of a set itself as a way of defining a graph decomposition.
We have already established in the last lemma that the $t$-closure can only stop to grow at a separator of small size.
In particular, this enables us to bound the adhesion-width of the resulting tree decomposition (i.e., the maximum size of an adhesion set).
The second, more involved challenge is to obtain a decomposition which is isomorphism-invariant.
Recall that, in all previous examples, we individualized a constant number of vertices to ``initiate the growth of the $t$-closure''.
In order to still obtain an isomorphism-invariant decomposition we have to ensure that $\cl_t^G(v)$ is independent of the specific choice of $v$.
Towards this end, we want to argue that there is an isomorphism-invariant set $X \subseteq V(G)$ such that $\cl_t^G(v) = \cl_t^G(w)$ for all $v,w \in X$.

This basic idea can indeed be realized, but the concrete formulation turns out to be slightly more complicated.
More precisely, in order to incorporate information not ``visible'' to the Color Refinement algorithm, the $t$-closure is computed on an extended graph which allows us to explicitly encode certain structural information of the input graph.

\begin{lemma}[\cite{GroheNW20}]
 \label{la:initial-color}
 Let $t = \Omega(h^4)$.  There is a polynomial-time algorithm that,
 given a connected vertex-colored graph $G$, either correctly
 concludes that $G$ has a minor isomorphic to $K_h$ or computes a
 vertex-colored \footnote{In \cite[Theorem~IV.2]{GroheNW20}, this graph is
   pair-colored, i.e., a color is assigned to each pair of
   vertices. By adding gadgets for the pair colors such a graph can
   easily be transformed into a vertex-colored graph while preserving
   all required properties.}  graph $(G',\chi')$ and a set
 $\emptyset \neq X \subseteq V(G')$ such that
 \begin{enumerate}
  \item\label{item:initial-color-1} $X = \{v \in V(G') \mid \chi'(v) = c\}$ for some color $c$,
  \item\label{item:initial-color-2} $X \subseteq \cl_t^{(G',\chi')}(v)$ for every $v \in X$, and
  \item\label{item:initial-color-3} $X \subseteq V(G)$.
 \end{enumerate}
 Moreover, the output of the algorithm is isomorphism-invariant with respect to $G$.
\end{lemma}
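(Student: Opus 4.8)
The plan is to read off from $G$ a canonical colouring $\chi^\ast$ and then prove a dichotomy: either some colour class of $\chi^\ast$ is \emph{self-closing} in the sense required by the lemma, or $G$ contains $K_h$ as a topological subgraph (hence as a minor). Concretely, I would start from the Color Refinement stable colouring of $(G,\chi_V)$ and refine it by recording, for each vertex $v$, the isomorphism type of the $t$-CR-stable colouring obtained after individualizing $v$; iterating this refinement together with ordinary Color Refinement to a fixpoint yields an isomorphism-invariant colouring $\chi^\ast$ that is Color-Refinement-stable with respect to $G$ and on whose colour classes the relevant closure sizes are constant. Since $\chi^\ast$ is most naturally a pair-colouring, I would then, as in the footnote, attach an isomorphism-invariant gadget for each pair colour, obtaining a vertex-coloured graph $(G',\chi')$ with $V(G)\subseteq V(G')$, all gadget vertices getting fresh colours, and $G'[V(G)]=G$; closures in $(G',\chi')$ only grow relative to those in $(G,\chi_V)$, and every colour class $X$ of $\chi^\ast$ lies inside $V(G)$, so condition (iii) is automatic.

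Next I would record the trivial but crucial observation behind conditions (i) and (ii): if $C$ is a colour class of $\chi^\ast$ with $C\subseteq\cl_t^{(G',\chi')}(v)$ for all $v\in C$, then $C$ may be output as $X$; condition (i) holds by construction, (iii) was just noted, and (ii) is the hypothesis. Moreover monotonicity and idempotence of $\cl_t$ then force $\cl_t^{(G',\chi')}(v)=\cl_t^{(G',\chi')}(C)$ for every $v\in C$, so the closure of (any single vertex of) $X$ is a canonical set, which is exactly what the later decomposition needs. Among all self-closing colour classes one picks the one with the smallest canonical $\chi^\ast$-colour label. Hence the whole proof reduces to the claim: if $G$ excludes $K_h$ as a topological subgraph, then a self-closing colour class exists.

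I would prove this contrapositively. Assume every colour class $C$ of $\chi^\ast$ contains a vertex $v_C$ with $C\not\subseteq\cl_t^{(G',\chi')}(v_C)$. Fix a colour class $C^\ast$ minimizing $|\cl_t^{(G',\chi')}(v)|$ (constant on $C^\ast$), take $v^\ast\in C^\ast$ and $w\in C^\ast\setminus\cl_t^{(G',\chi')}(v^\ast)$. By Lemma~\ref{la:closure-small-separator} the component $Z$ of $G-\cl_t^{(G',\chi')}(v^\ast)$ containing $w$ has $|N(Z)|<h$, yet, since $w$ escaped the closure, $w$ sits in a Color-Refinement-stable colour class of size $>t$. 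Peeling along such components repeatedly -- each boundary has fewer than $h$ vertices -- I would produce $h$ ``branch vertices'' $y_1,\dots,y_h$ in pairwise distinct regions, a Color-Refinement-stable colour class $W$ of size $>t$ that is colour-adjacent to every $y_i$, and a spanning tree $T_W$ of the corresponding quotient graph all of whose edges are realized in $G$. Applying Lemma~\ref{la:find-disjoint-trees} to $T_W$ gives $\lfloor |W|/\ell\rfloor\ge\binom h2$ pairwise vertex-disjoint subgraphs of $G$ agreeing with $T_W$, where $\ell=2|V_{\le 1}(T_W)|+|V_{\ge 3}(T_W)|\le 3h$; routing each such connected subgraph (which meets the colours adjacent to $y_i$ and to $y_j$) into an internally vertex-disjoint $y_i$--$y_j$ path, the union of these paths is a subdivision of $K_h$ in $G$ -- a contradiction. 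This accounting -- $h$ nested peelings, each costing a $(<h)$-separator, against the factor $\ell\le 3h$ and the threshold $|W|>t$ -- is what forces $t=\Omega(h^4)$ rather than the $\Omega(h^3)$ of Lemma~\ref{la:closure-small-separator}.

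The main obstacle is this last step: a single colour class that ``fails to close'' only tells us, via Lemma~\ref{la:closure-small-separator}, that some component with a small boundary exists, and turning this into $h$ branch vertices genuinely spread across distinct regions so that the disjoint trees of Lemma~\ref{la:find-disjoint-trees} can connect all $\binom h2$ pairs requires a delicate inductive peeling along closures; it is exactly here that passing to the extended, pair-coloured graph $G'$ is needed, since it lets Color Refinement ``see'' the separators and linkage patterns used in the construction. Everything else is routine: each $\cl_t^{(G',\chi')}(v)$ is computable in polynomial time, $\chi^\ast$ stabilizes after polynomially many rounds, testing self-closure of a colour class is polynomial, and every step -- Color Refinement, the closure operator, the refinement defining $\chi^\ast$, the gadget construction, the choice of the smallest-labelled self-closing class, and the $K_h$ alternative -- is canonical, so the output depends only on the isomorphism type of $G$.
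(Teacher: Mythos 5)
The survey does not reproduce a proof of this lemma at all: it only states it, remarks that ``the proof is rather technical and builds on the $2$-dimensional Weisfeiler-Leman algorithm,'' and refers to \cite{GroheNW20}. So your proposal has to stand on its own, and it does not: the entire content of the lemma is the existence claim, namely that when $G$ excludes $K_h$ (as a minor, or in your contrapositive formulation as a topological subgraph) one can canonically produce a non-empty colour class $X$ with $X \subseteq \cl_t^{(G',\chi')}(v)$ for every $v \in X$. Your reduction of this to ``some colour class of $\chi^\ast$ is self-closing'' is fine as bookkeeping, but the dichotomy itself --- no self-closing class implies a topological $K_h$ --- is exactly the hard part, and your argument for it is only a gesture. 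The ``peeling'' that is supposed to yield $h$ branch vertices $y_1,\dots,y_h$ in distinct regions, a single stable colour class $W$ of size $>t$ that is colour-adjacent to \emph{all} of them, and a tree $T_W$ whose disjoint copies can be routed into internally vertex-disjoint $y_i$--$y_j$ paths for all $\binom h2$ pairs, is never constructed; you yourself flag it as ``the main obstacle.'' Note also that colour-adjacency of a tree to the colour of $y_i$ only forces adjacency to the vertex $y_i$ itself when $y_i$ forms a singleton colour class (this is precisely why the proof of Lemma~\ref{la:extend-t-cr-bounded-scope} insists on $|\chi^{-1}(c_i)|=1$), and you never establish that your branch vertices are singletons in the relevant colouring. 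The quantitative claim that this accounting forces $t=\Omega(h^4)$ is likewise asserted, not derived.

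There is a second, more conceptual concern. Your extended graph $(G',\chi')$ only encodes colours that are themselves defined from Colour Refinement plus individualization closures, i.e.\ information expressible by $1$-WL-type refinement. The actual construction in \cite{GroheNW20} passes to a pair-coloured graph whose colours come from the $2$-dimensional Weisfeiler-Leman algorithm precisely because Colour Refinement on $G$ alone cannot ``see'' the structural information (separators, linkage patterns) needed to make the closure of a single vertex canonical; you acknowledge that your peeling ``needs'' the extended graph to make Colour Refinement see these separators, but you give no mechanism by which your $\chi^\ast$ supplies that information, nor any argument that a self-closing class must exist for it. As it stands, the crucial implication could fail for your particular $(G',\chi')$ even when $G$ is $K_h$-minor-free, and nothing in the proposal rules this out. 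To repair the proposal you would need either to carry out the peeling construction in full (including the singleton-class and common-neighbourhood claims and the $\Omega(h^4)$ accounting), or to switch to pair colours derived from $2$-WL as in \cite{GroheNW20} and prove the existence statement for that colouring.
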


The proof of the lemma is rather technical and builds on the $2$-dimensional Weisfeiler-Leman algorithm.
For details we refer the reader to \cite{GroheNW20}.

Now, an algorithm for testing isomorphism of graphs excluding $K_h$ as a minor can proceed as follows.
Given a graph $G$, we first compute an isomorphism-invariant set $X \subseteq V(G)$ using Lemma \ref{la:initial-color} as well as its closure $D \coloneqq \cl_t^G(X)$.
Then $|N_G(Z)| < h$ for every connected component $Z$ of the graph $G - D$ by Lemma \ref{la:closure-small-separator}.
Actually, by proceeding recursively on all connected components of $G - D$ we obtain an isomorphism-invariant tree decomposition $(T,\beta)$ of adhesion-width at most $h-1$.
Moreover, for each bag, we can consider the corresponding extended graph from Lemma \ref{la:initial-color} to obtain a structure associated with the bag that is $t$-CR-bounded for $t = \Omega(h^4)$ (after individualizing a single vertex).
Hence, using similar techniques as for graphs of bounded tree width in the previous subsection, we obtain the following result.

\begin{theorem}[Grohe, Neuen and Wiebking~\cite{GroheNW20}]
 The Graph Isomorphism Problem for graphs excluding $K_h$ as a minor can be solved in time $n^{\polylog{h}}$.
\end{theorem}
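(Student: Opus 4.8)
The plan is to mimic the template used above for bounded genus and bounded tree width: construct an isomorphism-invariant tree decomposition of bounded adhesion width whose bags are $t$-CR-bounded after individualizing a single vertex, solve isomorphism on the bags with Corollary~\ref{cor:isomorphism-t-cr-bounded}, and glue the partial solutions together by dynamic programming along the decomposition. Throughout, fix $t = \Theta(h^4)$ chosen large enough that the hypotheses of both Lemma~\ref{la:closure-small-separator} (note $t\ge 3h^3$) and Lemma~\ref{la:initial-color} are satisfied, and observe that $\polylog{t} = \polylog{h}$. First I would reduce to connected input graphs in the standard way, and work in the group-theoretic framework, computing for each bag a representation of the relevant coset of isomorphisms rather than a yes/no answer. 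Applying the algorithm of Lemma~\ref{la:initial-color} to $G$ either certifies a $K_h$ minor (so $G$ lies outside the class under consideration) or returns an extended vertex-colored graph $(G',\chi')$ and a nonempty isomorphism-invariant color class $X\subseteq V(G)\subseteq V(G')$ with $X\subseteq \cl_t^{(G',\chi')}(v)$ for every $v\in X$. Because $X$ is a single color class and individualizing any one $v\in X$ already makes all of $X$ discrete, the set $D:=\cl_t^G(X)$ (computed in the extended graph) is independent of the chosen $v$ and hence isomorphism-invariant.

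Next I would turn $D$ into the root bag of the decomposition. Since $G$ excludes $K_h$ as a minor it in particular excludes $K_h$ as a topological subgraph, so Lemma~\ref{la:closure-small-separator} applies and every connected component $Z$ of $G-D$ satisfies $|N_G(Z)|<h$. Recursing on each pair $(G[Z\cup N_G(Z)],\, N_G(Z))$ — a rooted/colored instance in which the interface $N_G(Z)$, a set of fewer than $h$ vertices, is distinguished, and to which Lemma~\ref{la:initial-color} is applied again — produces an isomorphism-invariant tree decomposition $(T,\beta)$ of $G$ of adhesion width at most $h-1$. The extended graph attached to each bag is $t$-CR-bounded after individualizing a single vertex (namely any vertex of the invariant class $X$ produced for that bag), so Corollary~\ref{cor:isomorphism-t-cr-bounded} solves the associated rooted, colored bag-isomorphism problem in time $n\cdot n^{\polylog{t}} = n^{\polylog{h}}$, the extra factor $n$ accounting for the choice of the individualized vertex.

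Finally I would process $(T,\beta)$ from the leaves to the root, at each bag $\beta(s)$ combining the isomorphism/automorphism representations already computed for the subtrees rooted at the children of $s$ over the corresponding adhesion sets, exactly as in the bounded-tree-width argument. The obvious combination step, iterating over all of $\Sym(S)$ for an adhesion set $S$ of size $<h$, costs $2^{\CO(h\log h)}$ per bag, which is too slow; instead one invokes Wiebking's quasi-polynomial join procedure (the same one used for tree width), which performs the join in time $\ell^{\polylog{h}}$, where $\ell$ is the number of mutually isomorphic children. Multiplying the per-bag cost by the polynomially many bags yields the desired total running time $n^{\polylog{h}}$.

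The crux of the argument is the construction in the second paragraph: unlike for genus or tree width, there is no ready-made decomposition theorem for $K_h$-minor-free graphs, so one must use the $t$-closure operator itself to define the decomposition, and one must show it is genuinely isomorphism-invariant — this is where Lemma~\ref{la:initial-color}, whose proof relies on the $2$-dimensional Weisfeiler--Leman algorithm to single out an invariant starting color class, does the real work. The remaining difficulty is the efficiency of the dynamic programming, where the $2^{\CO(h\log h)}$ bottleneck of the naive join must be avoided using Wiebking's algorithm; once these two ingredients are granted, the rest is the routine bookkeeping already carried out for bounded tree width.
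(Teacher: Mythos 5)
Your proposal matches the paper's argument essentially step for step: an invariant starting set from Lemma~\ref{la:initial-color}, its $t$-closure with $t=\Theta(h^4)$, the small-separator bound of Lemma~\ref{la:closure-small-separator} yielding an isomorphism-invariant tree decomposition of adhesion width at most $h-1$ with $t$-CR-bounded bags, and dynamic programming along it with Wiebking's quasi-polynomial join to avoid the $2^{\CO(h\log h)}$ bottleneck. This is exactly the route the paper sketches (with details deferred to \cite{GroheNW20}), so no further comparison is needed.
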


From the algorithm described above we can also infer some information on the structure of automorphism groups of graphs excluding $K_h$ as a minor.

\begin{theorem}
 \label{thm:structure-aut}
 Let $G$ be a graph that excludes $K_h$ as a minor.
 Then there is an isomorphism-invariant tree decomposition $(T,\beta)$ of $G$ such that
 \begin{enumerate}
  \item the adhesion-width of $(T,\beta)$ is at most $h-1$, and
  \item for all $t \in V(T)$ there exists $v \in \beta(t)$ such that $(\Aut(G))_v[\beta(t)] \in \mgamma_d$ for $d = \CO(h^4)$.
 \end{enumerate}
\end{theorem}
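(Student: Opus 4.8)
The plan is to read off both properties from the isomorphism-invariant tree decomposition that drives the algorithm behind the preceding theorem, together with the fact (noted in the proof of the reduction from $t$-CR-bounded graphs to the Set-of-Strings Isomorphism Problem) that the automorphism group of any $s$-CR-bounded graph lies in $\mgamma_s$, plus closure of $\mgamma_s$ under subgroups and homomorphic images (Lemma~\ref{la:gamma-d-closure}).

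First I would fix $s = \CO(h^4)$ large enough that both Lemma~\ref{la:initial-color} (which requires parameter $\Omega(h^4)$) and Lemma~\ref{la:closure-small-separator} (which requires parameter $\geq 3h^3$) apply; in the end this $s$ will be the desired $d$. Since $G$ excludes $K_h$ as a minor, it in particular excludes $K_h$ as a topological subgraph, so both lemmas are available, and Lemma~\ref{la:initial-color} always outputs (rather than detecting a $K_h$-minor) an isomorphism-invariant vertex-colored graph $(G',\chi')$ together with a nonempty color class $X \subseteq V(G)$ satisfying $X \subseteq \cl_s^{(G',\chi')}(v)$ for every $v \in X$. Then build $(T,\beta)$ recursively: the root bag is $D \coloneqq \cl_s^{(G',\chi')}(X) \cap V(G)$; by Lemma~\ref{la:closure-small-separator} every connected component $Z$ of $G - D$ satisfies $|N_G(Z)| \leq h-1$; attach one child per component, obtained by recursing on $G[Z \cup N_G(Z)]$ with the vertices of $N_G(Z)$ individualized. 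Every adhesion set then equals some $N_G(Z)$ and hence has size at most $h-1$, which is property~1; and the whole construction is isomorphism-invariant because Lemma~\ref{la:initial-color}, the $s$-closure operator, and the decomposition of $G - D$ into connected components are, so $\Aut(G)$ acts on $(T,\beta)$ with $\beta(\sigma t) = \sigma(\beta(t))$ for all $\sigma \in \Aut(G)$.

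For property~2, fix a node $t \in V(T)$ and let $(G'_t,\chi'_t)$ and $X_t$ be the extended graph and color class produced by Lemma~\ref{la:initial-color} in the recursive call associated with $t$, so $\beta(t) = \cl_s^{(G'_t,\chi'_t)}(X_t) \cap V(G)$; pick any $v \in X_t$. Since $\cl_s$ is a monotone, idempotent closure operator and $X_t \subseteq \cl_s^{(G'_t,\chi'_t)}(v)$, we get $\cl_s^{(G'_t,\chi'_t)}(X_t) \subseteq \cl_s^{(G'_t,\chi'_t)}(v)$, so individualizing the single vertex $v$ already discretizes the entire bag inside the extended graph; equivalently, the colored graph obtained by restricting $(G'_t,\chi'_t)$ to $\cl_s^{(G'_t,\chi'_t)}(X_t)$ and individualizing $v$ is $s$-CR-bounded, hence its automorphism group lies in $\mgamma_s$. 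Next, $(\Aut(G))_v$ fixes the node $t$: the sets $X_r$, $r \in V(T)$, are pairwise disjoint (along the recursion they lie in nested or disjoint components of the graphs $G - D$), so from $v \in X_t$, $\sigma(v) = v$, and $\sigma(X_t) = X_{\sigma t}$ we conclude $\sigma t = t$. Thus each $\sigma \in (\Aut(G))_v$ stabilizes $\beta(t)$ setwise and, by isomorphism-invariance of the construction of $(G'_t,\chi'_t)$ from $G$ and $t$, restricts to an automorphism of the $s$-CR-bounded colored graph above that fixes $v$. Therefore $(\Aut(G))_v[\beta(t)]$ is a subgroup of a homomorphic image of an $\mgamma_s$-group, so it lies in $\mgamma_s$ by Lemma~\ref{la:gamma-d-closure}; taking $d \coloneqq s = \CO(h^4)$ completes the argument.

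The main obstacle is not this group-theoretic bookkeeping but the substance of Lemma~\ref{la:initial-color} and the correctness of the recursion: one must show that the recursion makes strict progress at each step, so that $(T,\beta)$ is a genuine finite tree decomposition, and one must verify that the extended-graph construction — which injects the $2$-dimensional Weisfeiler-Leman information that Color Refinement alone cannot see — is arranged so that a single individualized vertex of $X_t$ really discretizes the whole bag, and so that automorphisms of $G$ transfer faithfully to the bag structures. These are exactly the technical points carried out in \cite{GroheNW20}, from which the theorem is then obtained by reading off the stated properties.
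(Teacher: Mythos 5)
Your proposal follows essentially the same route as the paper, which states the theorem without a detailed proof, indicating only that it can be read off from the recursive construction of an isomorphism-invariant tree decomposition via Lemma~\ref{la:initial-color} and Lemma~\ref{la:closure-small-separator}. Your overall structure — closure-defined bags, adhesion bound from Lemma~\ref{la:closure-small-separator}, isomorphism-invariance from the canonicity of all ingredients, and the $\mgamma_d$-bound from $s$-CR-boundedness plus closure under subgroups and homomorphic images — matches this.

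There is, however, one step that would not survive scrutiny as written: the claim that the colored graph obtained by \emph{restricting} $(G'_t,\chi'_t)$ to $\cl_s^{(G'_t,\chi'_t)}(X_t)$ and individualizing $v$ is itself $s$-CR-bounded, asserted as ``equivalently.'' This equivalence is false in general. The fact that $s$-CR on the full extended graph discretizes $\cl_s(v)$ does not imply that $s$-CR on the induced subgraph on $\cl_s(X_t)$ discretizes it: the discretization may rely on information contributed by vertices \emph{outside} the closure. A small example: let $v$ be adjacent to $a$ and $b$, let $a$ have three pendant leaves and $b$ two pendant leaves, and take $s=2$. In the full graph, Color Refinement separates $a$ from $b$ by degree, so $\cl_2(v)\supseteq\{v,a,b\}$; but in the induced subgraph on $\{v,a,b\}$, with $v$ individualized, $a$ and $b$ remain twins and the restriction is not $2$-CR-bounded. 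So ``$v$ discretizes the bag inside the extended graph'' and ``the restriction is $s$-CR-bounded'' are genuinely different statements. The conclusion you want, $(\Aut(G))_v[\beta(t)]\in\mgamma_s$, still holds, but the cleanest derivation avoids the restriction: run $s$-CR on the unrestricted extended graph with $v$ individualized, and, as in the proof sketch that reduces isomorphism of $t$-CR-bounded graphs to the Set-of-Strings problem, track $\mgamma_s$-groups $\Gamma_i\leq\Sym(\FP_i)$ with $(\Aut(G'_t,\chi'_t))[\FP_i]\leq\Gamma_i$ along the refinement sequence; at stabilization, $\beta(t)$ is a union of singleton classes, so the induced action on $\beta(t)$ is a homomorphic image of a subgroup of $\Gamma_{i_\infty}\in\mgamma_s$, and $(\Aut(G))_v$ maps into $\Aut(G'_t,\chi'_t)$ by isomorphism-invariance of the construction. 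With that repair, the rest of your argument stands.
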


\section{Concluding Remarks}

We survey recent progress on the Graph Isomorphism Problem. Our focus is on
Babai's quasi-polynomial time algorithm and a series of subsequent
quasi-polynomial parameterized algorithms. In particular, we
highlight the power of a new technique centered around the notion of
$t$-CR-boundedness. Actually, upon completion of this survey article, the second author
showed that these techniques can be pushed even further obtaining an isomorphism test running in
time $n^{\polylog{h}}$ for graphs only excluding a graph of order $h$ as a topological subgraph \cite{Neuen20a}.
More precisely, building on a very lengthy and technical analysis, it is possible to generalize
Lemma \ref{la:initial-color} to graphs only excluding $K_h$ as a topological subgraph.
Following the same strategy outlined above, this implies a faster isomorphism test graphs excluding $K_h$ as a topological subgraph.

Research on the isomorphism problem is still very active, and
interesting questions remain open. Of course, the most important
question remains whether isomorphism of graphs can be tested in polynomial time.
Let us also mention a few more specific questions that seem to be within reach.
\begin{enumerate}
 \item[1.] Can isomorphism of hypergraphs be tested in time $n^{\polylog{n}}m^{\CO(1)}$
  where $n$ denotes the number of vertices and $m$ the number of hyperedges of the input?
 \item[2.] Can the techniques based on $t$-CR-boundedness be exploited to design faster isomorphism tests for dense graphs?
  For example, is there an isomorphism algorithm running in time $n^{\polylog{k}}$
  for graphs of rank width at most $k$?
 \item[3.] Is GI parameterized by the Hadwiger number fixed-parameter
  tractable, that is, is there an isomorphism algorithm running in
  time $f(k)n^{O(1)}$ for graphs excluding a graph of order $k$ as a
  minor, for some arbitrary function $f$? The same question can be
  asked for topological subgraphs.
 \item[4.] Is GI parameterized by maximum degree number fixed-parameter
  tractable?
\end{enumerate}
The fourth question and the third for topological subgraphs seem to
be hardest. The question whether GI parameterized by maximum degree is
fixed parameter tractable has already been raised more than twenty
years ago in Downey and Fellow's monograph on parameterized complexity
\cite{dowfel99}.

Other interesting research directions include the Group Isomorphism Problem (see, e.g., \cite{BabaiCQ12,BrachterS20,BrooksbankGLQW19,Rosenbaum20}) and the Tournament Isomorphism Problem (see, e.g., \cite{BabaiL83,Schweitzer17}).

\bibliographystyle{amsplain}
\bibliography{literature}

\myaddress

\end{document}